\DeclarePairedDelimiter\norm{\lVert}{\rVert}%
\DeclareMathSymbol{\R}{\mathord}{AMSb}{"52}
\DeclareMathOperator*{\argmax}{arg\,max}
\providecommand{\GS}{\ensuremath{G_\mathcal{S}}\xspace}
\providecommand{\GSPrime}{\ensuremath{G_\mathcal{S'}}\xspace}
\providecommand{\Adj}{\ensuremath{\bm{A}}\xspace}
\providecommand{\AdjS}{\ensuremath{\bm{A}_\mathcal{S}}\xspace}
\providecommand{\TildeAdj}{\ensuremath{\tilde{\bm{A}}}\xspace}
\providecommand{\TildeAdjS}{\ensuremath{\tilde{\bm{A}}_\mathcal{S}}\xspace}
\providecommand{\TildeD}{\ensuremath{\tilde{\bm{D}}}\xspace}
\providecommand{\TildeL}{\ensuremath{\tilde{\bm{L}}}\xspace}
\providecommand{\PerturG}{\ensuremath{\tilde{G}}\xspace}
\providecommand{\PerturGS}{\ensuremath{\tilde{G}_\mathcal{S}}\xspace}
\providecommand{\Perturb}{\ensuremath{\bm{\Delta}}\xspace}
\providecommand{\SET}[1]{\ensuremath{\{ #1 \}}\xspace}
\providecommand{\TildeAdjEigvec}{\ensuremath{\tilde{\bm{v}}_1}\xspace}
\providecommand{\SPrime}{\ensuremath{\mathcal{S'}}\xspace}
\providecommand{\TildeDegDist}{\ensuremath{\tilde{\bm{d}}}\xspace}
\providecommand{\TildeDegDistS}{\ensuremath{\tilde{\bm{d}}}_\mathcal{S}\xspace}
\providecommand{\DegDist}{\ensuremath{\bm{d}}\xspace}
\DeclareMathOperator{\Tr}{Tr}
\providecommand{\ModelName}{POTION }
\providecommand{\AlgoName}{POTION-ALG }
\begin{document}

\title{\Large \ModelName: Optimizing Graph Structure for Targeted Diffusion}

\author[1]{Sixie Yu \thanks{The first two authors contributed equally to the paper.} \thanks{sixie.yu@wustl.edu.}}
\author[2]{Leo Torres\thanks{leo@leotrs.com}}
\author[3]{Scott Alfeld\thanks{salfeld@amherst.edu}}
\author[2]{Tina Eliassi-Rad \thanks{t.eliassirad@northeastern.edu}}
\author[1]{Yevgeniy Vorobeychik \thanks{yvorobeychik@wustl.edu}}
\affil[1]{Washington University in St. Louis}
\affil[2]{Northeastern University}
\affil[3]{Amherst College}

\date{}

\maketitle

% Copyright Statement
% When submitting your final paper to a SIAM proceedings, it is requested that you include 
% the appropriate copyright in the footer of the paper.  The copyright added should be 
% consistent with the copyright selected on the copyright form submitted with the paper.
% Please note that "20XX" should be changed to the year of the meeting.

% Default Copyright Statement
\fancyfoot[R]{\scriptsize{Copyright \textcopyright\ 2021 by SIAM\\
Unauthorized reproduction of this article is prohibited}}

\pagenumbering{arabic}

\begin{abstract}
The problem of diffusion control on networks has been extensively studied, with applications ranging from marketing to controlling infectious disease. 
However, in many applications, such as cybersecurity, an attacker may want to attack a \emph{targeted} subgraph of a network, while limiting the impact on the rest of the network in order to remain undetected.
We present a model \ModelName in which the principal aim is to optimize graph structure to achieve such targeted attacks.
We propose an algorithm \AlgoName for solving the model at scale, using a gradient-based approach that leverages Rayleigh quotients and pseudospectrum theory.
In addition, we present a condition for certifying that a targeted subgraph is immune to such attacks. 
Finally, we demonstrate the effectiveness of our approach through experiments on real and synthetic networks.
\end{abstract}

\section{Introduction}

Many diverse phenomena that propagate through a network, such as epidemic spread, cascading failures, and \leodelete{networks of} chemical reactions, can be modeled by network diffusion models~\cite{bailey1975mathematical,chakrabarti2008epidemic,motter2002cascade,yang2017small,Vu19}.
The problem of controlling diffusion has, as a result, received \leoreplace{prominent}{much} attention in the literature, with primary focus on two mechanisms for control: the choice of initial nodes to start the spread~\cite{kempe2003maximizing,chen2009efficient,zhang2016data}, and the modification of network structure~\cite{le2015met,tong2012gelling,yu2019removing,torres2020node}.
To date, most work on diffusion control (either promotion or inhibition) has considered diffusion over the entire network.
However, in many problems, the focus is instead on diffusion that is \emph{targeted} to a particular subgraph of the network.
For example, in cybersecurity, diffusion commonly represents malware spread, but malware attacks are often targeted at particular subsets of critical devices~\cite{Haghtalab17}, which should be accounted for \syreplace{in vulnerability analysis}{when modeling attacking behavior}.
Congestion cascades of ground traffic or flight networks are other examples, where the goal of resilience may be to ensure that cascades\leodelete{, should they occur,} concentrate on a subset of high-capacity nodes that can handle them, limiting the impact on the rest of the network~\cite{fleurquin2013systemic,estrada2020hubs}.
In another domain, medical treatments for certain diseases such as cancer may leverage a molecular signaling network, with the goal of targeting just the pathogenic portion of it, while limiting the deleterious effects on the rest~\cite{Wang19}.

We study the problem of targeted diffusion in which an attacker\footnote{The attacker is the agent who initiates diffusion.} can modify the graph structure $G=(\mathcal{V}, \mathcal{E})$ to achieve two goals: 1) maximize the diffusion spread to a target subgraph $\GS$, and 2) minimize the impact on the remaining graph $G \setminus \GS$.
We capture the first goal by maximizing a utility function that incorporates spectral information of the adjacency matrix of $G$, specifically its largest (in \syreplace{module}{magnitude}) eigenvalue, eigenvector centrality, and the normalized cut of the target subgraph.
The second goal is achieved by limiting the modifications made outside of the target subgraph.
% The second goal is achieved explicitly by limiting the modifications made outside of the target subgraph.
% and implicitly through the eigenvector centrality term.
We present a scalable algorithmic framework for solving this problem. Our framework leverages a combination of gradient ascent with the use of Rayleigh quotients and pseudospectrum theory, which yields differentiable approximations of our objective and allows us to avoid projection steps that would otherwise be costly and imprecise.
Moreover, we derive a condition that enables us to certify if a network is robust against a broad class of targeted diffusion attacks.
% attacks.
Finally, we demonstrate the effectiveness of our approach through extensive experiments.

In summary, our contributions are:
\begin{enumerate}
    \item We propose \ModelName (oPtimizing graph structures fOr Targeted diffusION): a model for targeted diffusion attack by optimizing graph structures.
    \item We present \AlgoName: an efficient algorithm to optimize \ModelName by leveraging Rayleigh quotient and pseudospectrum theory.
    \item We describe a condition for certifying that a targeted subgraph is immune to such attacks.
    \item We demonstrate the effectiveness and efficiency of \ModelName and \AlgoName on synthetic and real-world networks; and against baseline and competing methods.\footnote{The code to replicate the experimental results is at \url{https://github.com/marsplus/POTION}.}
\end{enumerate}

% \textcolor{red}{Tina wrote: In data mining papers, the Introduction Section commonly ends with a list of contributions and the outline of the paper. Also we NEED a name for our model and a name for our proposed algorithm/approach.}

%For example, consider the graph $G$ where each node is an airport and each (weighted) edge represents the flow of passengers from one airport to another.
%In this setting, spreading dynamics have been used to study the spread of flight delays as a cascading failure diffusing through this system \cite{fleurquin2013systemic,estrada2020hubs}.
%Our algorithm can be applied in this setting to discover the optimal structure of the airport network that guarantees that the cascading failures caused by flight delays are routed to a specific subset of airports rather than the whole network.
%For example, the targeted subset of airports could be chosen to be those airports that are better equipped to handle flight delays.
%Another example concerns a cybsersecurity setting, where $G$ represents the internal network of computers inside a company or agency, and the edges between them record the flow of information (e.g. network packets, or email communications).
%In this case, we imagine a malicious agent who is adversarially trying to infect with malware a particular subset $\mathcal{S}$ of the network without being detected.

\section{Related Work} \label{sec:related}
Various dynamical processes can be modeled as diffusion dynamics on networks, including the spread of infectious diseases~\cite{bailey1975mathematical,chakrabarti2008epidemic}, cascading failures in infrastructure networks~\cite{motter2002cascade,yang2017small}, and information spread (e.g., rumors, fake news) on social networks~\cite{leskovec2007patterns,leskovec2009meme}.
One line of research assesses the impact of cascading failures.
\citet{yang2017small} simulated cascading failures to quantify the vulnerability of the power grid in North America.
\citet{fleurquin2013systemic} studied the impact of flight delays as a cascading failure diffusing through the network.
\citet{motter2002cascade} investigated the cascading failures on a network due to the malfunction of a single node. 
%These studies share the conclusion that complex networks are vulnerable in the sense  that the malfunction of a small portion of the network can initiate large-scale cascading failures.
%This points to the importance of studying \emph{targeted} diffusion for the purpose of evaluating the vulnerability of the targeted sub-network as well as investigating the conditions under which a dynamical process can be confined within the targeted subgraph.
Another line of research concerns diffusion control, for example, 
selecting a set of nodes such that if the diffusion originated from them, it reaches as many nodes as possible~\cite{kempe2003maximizing,chen2009efficient,zhang2016data}; or modifying network structures to increase or limit some diffusion~\cite{tong2012gelling,preciado2013optimal}.
However, these lines of research do not differentiate between targeted and non-targeted nodes.
\citet{ho2015control} studied targeted diffusion controlled by changing nodal status. 
\syreplace{We focus on optimizing underlying network structures.}{We focus on the problem where an attacker manipulates underlying network structures in order to achieve targeted diffusion.} 

Another relevant research thread is network design, which is the problem of modifying network structure to induce certain desirable outcomes. Some prior work
\cite{tong2012gelling,yu2019removing,tong2010vulnerability} considered the containment of spreading dynamics by adding or removing nodes or edges from the network, while others \cite{van2011decreasing,saha2015approximation,le2015met,ChenTPTEFC16,torres2020node} considered limiting the spread of infectious disease by minimizing the largest eigenvalue of the network. 
\citet{kempe2020inducing} studied modifying network structure to induce certain outcomes from a game-theoretic perspective, but they did not consider diffusion dynamics.
Others have studied the problem of manipulating node centrality measures (e.g., eigenvector or PageRank centrality) \cite{avrachenkov2006effect,amelkin2019fighting} or node similarity measures (e.g., Katz similarity) \cite{zhou2019attacking} through edge perturbation.
All of these prior efforts %are optimizing certain objective by modifying network structure but they
focus on the impact either at the network level or at the node-level properties, while our focus is on the impact of diffusion dynamics on a targeted subgraph of the network.

\section{\ModelName: Proposed Model}\label{sec:model}

We  present a model for targeted diffusion through graph structure optimization. We refer to the agent who initiates diffusion as \emph{the attacker}. 
We use cybersecurity as a running example. Here the attacker \sydelete{(e.g., an attacker or an IT professional performing vulnerability assessment)} initiates the diffusion (e.g., the spread of malware) on a network of computers.
We define the impact of the diffusion as the number of infected nodes (e.g., compromised with malware). 
The attacker has two objectives: 1) she wishes to maximize the impact of the diffusion on a targeted set of nodes (e.g., computing nodes with access to critical assets), and 2) to limit the impact on non-targeted nodes to ensure stealth~\cite{Haghtalab17}.

\syreplace{Let $G=(\mathcal{V}, \mathcal{E})$ be a connected, undirected, possibly weighted, graph with no self-loops.}
{Let $G=(\mathcal{V}, \mathcal{E})$ be a connected, weighted or unweighted,  undirected graph with no self-loops.}
Let $n=|\mathcal{V}|$ be the number of nodes in $G$ and $\Adj$ be its adjacency matrix.
Throughout this paper, the eigenvalues of $\Adj$ are ranked in descending order $\lambda_1(\Adj) \ge \cdots \ge \lambda_n(\Adj)$.
Suppose the attacker targets a subgraph $\GS$ where $\mathcal{S} \subseteq \mathcal{V}$ is the node set of $\GS$.
Let $\mathcal{S'}= \mathcal{V} \setminus \mathcal{S}$, and its induced subgraph $\GSPrime$. Throughout the paper we assume $\GS$ is connected, and denote its
 adjacency matrix by $\AdjS$.
To achieve her objectives, the attacker modifies the structure of $G$.
The modified graph and targeted subgraph are represented by $\PerturG$ and $\PerturGS$, respectively.
Formally, the attacker's action is to add a perturbation 
$\bm{\Delta} \in \mathbb{R}^{n \times n}$ to $\Adj$, which results in the perturbed adjacency matrix $\TildeAdj = \Adj + \bm{\Delta}$.
The adjacency matrix of $\PerturGS$ is denoted by $\TildeAdjS$.

\subsection{Diffusion Dynamics} 
The status of a node is modeled by the well-known SIS (Susceptible-Infected-Susceptible) diffusion dynamics, where it alternates between ``infected'' and ``susceptible''.
\footnote{Due to brevity, a discussion on generalization of our approach to other diffusion dynamics is at \url{https://arxiv.org/abs/2008.05589}.} 
Due to the malware spread by infected neighbors, a susceptible node becomes infected with probability $\beta$.
An infected node becomes susceptible again (e.g., malware is removed) with probability $\delta$.
Following~\citeauthor{chakrabarti2008epidemic}\cite{chakrabarti2008epidemic}, this process is modeled by a nonlinear dynamical system.
Let $\pi_i$ be the probability of node $i$ becoming infected (e.g., compromised with malware) in the steady state of this dynamical system, with $\bm{\pi}$ the vector of these probabilities.
A key result in~\cite{chakrabarti2008epidemic} is that when $\lambda_1(\Adj) < \delta / \beta$ the system converges to the steady state $\bm{\pi}=\bm{0}$, which implies that the diffusion process quickly dies out.  However, when $\lambda_1(\Adj) \ge \delta / \beta$ the system converges to another steady state $\bm{\pi} \ne \bm{0}$.
We leverage this connection between graph structure, dynamical model of epidemic spread, and the epidemic threshold, in constructing our \emph{threat model}, as discussed next.

\subsection{Threat Model} 
\label{S:threatmodel}
\noindent{\bf Maximizing the Impact on $\GS$: }
To maximize the impact of diffusion on $\GS$, the attacker has two goals: 1) ensure that epidemics starting in $\GS$ spread rather than die out, and 2) ensure that epidemics starting outside $\GS$ are likely to reach it.
We capture the first goal by maximizing the largest (in modulus) eigenvalue of $\GS$, $\lambda_1(\TildeAdjS)$, which corresponds to the epidemic threshold of the targeted subgraph.\footnote{If $\GS$ is not connected, we may replace $\lambda_1(\TildeAdjS)$ by the largest eigenvalue of the largest connected component of $\GS$.} The second goal is captured 
by maximizing the \emph{normalized cut} of $\GS$, $\phi(\mathcal{S})$, where $\mathcal{S}$ is the set of nodes in $\GS$ and $\SPrime$ are the nodes in the remaining graph.
The normalized cut is formally defined as follows:
\vspace{-0.1in}
\begin{equation}\label{eq:norm-cut}
    \small
    \phi(\mathcal{S}) = \text{cut}(\mathcal{S}, \SPrime)\left( \frac{1}{\text{vol}(\mathcal{S})} + \frac{1}{\text{vol}(\SPrime)} \right),
\end{equation} 
where $\text{cut}(\mathcal{S}, \SPrime)$ is the sum of the weights on the edges across $\mathcal{S}$ and $\SPrime$ (unit weights for unweighted graphs), and $\text{vol}(\mathcal{S})$ (resp. $\text{vol}(\SPrime)$) is the sum of degrees of the nodes in $\mathcal{S}$ (resp. $\SPrime$).
The formal rationale for using the normalized cut is based on Meila et al.~\cite{meila2001learning}, which showed that increasing $\phi(\mathcal{S})$ increases the probability that a random walker transitions from $\SPrime$ to $\mathcal{S}$, if we assume that $\GS$ is smaller than $\GSPrime$.

\noindent{\bf Limiting the Impact on $\GSPrime$: }
%\label{sec:limit_impact}
Another important objective of the attacker is to limit the impact on $\GSPrime$, the non-targeted part of the graph.
%, defined as 
We capture this goal in two different ways. First, by limiting the likelihood of the epidemic spreading to $\GSPrime$, which we define as minimizing the impact $I(\GSPrime)= \sum_{i \in \SPrime}^{}{\pi_i}$. Second, by limiting the impact on the spectrum of $\Adj$.

We now demonstrate that minimizing $I(\GSPrime)$ is approximately equivalent to minimizing the eigenvector centrality of $\SPrime$. Let $\bm{P}^t$ be the global configuration of the graph at time step $t$, where $P^t_i$ is the probability that node $i$ is infected (e.g., compromised with malware).
Following~\citeauthor{van2008virus}~\cite{van2008virus} \leodelete{(see Section IV), for an arbitrary node $i$}, ignoring higher-order terms \leodelete{involving $P^{t}_i$} and taking the time step to be infinitesimally small,
the dynamics of $P^{t}_i$ is modeled as the following:
\vspace{-0.1in}
\begin{equation}\label{eq:trans_P_i}
    \small
    \frac{d P^{t}_i}{d t} = \sum_{j \in \mathcal{V}}^{}{\beta \tilde{A}_{ij} P^{t}_j} - \delta P^{t}_i.
\end{equation}

Here, we can think of the two terms on the right side as two competing forces. 
The first term is the force contributed by the infected neighbors of node $i$ (which increases $P^t_i$), while the second term is the force due to $i$'s self recovery (which decreases $P^t_i$).
Rewriting in matrix notation yields:
% \vspace{-0.1in}
\begin{equation}\label{eq:proxy_chain}
\small
\frac{d \bm{P}^t}{d t} = \big[ \beta \TildeAdj - \delta \bm{I} \big] \bm{P}^t,
\end{equation}

which gives a linear approximation to the non-linear dynamical system proposed in~\cite{chakrabarti2008epidemic}.
The steady state $\bm{\pi}$ must satisfy $\big[ \beta \TildeAdj - \delta \bm{I} \big]\bm{\pi} = \bm{0}$,  which is equivalent to $\TildeAdj \bm{\pi} = (\delta / \beta) \bm{\pi}$.
Suppose $\lambda_1(\TildeAdj)=\delta / \beta$,  and $\bm{\pi}$ is the corresponding eigenvector. 
Let $\TildeAdjEigvec$ be the unit eigenvector associated with $\lambda_1(\TildeAdj)$.
Let $\sigma(\mathcal{S})=\sum_{j \in \mathcal{S}}^{}{\tilde{v}_1[j]}$ be the eigenvector centrality of $\mathcal{S}$.
Noting that $\bm{\pi}$ may differ from $\TildeAdjEigvec$ by up to a multiplicative constant $c$,
the impact on $\GSPrime$ can be approximated as:
% \vspace{-0.1in}
\begin{equation}\label{eq:eig_cent}
\small
I(\GSPrime) = \sum_{j \in \SPrime}^{}{\pi_j} \approx c \sum_{j \in \SPrime}^{}{\tilde{v}_1[j]} = c\big(1 - \sigma(\mathcal{S}) \big),
\end{equation}

where the last equality is because $\mathcal{S}$ and $\SPrime$ are disjoint and $\TildeAdjEigvec$ is an unit vector.
Thus, minimizing the impact on $\GSPrime$ is approximately equivalent to maximizing the eigenvector centrality of $\mathcal{S}$.

Recall that to have an epidemic spread, one needs $\lambda_1(\TildeAdj) \ge \delta / \beta$. Here, we assumed $\lambda_1(\TildeAdj) = \delta / \beta$. In Section~\ref{sec:exp}, we demonstrate that our analysis yields an approach that is effective even when this assumption fails to hold (i.e., when $\lambda_1(\TildeAdj) > \delta / \beta$).

Now we focus on limiting the impact on the spectrum of $\Adj$.
\footnote{In  cybersecurity, there are natural interpretations of an attack's stealth. For further details, see the extended version at \url{https://arxiv.org/abs/2008.05589}.} 
Let $\epsilon > 0$ be the attacker's budget.
Formally, this notion is captured through the following constraints:
    \vspace{-0.2in}
    \begin{equation}\label{eq:budget}
    \small
    | \lambda_i (\tilde{\Adj} ) - \lambda_i ( \Adj )  |  \le \epsilon , \, i=1,\ldots,n.
    \end{equation}

In summary, the principal aims to (\emph{i}) maximize the impact on $\GS$ through maximizing $\lambda_1(\TildeAdjS)$ while (\emph{ii}) limiting the impact on $\GSPrime$ by maximizing the eigenvector centrality $\sigma(\mathcal{S})$, and satisfying Eq.~\eqref{eq:budget}. Formally, the principal aims to solve the following optimization problem:
\begin{equation}\label{eq:model}
\small
\begin{aligned}
& \max_{\tilde{\Adj}} & & \alpha_1 \lambda_1(\tilde{\Adj}_\mathcal{S}) + \alpha_2 \sigma(\mathcal{S})  + \alpha_3 \phi(\mathcal{S})  \\
&s.t.     &    & \tilde{\Adj} \in  \mathcal{P} = \Bigg\{ \tilde{\Adj}\, \Bigg| \, 
\begin{aligned}
%& \norm{\tilde{\Adj}  - \Adj }_2 \le \epsilon  \\
&   | \lambda_i (\tilde{\Adj} ) - \lambda_i ( \Adj )  |  \le \epsilon , \, i=1,\ldots,n,\\
& \TildeAdj = \TildeAdj^\top, \TildeAdj_{ii}=0,\,\forall i=1,\ldots,n
\end{aligned}
 \Bigg\},
\end{aligned}
\end{equation}
\vspace{-0.2in}

where the relative importance of the terms is balanced by the nonnegative constants $\alpha_1, \alpha_2$, $\alpha_3$, and the restrictions $\TildeAdj = \TildeAdj^\top$ and $\TildeAdj_{ii}=0,\,\forall i=1,\ldots,n$ ensure that $\TildeAdj$ is a valid adjacency matrix.

\section{\AlgoName: Proposed Algorithm}\label{sec:algo}
To solve the optimization problem in Eq.~\eqref{eq:model}, a natural approach would be to use a form of projected gradient ascent.
There are, however, two major hurdles to this basic approach: 1) the objective function involves terms that do not have an explicit functional representation in the decision variables, and 2) the projection step is quite expensive, as it involves projecting into a spectral norm ball, which entails an expensive SVD operation~\cite{lefkimmiatis2013hessian}.
We address these challenges in Algorithm~\ref{algo:grad_ascent}, which is our gradient-based solution to the attacker's optimization problem as described in Eq.~\eqref{eq:model}.

    \begin{algorithm}[h]
    \small
    \caption{\syreplace{Gradient Ascent Algorithm}{\AlgoName} }\label{algo:grad_ascent}
    \begin{algorithmic}[1]
    \State   \textbf{Input}: $\Adj, \epsilon, \{ \eta_i \}_{i=1}$ \Comment{$\{ \eta_i \}_{i=1}$ is a schedule of step sizes} 
    \State   \textbf{Initialize}: $i =1, \TildeAdj_1 = \Adj, B_1=0$ \Comment{$B_i$: the amount of budget used just before step $i$} 
            \While{True} 
                \State Set $\bm{\Delta}_i$ to the gradient of $\alpha_1 \lambda_1(\tilde{\Adj}_\mathcal{S}) + \alpha_2 \sigma(\mathcal{S}) + \alpha_3 \phi(\mathcal{S})$ w.r.t. to ${\TildeAdj_i}$ \label{grad-step} %\Comment{$\bm{\Delta}_i$ is symmetric}
                \State Set the diagonal entries of $\bm{\Delta}_i$ to zeros
                \If{$\norm{\bm{\Delta}_i} = \bm{0}$} \Comment{a local optimum is found}
                  \State return $\TildeAdj_i$ 
                \EndIf
                \If{$B_i + || \eta_i \bm{\Delta}_i ||_2 \le \epsilon$}\label{look-ahead} \Comment{one-step look ahead}
                  \State $\TildeAdj_{i+1} = \TildeAdj_{i} + \eta_i \bm{\Delta}_i$, $B_{i+1} = B_{i} + \norm{\eta_i \bm{\Delta}_i}_2$\label{used_budget}, $i = i + 1$
         \Else
            \State return $\TildeAdj_i$
                \EndIf
            \EndWhile
    \end{algorithmic}
    \end{algorithm}

\leoreplace{The first}{A} key step of Algorithm~\ref{algo:grad_ascent} is line~\ref{grad-step}, where we compute the gradient of the attacker's utility function with respect to $\TildeAdj$.
This gradient involves terms that do not have an explicit functional form in terms of the decision variable, and we deal with each of these in turn.

First, consider the gradient of the normalized cut $\phi(\mathcal{S})$ w.r.t. $\TildeAdj$.
Let $\bm{x}_\mathcal{S}$ be the characteristic vector of $\mathcal{S}$, that is $x_{\mathcal{S}}[i]=1$ iff $i \in \mathcal{S}$.
Let $\tilde{\bm{D}}$ be the diagonal degree matrix $\tilde{\bm{D}}_{ii} = \sum_j \TildeAdj_{ij}$, and let $\TildeL=\TildeAdj - \TildeD$ be the Laplacian matrix.
Using $\tilde{\bm{D}}$ and $\TildeL$ to express $\text{vol}(\mathcal{S})$ and 
$\text{cut}(\mathcal{S}, \SPrime)$, respectively, we have:

\vspace{-0.2in}
\begin{small}
\begin{equation}\label{eq:norm_cut_1}
    \small
    \phi(\mathcal{S}) = \bm{x}^\top_\mathcal{S} \TildeL \bm{x}_\mathcal{S} \Bigg( \frac{1}{\bm{x}^\top_\mathcal{S} \TildeD \bm{x}_\mathcal{S}} + \frac{1}{\bm{x}^\top_\SPrime \TildeD \bm{x}_\SPrime} \Bigg).
\end{equation}
\end{small}

\leoreplace{It is clear that}{Clearly,} Eq.~\eqref{eq:norm_cut_1} is a differentiable function of $\TildeAdj$.
Computing its gradient $\nabla_{\TildeAdj} \phi(\mathcal{S})$ can then be handled by automatic differentiation tools such as PyTorch~\cite{paszke2017automatic}.

Next, we compute the gradient of $\lambda_1(\TildeAdjS)$  w.r.t. $\TildeAdj$.
A standard way to compute $\lambda_1(\TildeAdjS)$ is by using SVD.
However, this is both prohibitively expensive ($O(n^3)$), and does not provide us with the necessary gradient information. 
Instead, we use the power method~\cite{golub1996matrix} to compute $\lambda_1(\TildeAdjS)$. 
Let $\bm{v}_\mathcal{S}$ be the eigenvector associated with the largest eigenvalue $\lambda_1(\TildeAdjS)$.
Using Rayleigh quotients~\cite{trefethen1997numerical}, we can compute $\lambda_1(\TildeAdjS)$ as follows:
% \vspace{-0.1in}
\begin{small}
\begin{subequations}
    \begin{align}
        & \bm{v}_\mathcal{S}  = \argmax_{\norm{\bm{x}}_2=1} \bm{x}^\top \TildeAdjS \bm{x} \label{eq:rayleigh}\\
        & \lambda_1(\TildeAdjS)  = \bm{v}_\mathcal{S}^\top \TildeAdjS \bm{v}_\mathcal{S}.
    \end{align}
\end{subequations}
\end{small}

Thus, when $\bm{v}_\mathcal{S}$ is known, the computation of $\lambda_1(\TildeAdjS)$ reduces to matrix multiplications.
In addition, $\TildeAdjS$ is usually sparse, so we can  leverage sparse matrix multiplication to speed up the computation.

The remaining challenge is that $\bm{v}_\mathcal{S}$ is an optimal solution of an optimization problem, and we need an explicit derivative of it.
%$\bm{v}_\mathcal{S}$ w.r.t. $\TildeAdjS$ due to the argmax operator in Eq.~\eqref{eq:rayleigh}.
Fortunately, our problem has a special structure that we exploit to obtain an approximation of the derivative of $\bm{v}_\mathcal{S}$. 
From our experiments we find that $\PerturGS$ is nearly always  connected.
%\footnote{$\GS$ is disconnected in some rare cases, e.g., $\alpha_3$ dominates $\alpha_1$ and $\alpha_2$.}.
This means that the largest eigenvalue of $\TildeAdjS$ is simple.
In addition, due to the Perron–Frobenius theorem, the absolute value of the largest eigenvalue is strictly greater than the absolute values of others, i.e., $|\lambda_1(\TildeAdjS)| > |\lambda_k(\TildeAdjS)|$ for all $k\ne 1$.
Under these conditions, we can use the power method to estimate $\bm{v}_\mathcal{S}$ by repeating the formula: $\tilde{\bm{v}}_\mathcal{S}^{(t+1)} = \TildeAdjS \tilde{\bm{v}}_\mathcal{S}^{(t)} / \norm{\TildeAdjS \tilde{\bm{v}}_\mathcal{S}^{(t)}}_2$.
The $\ell_2$-norm distance between $\tilde{\bm{v}}_\mathcal{S}^{k}$ and $\bm{v}_\mathcal{S}$ decreases in a rate $O(\rho^k)$~\cite{golub1996matrix}, where $\rho < 1$.
In our experiments we found $k=50$ is enough to give a high-quality estimation for a graph with $986$ nodes.
Intuitively, we are using a sequence of differentiable operations to approximate the argmax operation.
Therefore the computation of $\nabla_{\TildeAdj} \lambda_1(\TildeAdjS)$ can be handled by PyTorch. \sydelete{ (see Figure~\ref{fig:lambda1_comp}). }

We use the same machinery  to compute $\nabla_{\TildeAdj} \sigma(\mathcal{S})$.
First, we write  $\sigma(\mathcal{S})$ in matrix notation:
\vspace{-0.1in}
\begin{equation}
    \small
    \sigma(\mathcal{S}) = \bm{v}^\top \bm{x}_\mathcal{S},
\end{equation}
where $\bm{v}$ is the unit eigenvector associated with $\lambda_1(\TildeAdj)$.
Then we apply the power method to compute $\bm{v}$.
Finally, $\sigma(\mathcal{S})$ is just a linear function of $\bm{v}$.
All of these operations %from $\TildeAdj$ to $\sigma(\mathcal{S})$
are differentiable, and the computation of $\nabla_{\TildeAdj} \sigma(\mathcal{S})$ is handled by PyTorch. 
%This ends our discussion of line~\ref{grad-step}.

We next address the challenge imposed by the constraints \eqref{eq:budget}, which can result in a computationally challenging projection step which can also significantly harm solution quality.
%First, note that even checking whether the principal is within budget would require computing every eigenvalue of $\Adj$ and $\TildeAdj$, which takes $O(n^3)$, by using singular value decomposition (SVD).
%Worse, in an iterative algorithm such as gradient descent, the SVD would be run every iteration.
%Finally, we need to have a way to map solutions at the end of each iteration to this feasible space, for example, through a projection step.
%However, projecting into a spectral norm ball also involves an expensive SVD operation~\cite{lefkimmiatis2013hessian}.
We address this challenge as follows.
Given a real symmetric matrix $\bm{X}$, let $\norm{\bm{X}}_2$ denote its spectral norm.
To satisfy Eq.~\eqref{eq:budget}, we use the following result from pseudospectrum theory (see \cite{trefethen2005spectra}, Theorem~2.2):
\vspace{-0.1in}
\begin{equation}\label{eq:budget-pseudo}
    \small
    \big| \lambda_i(\tilde{\Adj}) - \lambda_i(\Adj)  \big|  \le \epsilon , i=1,\ldots,n \iff \norm{\tilde{\Adj}  - \Adj }_2 \le \epsilon %\iff \norm{ \Perturb }_2 \le \epsilon.
\end{equation}

Since $\Perturb = \tilde{\Adj} - \Adj$ is real and symmetric, we have $\norm{\Perturb}_2 = \max \SET{|\lambda_1(\Perturb)|, |\lambda_n(\Perturb)|}$ \leoreplace{.
In addition, the eigen-decomposition of $\Perturb$ indicates that}{and} $-\lambda_n(\Perturb) = \lambda_1(-\Perturb)$, which leads to:

\vspace{-0.1in}
\begin{small}
\begin{equation}
\text{$\TildeAdj$ satisfies Eq.~\eqref{eq:budget}} \iff  \max\{ |\lambda_1(\bm{\Delta})|, |\lambda_1(-\bm{\Delta})| \} \le \epsilon.
\end{equation}
\end{small}

This equivalence allows the attacker to check whether she is within budget simply by evaluating $\max\{ |\lambda_1(\bm{\Delta})|, |\lambda_1(-\bm{\Delta})| \}$, i.e., computing the largest eigenvalue of a real symmetric matrix, which can be computed efficiently using, e.g., the power method~\cite{golub1996matrix}.

Our algorithm leverages this connection as follows.
Line~\ref{look-ahead} in Algorithm~\ref{algo:grad_ascent} is a one step look-ahead, which ensures that the perturbation $\bm{\Delta}_i$ is only added to $\TildeAdj_i$ when there is enough budget.
Recall from Section~\ref{S:threatmodel} that $\norm{\bm{\Delta}_i}_2 = \max\{ |\lambda_1(\bm{\Delta}_i)|, |\lambda_1(-\bm{\Delta}_i)| \}$.
Thus this step requires us to compute $\lambda_1(\bm{\Delta}_i)$ and $\lambda_1(-\bm{\Delta}_i)$, using again the power method. 
Line~\ref{used_budget} tracks the amount of budget used so far. 
We now show that the output of Algorithm~\ref{algo:grad_ascent} always returns a feasible solution.
%solves problem~\eqref{eq:model}.
Suppose Algorithm~\ref{algo:grad_ascent} terminates after $k > 1$ iterations.
This means $B_k + \norm{\eta_k \Perturb_k}_2 > \epsilon$ and $B_k \le \epsilon$.
In other words $B_k = \sum_{i=1}^{k-1}{\norm{\eta_i \Perturb_i}_2} \le \epsilon$.
Note that the total amount of perturbation added to $\Adj$ is $\Perturb = \sum_{i=1}^{k-1}{\eta_i \Perturb_i}$.
The triangle inequality implies $\norm{\Perturb}_2 \le \epsilon$.
%Consequently, the combination of lines~\ref{look-ahead} and \ref{used_budget}  ensure that the optimizer $\TildeAdj_i$ always lies in the feasible region of the principal's optimization problem.
%Thus, we do not need to worry about the feasibility issue. 
%An alternative method to solve our model is projection gradient descent, however, it requires repeatedly projecting onto the feasible region. 
%Observe that the feasible region is a spectral norm ball.
%Projecting the optimizer $\TildeAdj_i$ onto a spectral norm ball amounts to solving an SVD problem~\cite{lefkimmiatis2013hessian}, which is too expensive.
%In addition, the projection step does not guarantee the projected matrix is symmetric. 

For each iteration of Algorithm~\ref{algo:grad_ascent}, 
\syreplace{the most expensive part comes from the power method and matrix multiplications.}{the most computationally expensive components are the power method and matrix multiplication.}
Let $m$ be the number of nonzeros in $\TildeAdj_i$; if the graph is unweighted then $m$ is the number of edges at this iteration. 
By leveraging the sparseness exhibited in $\TildeAdj_i$, the power method runs in $O(m)$ and the matrix multiplications cost $O(mn)$.
Thus, the time complexity of each iteration is $O(mn)$, \syedit{which significantly improves the $O(n^3)$ time complexity of SVD that would otherwise be needed.}

Recall that our model for targeted diffusion is applicable to both weighted and unweighted graphs.
For weighted graphs, the attacker modifies the weights on existing edges.
%(i.e., increasing the number of phishing emails along certain links). 
For unweighted graphs, the attacker adds new edges 
or deletes existing edges from the graph.
The main difference between the two settings is that the latter needs a rounding heuristic to convert a matrix with fractional entries to a binary adjacency matrix. 
We discuss this heuristic below.

After running Algorithm~\ref{algo:grad_ascent}, we obtain a perturbed  matrix $\TildeAdj$ with fractional entries. 
For unweighted graphs, a rounding heuristic is needed to convert $\TildeAdj$ to a valid adjacency matrix.
Let $\mathcal{D}=\{ (i, j) | \Tilde{A}_{i,j} \ne A_{i,j} \}$ be the set of candidate edges that will be added or deleted from $G$.
For each edge $(i, j) \in \mathcal{D}$ define the score $s_{(i, j)} = |\Tilde{A}_{i,j} - A_{i,j}|$.
Intuitively, $s_{(i, j)}$ indicates the impact that adding or deleting the edge has on the principal's utility.
Next, we iteratively modify $G$, by adding or deleting edges in $\mathcal{D}$, starting with the one with the largest $s_{(i, j)}$.
The modification process stops when the budget is exhausted, which results in the desired binary adjacency matrix. 
For weighted graphs, %, in most applications the weights are integers.
let $C=\max_{i, j} A_{ij}$ and normalize each entry by $C$, that is $A_{ij} / C$.
We run Algorithm~\ref{algo:grad_ascent} on the normalized adjacency matrix, which results in $\TildeAdj$.
The desired adjacency matrix is obtained by multiplying each $\tilde{A}_{ij}$ by $C$, $C \tilde{A}_{ij}$.
If integer weights are desired (\syedit{e.g., the number of packages transmitted between two computers}), a final rounding step is applied.
Our experimental results show that 
the rounding heuristic is effective in practice.

\section{Certified Robustness}\label{sec:min_budget}
%Having developed a framework for network structure optimization for targeted diffusion, we now ask the following question: 

This section addresses the following question: what are the limits on the attacker's ability to successfully accomplish her attack?
More precisely, we now seek to identify necessary conditions on the attack budget $\epsilon$ so the attack  succeeds; conversely, we can view a given graph to be \emph{certified} to be robust to attacks that use a smaller budget than the one required.

Let \texttt{TargetDiff}($\mathcal{S}, G, \epsilon$) be an instance of the targeted diffusion problem with target subset $\mathcal{S}$, underlying graph $G$ and budget $\epsilon$. \leoreplace{An instance \texttt{TargetDiff}($\mathcal{S}, G, \epsilon$) can be successfully attacked if the attacker is able to modify $G$ into $\PerturG$ within budget $\epsilon$ such that $I(\PerturGS) > I(\GS)$.}{The attacker is successful on an instance \texttt{TargetDiff}($\mathcal{S}, G, \epsilon$) if she is able to modify $G$ into $\PerturG$ within budget $\epsilon$ such that $I(\PerturGS) > I(\GS)$.} We now derive a necessary condition for successful \leodelete{targeted} attacks, in the form of a lower bound on $\epsilon$.

To derive the necessary condition on $\epsilon$, we use our experimental observation that in successful attacks the degrees of nodes in the targeted subgraph $\GS$ always increase.
This is intuitive: a denser subgraph $\GS$ will tend to increase the propensity of the diffusion (e.g., of malware) to spread within it, which is one of our explicit objectives.
Let $d_i$ (resp. $\tilde{d}_i$) be the degree of node $i$ before (resp. after) graph modification.
We assume if an attack is successful, the degrees of nodes in $\GS$ are increased, i.e., $\tilde{d}_i \ge d_i$ for $i \in \mathcal{S}$.

Now, observe that computing the exact value of $I(\GS)$ is intractable, since the exact computation of $\pi_i$ is prohibitive (see, e.g., \cite{van2008virus}, Section IV.B).
\citet{van2008virus} proposed a simple yet effective estimator for $\pi_i$ to be $1 - \delta / (\beta d_i)$.
The estimator works in the regime $\delta / \beta \le d_{min}$, where $d_{min}$ is the minimum degree of $G$.
Consequently, an estimator for $I(\GS)$ is $\hat{I}(\GS) = \sum_{i \in \mathcal{S}}^{}{1 - \delta / (\beta d_i)}$.
We focus on the setting where the estimation error is bounded by a small number, i.e., $| \hat{I}(\GS) - I(\GS) | \le \tau$.  
Note that $\tau$ can be estimated from historical diffusion data.
The formal statement of the necessary condition is in Theorem~\ref{th:cert_robust}. 
\footnote{Due to brevity the proof is in Appendix~\ref{app:cert_robust} at \url{https://arxiv.org/abs/2008.05589}.}

\begin{theorem}\label{th:cert_robust}
% \small
Given an instance \textup{\texttt{TargetDiff}}($\mathcal{S}, G, \epsilon$), $I(\GS)$ is estimated by $\hat{I}(\GS) = \sum_{i \in \mathcal{S}}^{}{1 - \delta / (\beta d_i)}$.
Suppose we have an upper bound $| \hat{I}(\GS) - I(\GS) | \le \tau$, 
the degrees of nodes in $\mathcal{S}$ are increased, i.e., $\tilde{d}_i \ge d_i$ for $i \in \mathcal{S}$, and $\delta / \beta \le d_{min}$. 
In order to have $I(\PerturGS) - I(\GS) > 2\tau$, the budget $\epsilon$ must satisfy:
    % \vspace{-0.1in}
    \begin{small}
    \begin{equation}\label{eq:cert_robust}
         \epsilon \ge \sqrt{\frac{|\mathcal{S}|}{n}} \left( \frac{\sum_{i \in \mathcal{S}}^{}{d^2_i}}{|S|} - \frac{\left(\sum_{i \in \mathcal{S}}^{}{d_i} \right)^2}{|S|^2} \right)^{1/2}.
    \end{equation}
    \end{small}
\end{theorem}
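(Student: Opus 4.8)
The plan is to convert the computationally inaccessible success condition on the true impacts into an algebraic statement about degree increases, then lower-bound the spectral norm of the perturbation $\Perturb = \TildeAdj - \Adj$ with a well-chosen Rayleigh-quotient test vector and recognize the degree variance in what comes out. First I would apply the estimator error bound to both graphs: assuming $|\hat{I}(\PerturGS) - I(\PerturGS)| \le \tau$ alongside the stated $|\hat{I}(\GS) - I(\GS)| \le \tau$, the success requirement $I(\PerturGS) - I(\GS) > 2\tau$ forces $\hat{I}(\PerturGS) - \hat{I}(\GS) > 0$. Writing the estimator out, this is $\frac{\delta}{\beta}\sum_{i \in \mathcal{S}}(\frac{1}{d_i} - \frac{1}{\tilde{d}_i}) > 0$, which, together with the assumption $\tilde{d}_i \ge d_i$ for $i \in \mathcal{S}$ and the regime $\delta/\beta \le d_{min}$ that keeps the estimator valid, isolates the restricted degree-increase vector as the object controlling success.

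Next I would translate the budget into a spectral-norm lower bound. By the pseudospectrum equivalence in Eq.~\eqref{eq:budget-pseudo}, feasibility means $\norm{\Perturb}_2 \le \epsilon$, so it suffices to lower-bound $\norm{\Perturb}_2$ for any successful $\Perturb$ and combine with $\epsilon \ge \norm{\Perturb}_2$. Since $\Adj \mathbf{1} = \bm{d}$ and $\TildeAdj \mathbf{1} = \tilde{\bm{d}}$, the all-ones vector satisfies $\Perturb \mathbf{1} = \tilde{\bm{d}} - \bm{d}$, and the operator-norm inequality $\norm{\Perturb}_2 \ge \norm{\Perturb \mathbf{1}}_2 / \norm{\mathbf{1}}_2$ yields $\epsilon \ge \frac{1}{\sqrt{n}}\norm{\tilde{\bm{d}} - \bm{d}}_2 \ge \frac{1}{\sqrt{n}}(\sum_{i \in \mathcal{S}}(\tilde{d}_i - d_i)^2)^{1/2}$, the last step discarding the non-negative contribution from $\SPrime$. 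The resulting $1/\sqrt{n}$ prefactor is exactly what matches the $\sqrt{|\mathcal{S}|/n}$ of the claim once the restricted degree-change norm is bounded below.

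The crux, which I expect to be the main obstacle, is to show $\sum_{i \in \mathcal{S}}(\tilde{d}_i - d_i)^2 \ge \sum_{i \in \mathcal{S}}(d_i - \bar{d})^2$ with $\bar{d} = \frac{1}{|\mathcal{S}|}\sum_{i \in \mathcal{S}} d_i$, since the right side equals $|\mathcal{S}|$ times the degree variance and reproduces the bracketed term of Eq.~\eqref{eq:cert_robust} after simplifying $\frac{1}{\sqrt{n}}(\sum_{i \in \mathcal{S}}(d_i - \bar{d})^2)^{1/2} = \sqrt{|\mathcal{S}|/n}\,(\frac{\sum_{i \in \mathcal{S}} d_i^2}{|\mathcal{S}|} - \bar{d}^2)^{1/2}$. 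The bare monotonicity $\tilde{d}_i \ge d_i$ is too weak here (uniformly tiny increases would satisfy it yet violate the inequality), so I would invoke the structure of a successful attack established empirically in the paper: the minimal-budget way to make $\GS$ spread raises the low-degree nodes toward a common target degree $T \ge \max_{i \in \mathcal{S}} d_i$, giving $\tilde{d}_i - d_i \ge T - d_i \ge 0$ termwise, whereupon the bias--variance identity $\sum_{i \in \mathcal{S}}(T - d_i)^2 = \sum_{i \in \mathcal{S}}(d_i - \bar{d})^2 + |\mathcal{S}|(T - \bar{d})^2$ forces $\sum_{i \in \mathcal{S}}(\tilde{d}_i - d_i)^2 \ge \sum_{i \in \mathcal{S}}(d_i - \bar{d})^2$. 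As a sanity check on tightness I would also test $\Perturb$ against the centered degree vector $\bm{c}$ (with $c_i = d_i - \bar{d}$ on $\mathcal{S}$ and zero on $\SPrime$) paired with $\mathbf{1}$: when the attack regularizes $\mathcal{S}$ the cross term $\sum_{i \in \mathcal{S}}(d_i - \bar{d})\tilde{d}_i$ vanishes and $|\bm{c}^\top \Perturb \mathbf{1}| = \norm{\bm{c}}_2^2$, which recovers the bound with equality and identifies the regular configuration as the extremal case.
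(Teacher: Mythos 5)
Your opening moves match the paper's: propagating the estimator error $\tau$ through both graphs to reduce success to $\hat{I}(\PerturGS) - \hat{I}(\GS) > 0$, and bounding the degree change by the budget via $\norm{\Perturb\bm{1}}_2 \le \sqrt{n}\,\norm{\Perturb}_2 \le \sqrt{n}\,\epsilon$ (this is exactly Proposition~\ref{prop:deg_bound}, restricted to $\mathcal{S}$). The genuine gap is at your self-identified crux. You need
\begin{equation*}
\sum_{i \in \mathcal{S}}\bigl(\tilde{d}_i - d_i\bigr)^2 \;\ge\; \sum_{i \in \mathcal{S}}\bigl(d_i - \bar{d}\bigr)^2,
\end{equation*}
and, as you concede, $\tilde{d}_i \ge d_i$ together with $\hat{I}(\PerturGS) > \hat{I}(\GS)$ is far too weak: raising a single degree by an arbitrarily small amount already makes $\hat{I}(\PerturGS) - \hat{I}(\GS) > 0$ while leaving the left-hand side arbitrarily small. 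To close the gap you assert that a successful attack drives every node of $\mathcal{S}$ up to a common target degree $T \ge \max_{i \in \mathcal{S}} d_i$. That is not among the theorem's hypotheses, is not implied by them, and is strictly stronger than the empirical observation the paper actually invokes (monotone degree increase on $\mathcal{S}$). In effect you are assuming the structural conclusion that the proof is supposed to deliver, and the ``tightness check'' with the centered degree vector rests on the same unproven regularization claim.

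The paper obtains that structure variationally rather than positing it. It poses the extremal problem of maximizing $\sum_{i\in\mathcal{S}}(1/d_i - 1/\tilde{d}_i)$ over degree sequences $\TildeDegDistS \succeq \bm{d}_\mathcal{S}$ satisfying $\norm{\TildeDegDistS - \bm{d}_\mathcal{S}}_2^2 \le n\epsilon^2$ (Eq.~\eqref{eq:opt_robust}), argues this is a convex program for which Slater's condition holds, and reads off from the KKT stationarity and complementary slackness conditions that the maximizer is a constant vector $\tilde{d}_i \equiv x$ with the budget constraint tight, i.e.\ $n\epsilon^2 = \sum_{i\in\mathcal{S}}(x - d_i)^2$. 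The requirement that this quadratic in $x$ admit a real root (nonnegative discriminant) is precisely Eq.~\eqref{eq:cert_robust}. So the missing idea in your write-up is this optimization/KKT step: the uniform target degree is the \emph{output} of an extremal argument over all feasible degree perturbations, not an input. To salvage your more elementary route you would have to prove, not assume, that the constant-degree configuration is the worst case, which essentially reproduces the paper's Lagrangian analysis.
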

% \vspace{-0.1in}

The quantity inside the square root is always nonnegative due to Jensen's inequality.
The lower bound involves only structural properties of the graph (node degrees and the size of $\mathcal{S}$) and thus can be easily computed given an arbitrary graph.
As mentioned above, we can view this lower bound as a robustness certificate, or guarantee for the given graph.
It guarantees, in particular, that when the budget is below the lower bound, the 
total probability of ``infection'' (e.g., malware infection) in $\GS$ cannot be increased by more than $2\tau$.
In the special case of perfect estimation ($\tau = 0$), it implies impossibility of increasing the susceptibility of $\GS$ to targeted diffusion.

The proof of Theorem~\ref{th:cert_robust} does not depend on the specific objective function proposed in this paper. 
Consequently, the certificate is not specific to our particular objective function.
Further, the lower bound is independent of the values of $\delta$ and $\beta$, as long as $\delta / \beta \le d_{min}$.
\leodelete{In other words, it applies to both highly infectious (small $\delta / \beta$) and slow-spreading (large $\delta / \beta$) diffusion.}

We briefly discuss the settings  where the robustness guarantee is most applicable.
First, the estimation for the infected ratio on $\GS$ is accurate, i.e.,  $| \hat{I}(\GS) - I(\GS) | \le \tau$ and $\tau$ is small.
According to~\citet{van2008virus}, this usually happens on graphs with small degree variation. 
Another setting is where the degrees of nodes in $\GS$ increase as a result of the attack, which is both natural and empirically founded, as we mentioned earlier.
We provide experimental results on synthetic networks to verify the robustness guarantee in Section~\ref{sec:exp}.

\section{Experiments and Discussion}\label{sec:exp}
This section presents experimental results on three  real-world datasets: an email network, an airport network, and a brain network. 
\footnote{
Due to brevity, additional results on real and synthetic networks are at \url{https://arxiv.org/abs/2008.05589}.
}

\syreplace{
For each network we run four experiments which differ in the hyper-parameters $(\alpha_1, \alpha_2, \alpha_3)$, corresponding to the four columns of Figure~\ref{fig:all}.
The first uses $\alpha_1=\alpha_2=\alpha_3=1/3$, which encodes that the attacker's objectives are equally important.
This is to show the overall effectiveness of our approach.
The other three experiments are designed to show the effectiveness of each term in the principal's objective function.
For example, in the second experiment, we use hyper-parameters $\alpha_1=1/3, \alpha_2=0$, and $\alpha_3=1/3$ (the hyper-parameters do not need to sum to one).
}{For each network we run \ModelName with hyper-parameters $\alpha_1=\alpha_2=\alpha_3=1/3$, which encodes that the attacker's objectives are equally important.
}
To study how the attacker's effectiveness changes with respect to her budget, we set $\epsilon = \gamma \lambda_1(\Adj)$ and vary $\gamma$ from $10\%$ to $50\%$.
A single initially infected node is selected uniformly at random.

Recall we use $G$ and $\PerturG$ to denote the original and the modified graphs, respectively.
We simulate the spreading dynamics 2000 times on both $G$ and $\PerturG$.
For unweighted graphs the recovery rate $\delta$ and transmission rate $\beta$ are set to $0.24$, $0.06$, resp.; for weighted graphs we set $\delta=0.24$ and $\beta=0.2$.
The spreading dynamics converges exponentially fast to the steady state:
empirically, we found 30 time steps to be enough to reach the steady state in most cases.
\sydelete{\footnote{\leocomment{if we need space, I think we can get rid of this footnote} The SIS model is similar to an irreducible ($G$ and $\PerturG$ are connected) and aperiodic (each infected node recovers with probability $\delta$) Markov chain representing a lazy random walk on the network.}}
When the simulation finishes, we extract the number of nodes that are ``infected''.
We use $I_{\text{original}}$  and $I_{\text{modified}}$  to represent the fractions of infected nodes on $G$ and $\PerturG$, resp.

We use two other algorithms as baselines for comparison, which we call \texttt{deg} and \texttt{gel}. The two algorithms work by alternating between modifying $\GS$ and modifying $\GSPrime$ until the budget is spent. When modifying $\GS$, \texttt{deg} chooses the edge $(i,j)$ with the maximum value of $d_i + d_j$. Algorithm \texttt{gel} is based on \cite{tong2012gel}, and chooses the edge $(i,j)$ with maximum \emph{eigenscore}, defined as $u[i] v[j]$, where $u,v$ are the left and right principal eigenvectors of $\Adj$, respectively. This edge is chosen from among those edges that are absent (if the graph is unweighted) or present (if the graph is weighted). When modifying $\GSPrime$, these baselines choose an existing edge of to remove (if unweighted) or decrease its weight (if weighted) with the highest value of $d_i + d_j$ or $u[i] v[j]$, respectively.

\noindent{\bf Unweighted Graphs: } 
We consider (the largest connected component of) an email network~\cite{leskovec2007graph} that has 986 nodes.
An edge $(i, j)$ indicates that there were email exchanges between nodes $i$ and $j$.
This data set contains ground-truth labels to indicate which community a node belongs to.
We \sydelete{uniformly at random} pick a community with 15 nodes as $\mathcal{S}$; \syedit{the results for communities with other sizes are similar.}

\sydelete{The experimental results for the email network are at the top row of Figure~\ref{fig:comparison}.}
The overall effectiveness of our approach is shown in Figure~\ref{fig:comparison}, top.
The difference $I_{\text{modified}} - I_{\text{original}}$ of the impact on the modified and original graphs is shown for $\GS$ (red line) and $\GSPrime$ (purple line), respectively.
As $\gamma$ gets larger, the impact on $\GS$ increases, while the impact on $\GSPrime$ is under control, which demonstrates that the proposed approach is highly effective at both increasing the impact of diffusion on the targeted subgraph, and at the same time preventing the impact on the remaining graph.
% \syedit{Additional experimental results to examine the effectiveness of each term in the objective function of \ModelName are provided in Appendix~\ref{sec:real_exp}. }

%\begin{figure}[t]
%\def\FigSize{1.12in}
%\small
%\centering
%\setlength{\tabcolsep}{0.1pt}
%\begin{tabular}{ccc}
%\includegraphics[width=\FigSize]{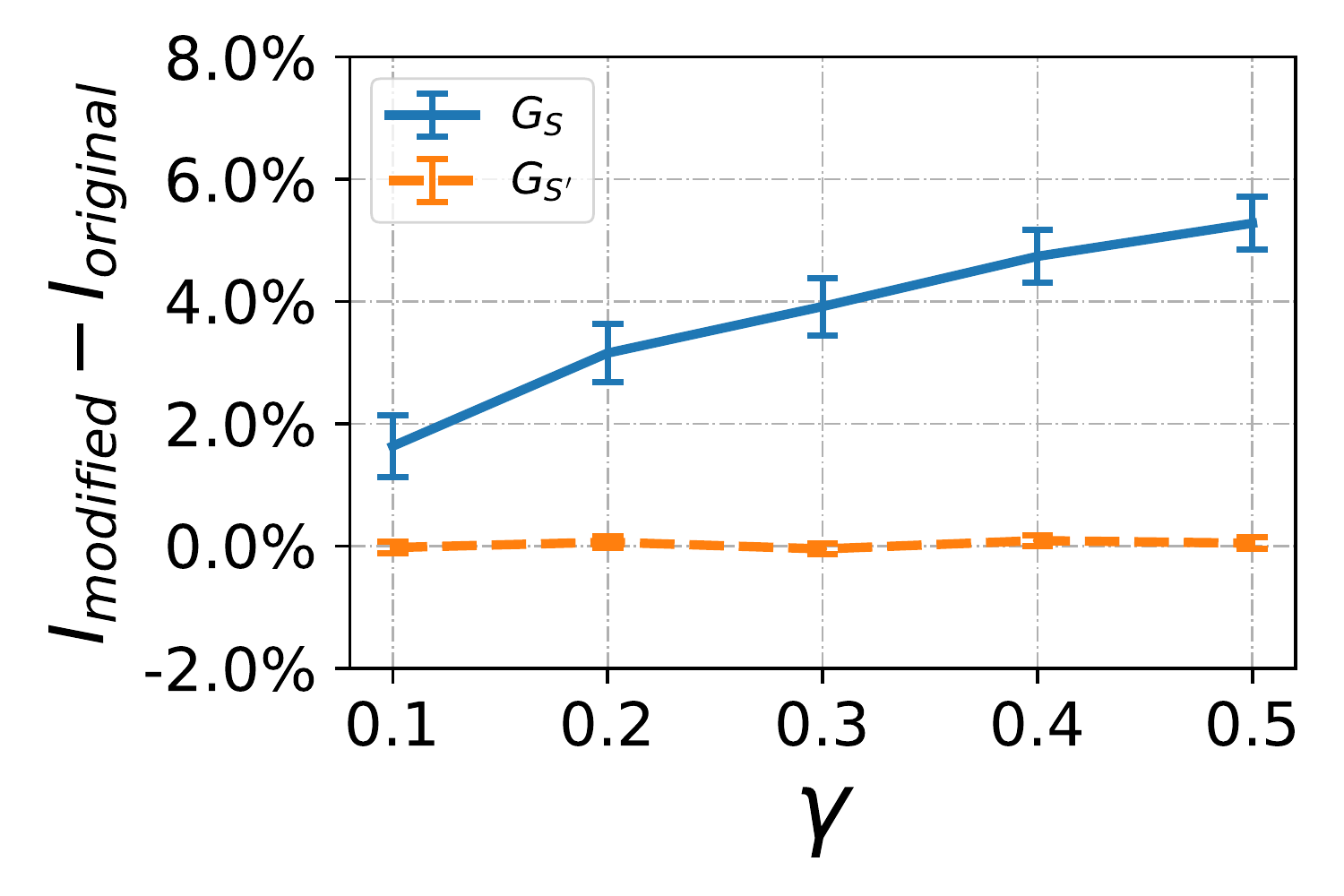} &
%\includegraphics[width=\FigSize]{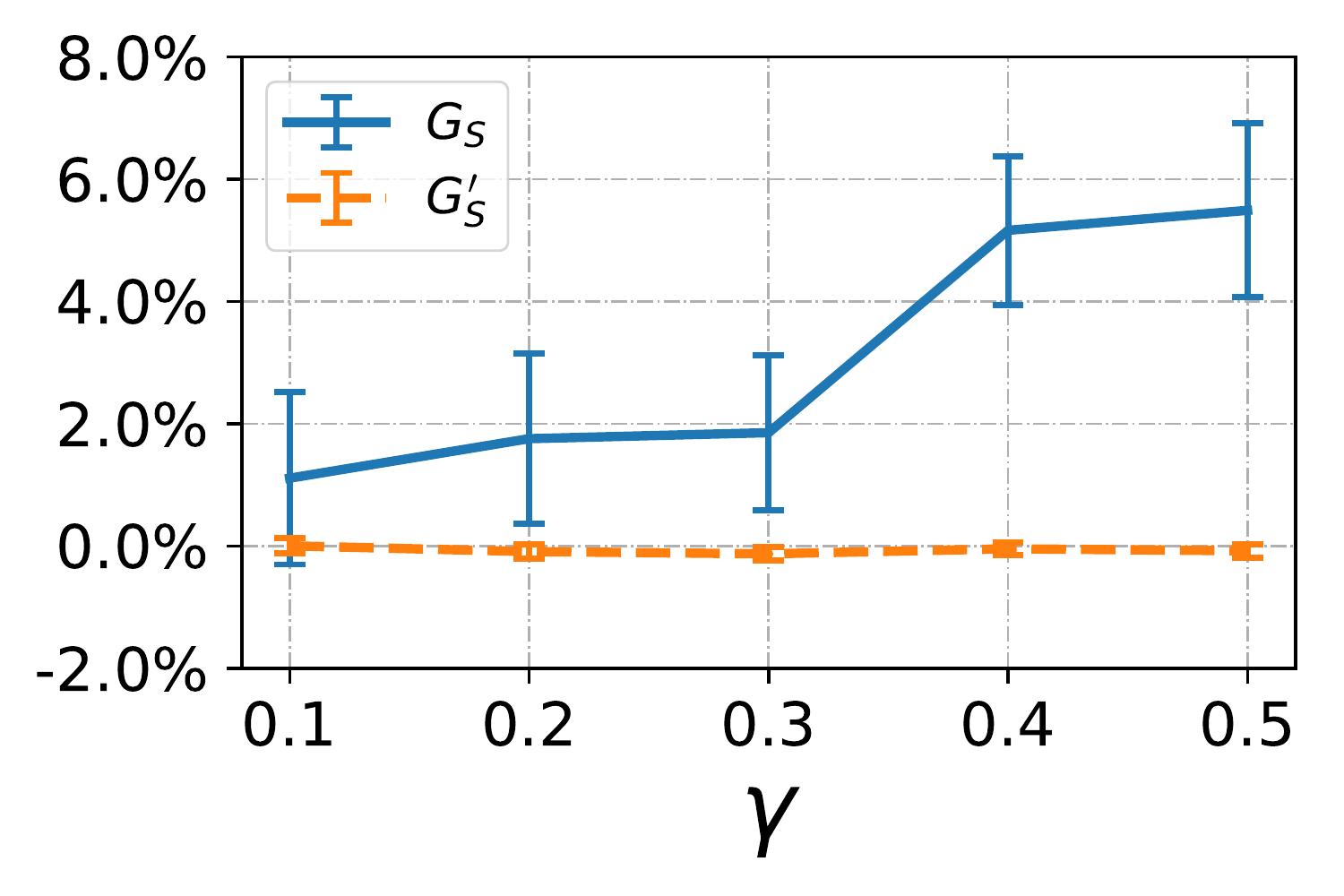} &
%\includegraphics[width=\FigSize]{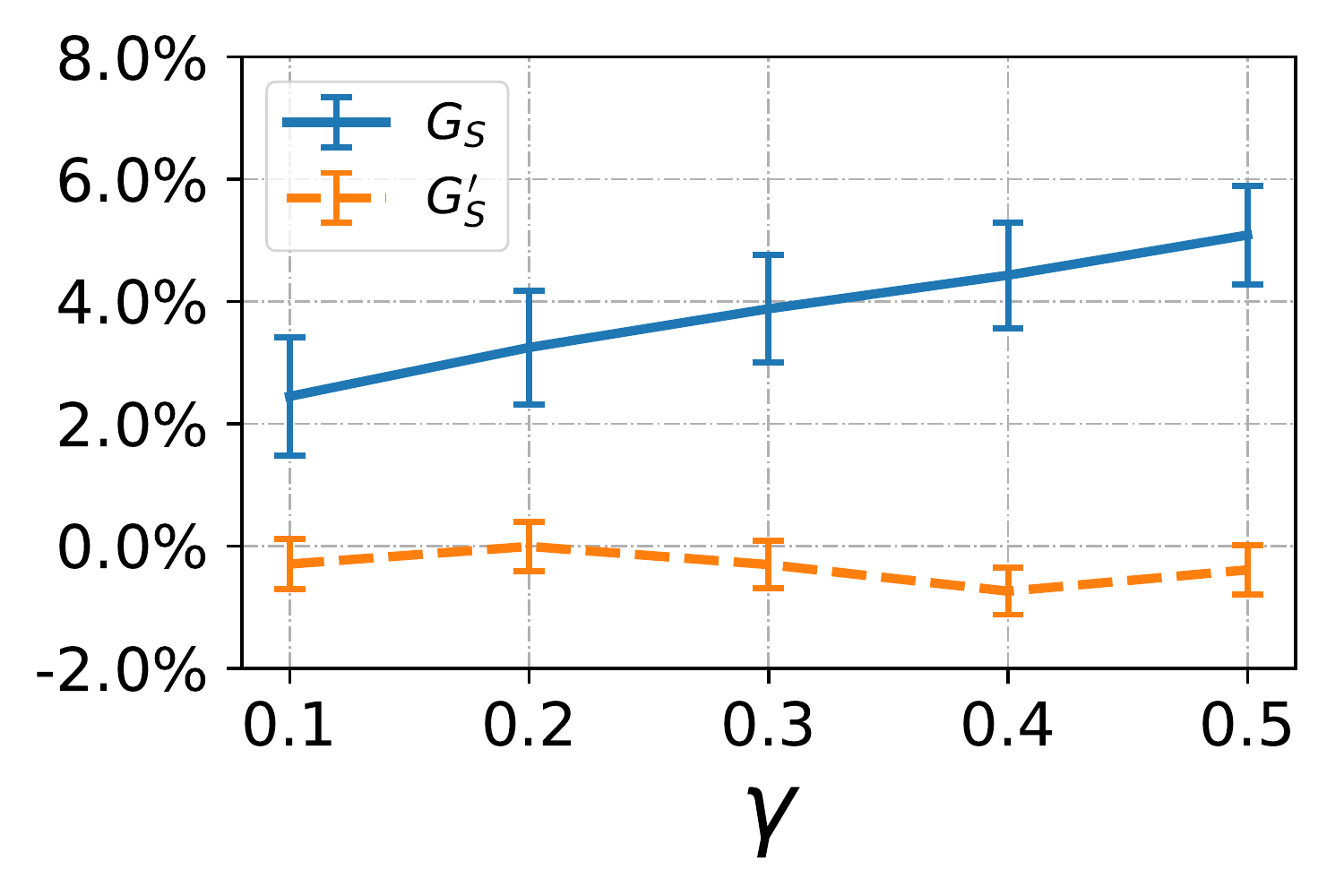}
%\end{tabular}
%\includegraphics[width=\columnwidth]{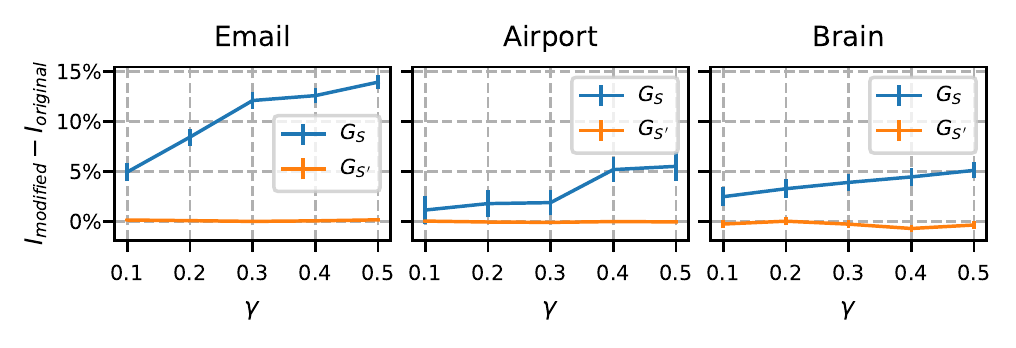}
%\caption{
%\leoreplace{Experiments showing the model's effectiveness}{\ModelName effectively achieves targeted diffusion in $G_S$}.
%\textbf{Left}: the email network; \textbf{Middle}: the airport network; \textbf{Right}: the brain network.
%}
%\label{fig:all}
%\end{figure}

\noindent{\bf Weighted Graphs:}  We consider an airport network and a brain network.
The airport network~\cite{airport2010data} was collected from the website of Bureau of Transportation Statistics of the U.S., where the nodes represent all of the 1572 airports in the U.S. and the weights on edges encode the number of passengers traveled between two airports in 2010. 
We scaled the weights on the airport network to $[0, 1]$.
The targeted set $\mathcal{S}$ was chosen by first sampling a node $i$ uniformly at random, and then setting $\mathcal{S}$ to be $i$ and all its neighbors.
We report experimental results for an $\mathcal{S}$ with 60 nodes.
The brain network~\cite{crossley2013cognitive} consists of 638 nodes where each node corresponds to a region in human brain.
An edge between nodes $i$ and $j$ indicates that the two regions have co-activated on some tasks.
The weight on the edge quantifies the strength of the co-activation estimated by the Jaccard index.
The weights on edges lie in $[0, 1]$.
The 638 regions are categorized into four areas: default mode, visual, fronto-parietal, and central.
Each area is responsible for some functionality of human.
We select 100 nodes from the central area as the targeted set $\mathcal{S}$.
The results for the airport (resp. brain) network are at the center (resp. right) column of Figure~\ref{fig:comparison}.
The overall trend is similar to that of the email network. 
% \syedit{Additional results to corroborate the effectiveness of each term in the objective function of \ModelName are provided in Appendix~\ref{sec:real_exp}. }

\noindent{\bf Comparison against Baselines:} The comparisons against the baselines are shown in Figure~\ref{fig:comparison}, middle and bottom rows.
The moddle row shows the infectious ratios within the targeted subgraphs.
It is clear that our algorithm is more effective at increasing the infectious ratios than the baselines. 
The bottom row shows the infectious ratios within the non-targeted subgraphs.
The magnitudes of the differences are negligible, although in some cases our algorithm is significantly better than the baselines (e.g., on airport network when $\gamma=0.4$).

\noindent{\bf Verify the Certified Robustness:} We run experiments on synthetic networks to verify the certified robustness; the synthetic networks include Barab\'asi-Albert (BA)~\cite{barabasi1999emergence}, Watts-Strogatz~\cite{watts1998collective}, and Block Two-level Erd\H{o}s-R\'enyi (BTER) networks~\cite{seshadhri2012community}.
We use the same experimental setup as described above. 
Fig.~\ref{fig:robustness} shows the difference of infectious ratios on the modified and original graphs (within targeted subgraphs), as a function of the attacker's budget $\epsilon$.
The vertical dashed lines are the lower bounds on the budget computed using Eq.~\eqref{eq:cert_robust}.
Note that when the budget is less than the lower bound, the differences are close to zero, which means that the network is robust against targeted diffusion.

\begin{figure}[t]
\def\FigSize{1.1in}
\centering
\setlength{\tabcolsep}{0.1pt}
%\begin{tabular}{ccc}
%\includegraphics[width=\FigSize]{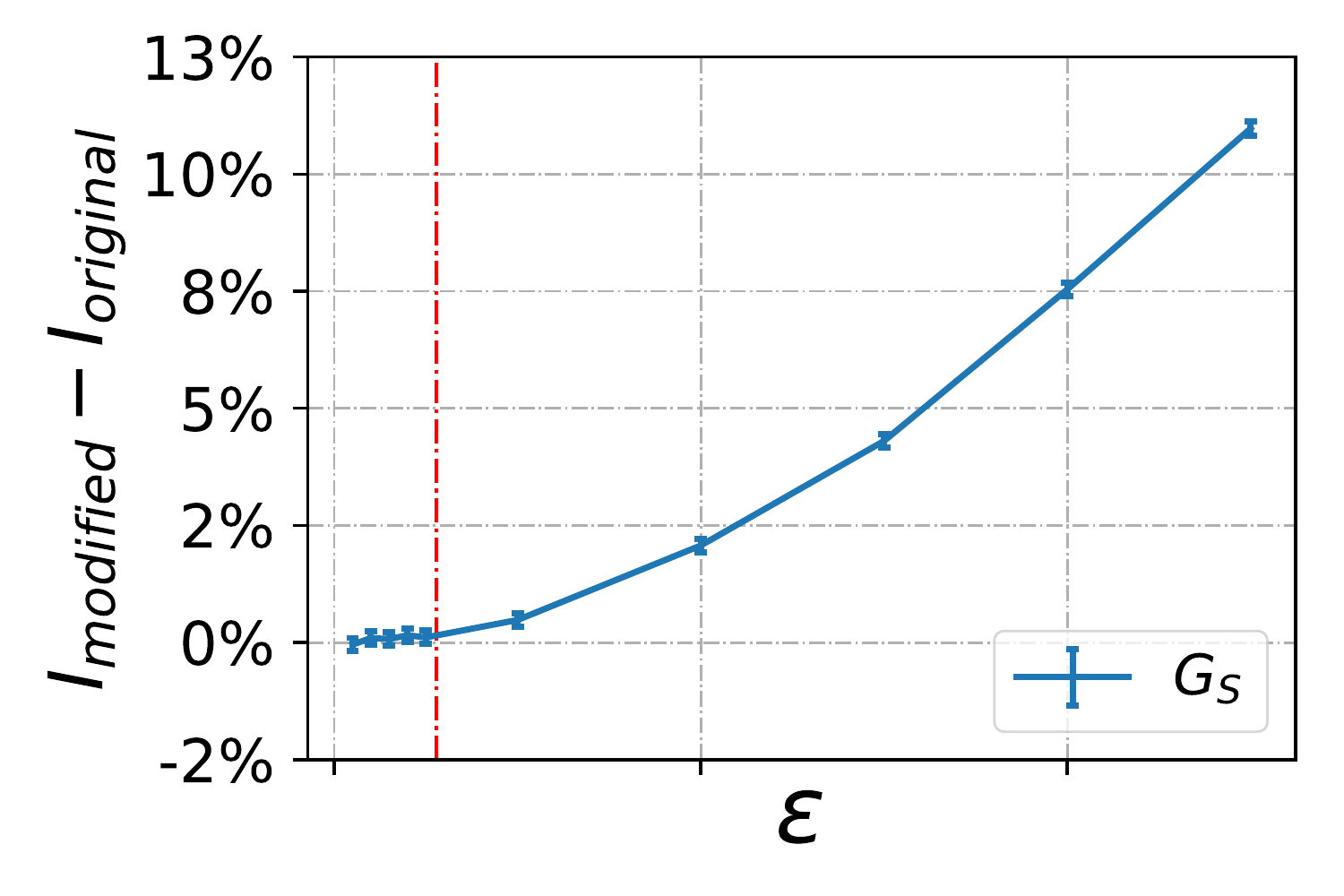} & \includegraphics[width=\FigSize]{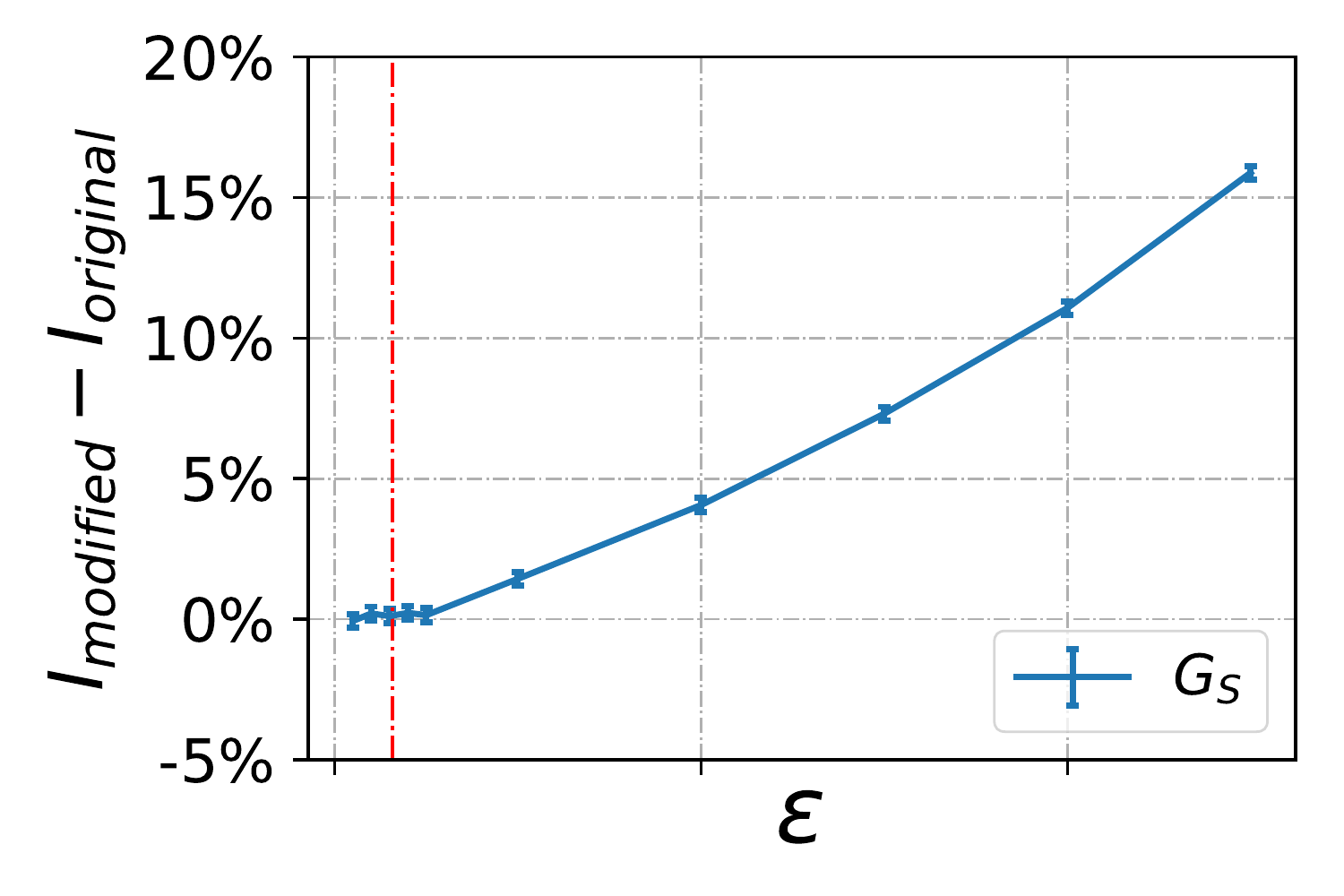} & \includegraphics[width=\FigSize]{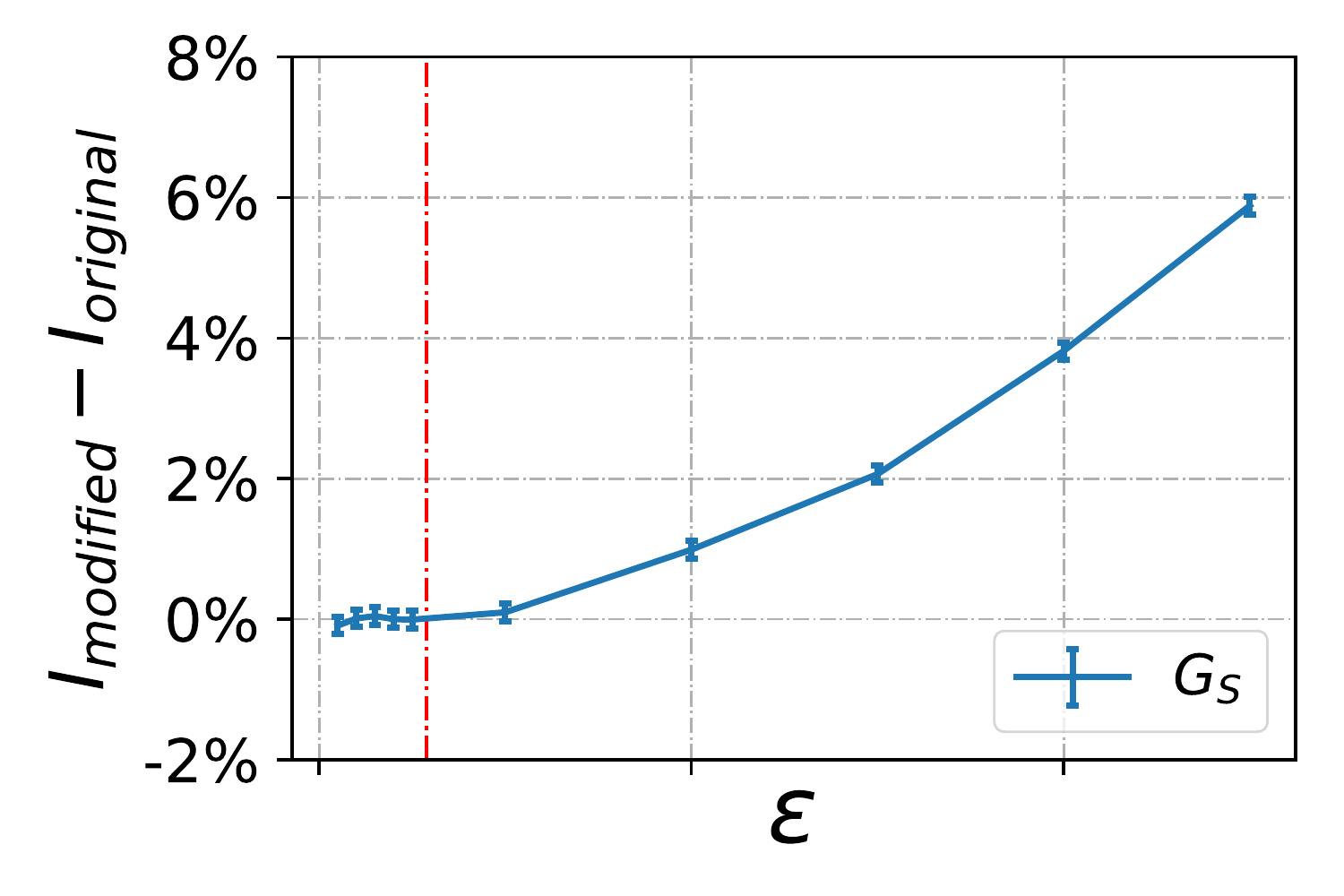}
%\end{tabular}
\includegraphics[width=0.75\columnwidth]{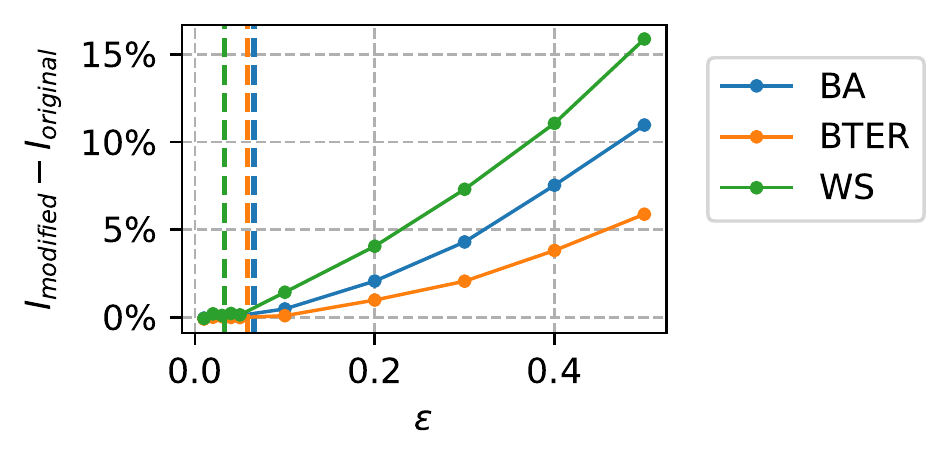} 
\caption{Certified robustness results. Dashed lines mark the lower bounds from Eq.~\eqref{eq:cert_robust}.
Solid lines represent infectious ratios within targeted subgraphs.
}
\label{fig:robustness}
\end{figure}

\begin{figure}[t]
\def\FigSize{1.1in}
\centering
\setlength{\tabcolsep}{0.1pt}
%\begin{tabular}{ccc}
%\includegraphics[width=\FigSize]{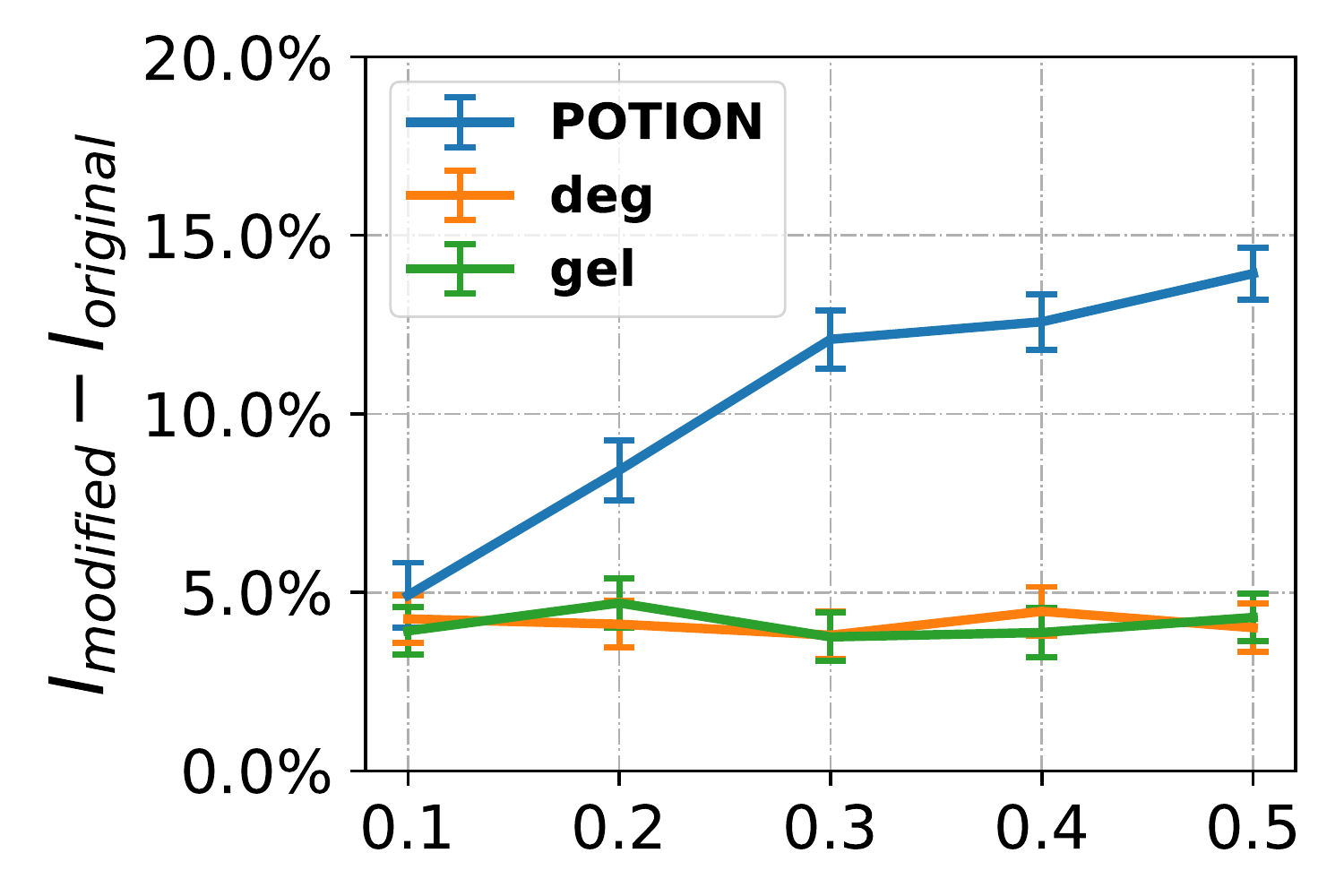} & \includegraphics[width=\FigSize]{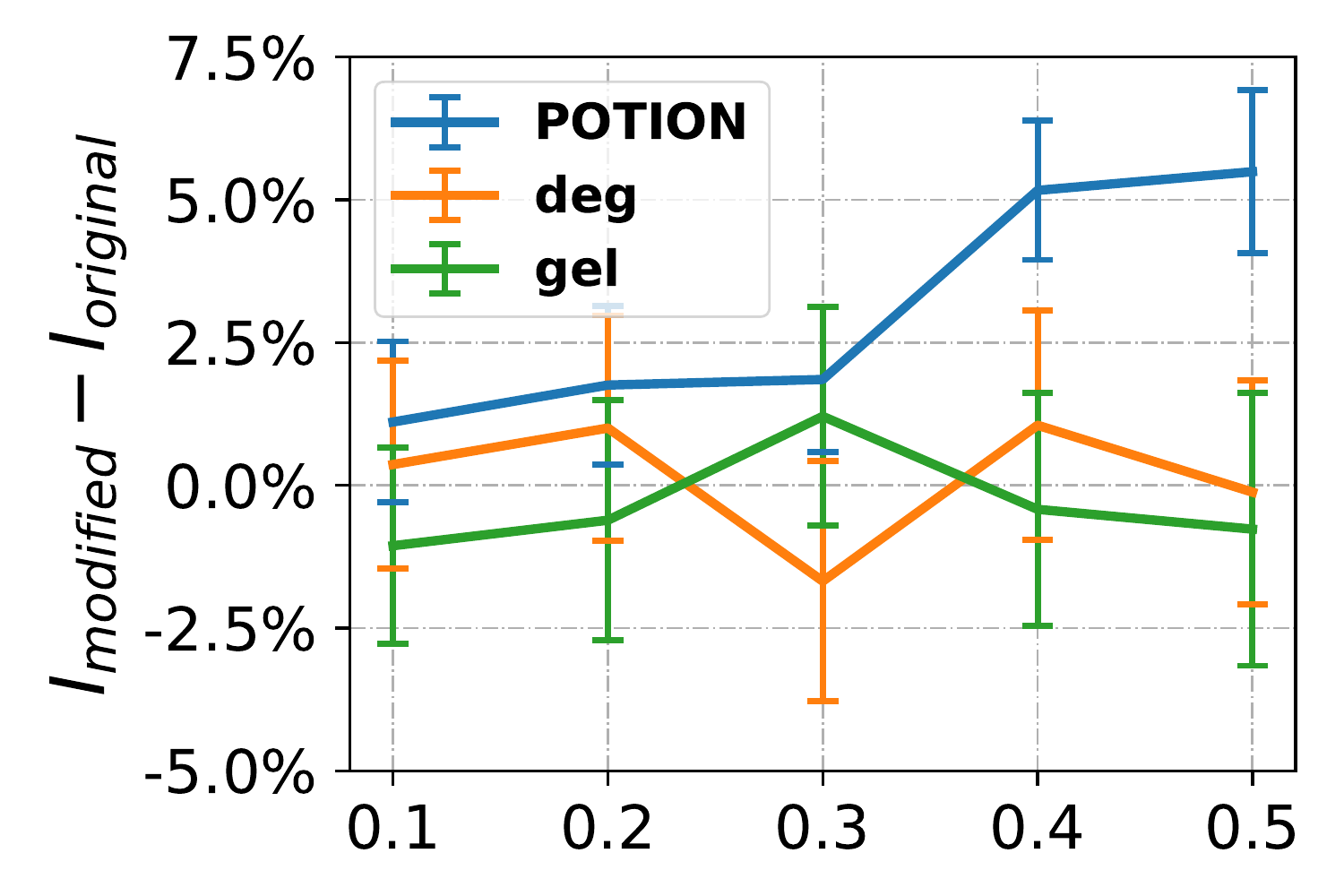} & \includegraphics[width=\FigSize]{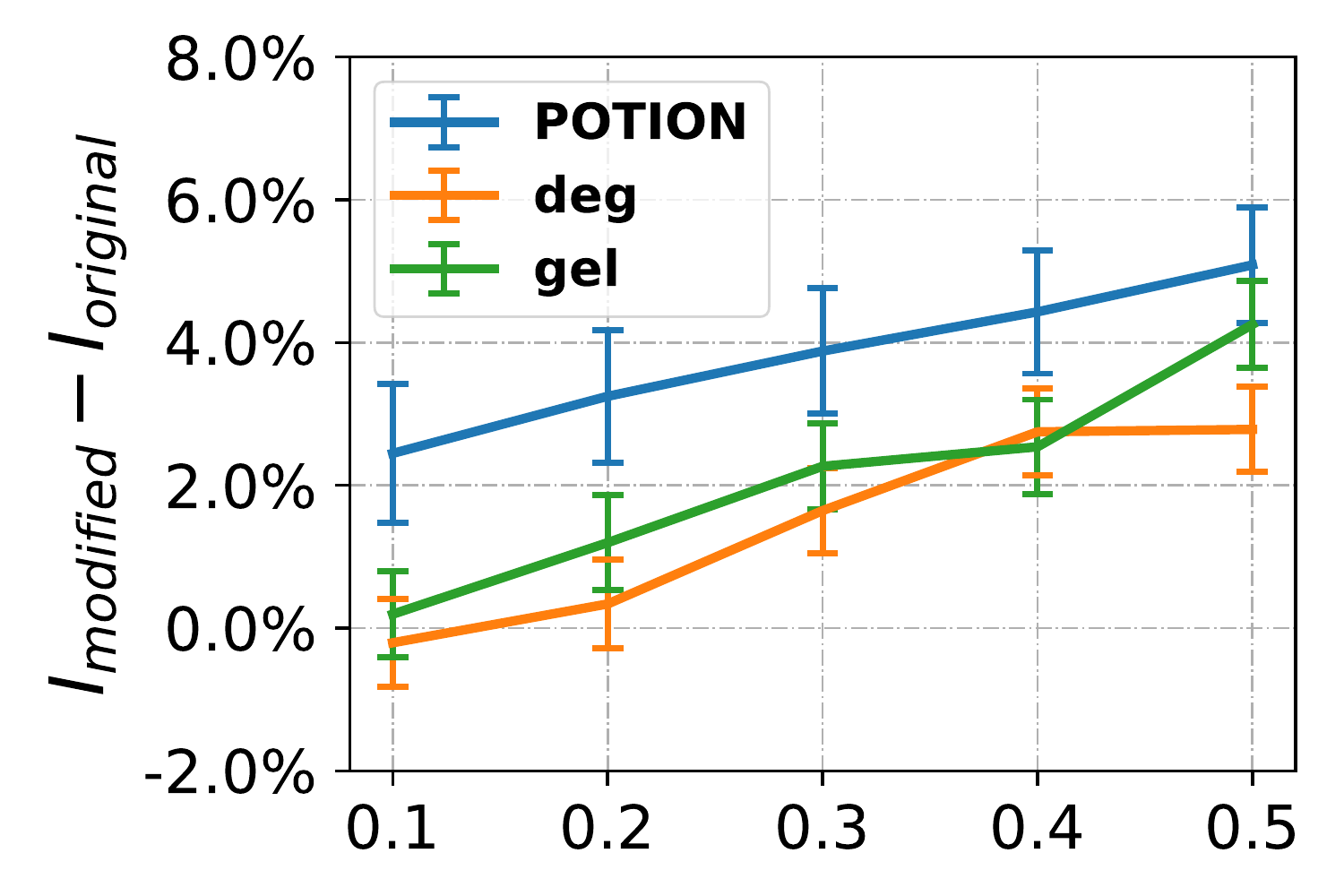} \\
%\includegraphics[width=\FigSize]{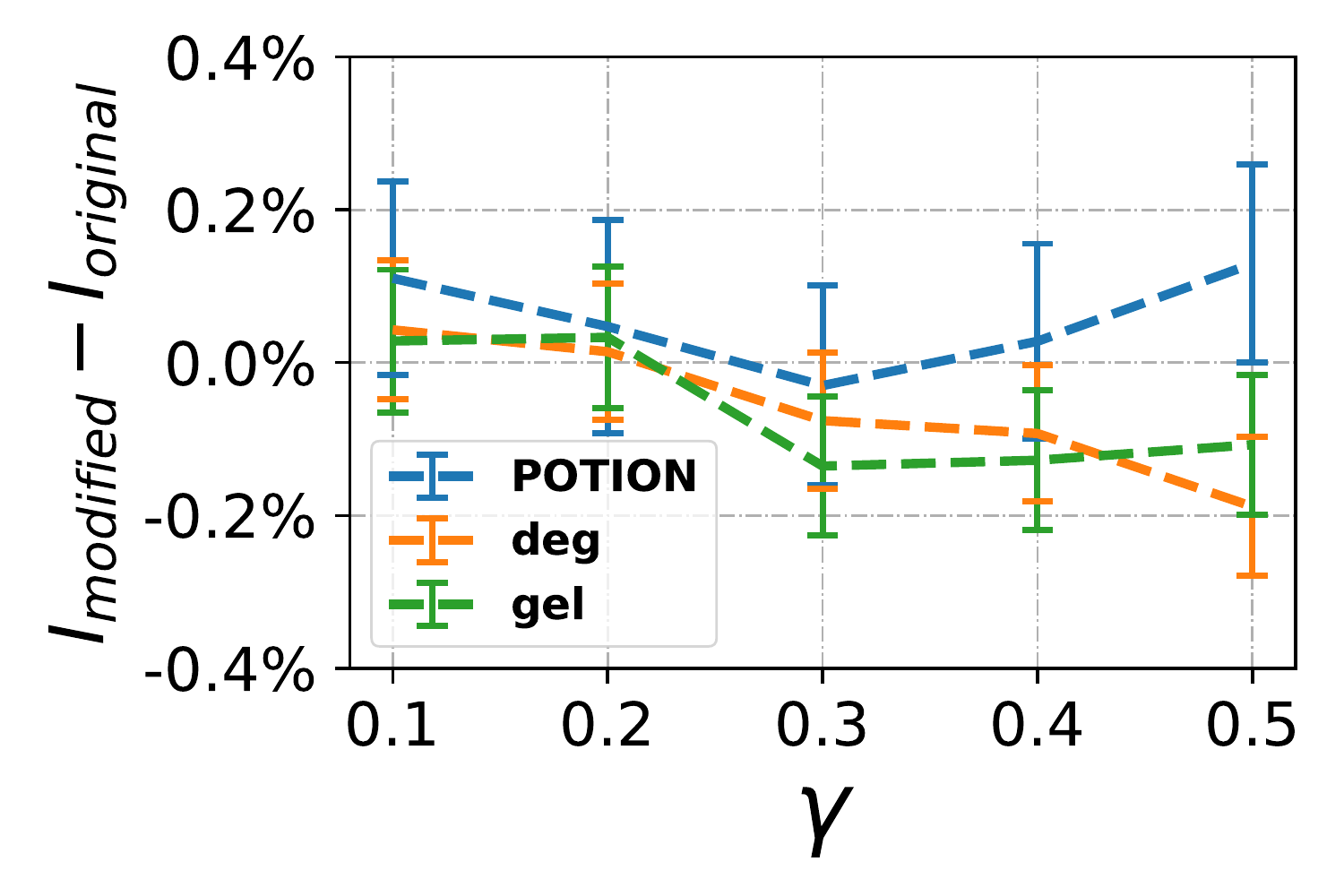} & \includegraphics[width=\FigSize]{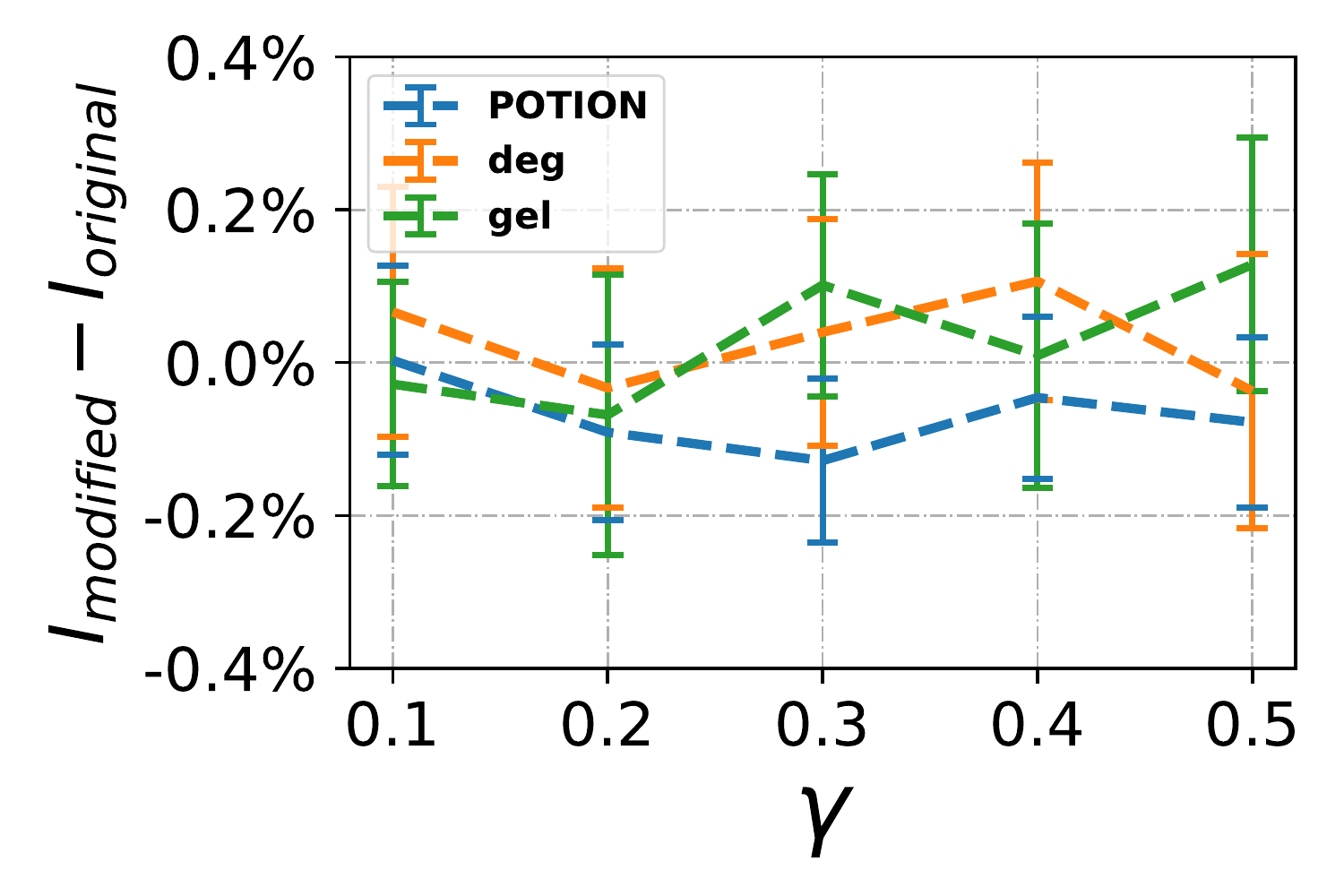} & \includegraphics[width=\FigSize]{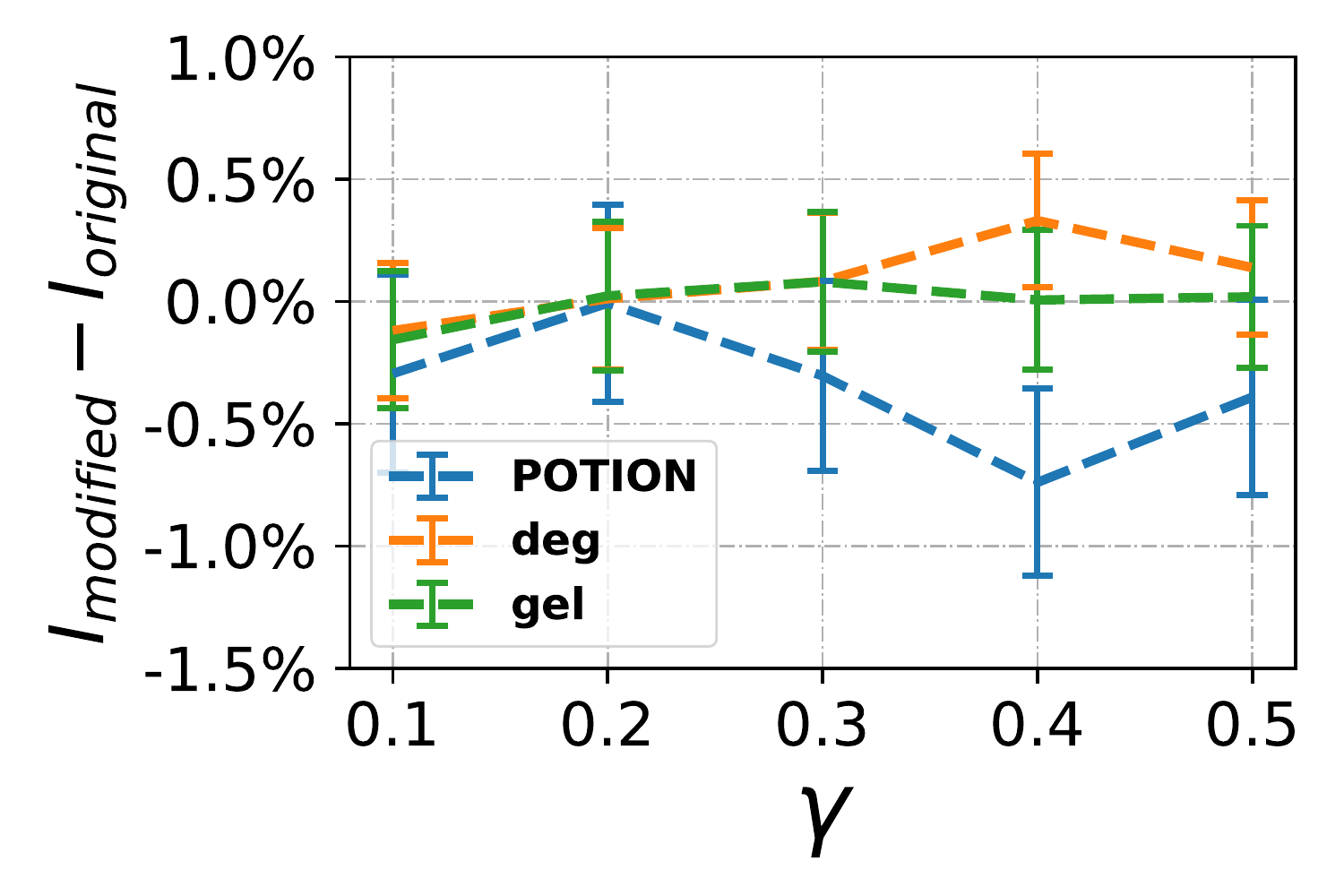} 
%\end{tabular}
\includegraphics[width=\columnwidth]{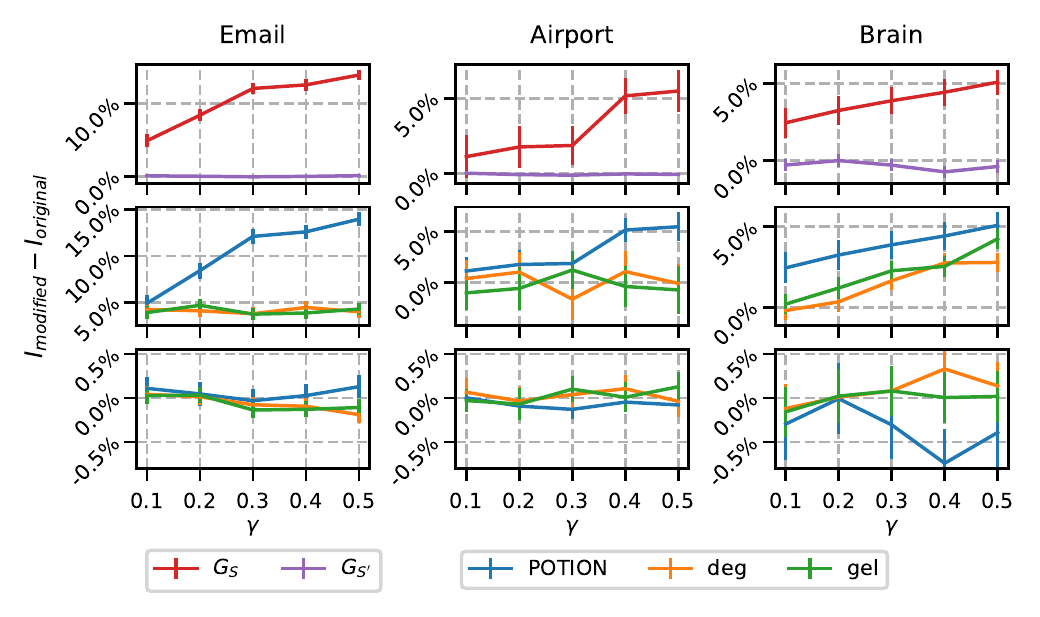} 
\caption{
\small
\ModelName effectively achieves targeted diffusion (\textbf{top}) in $G_S$ (red line) without affecting $G_{S'}$ (purple); higher is better.
Comparison against deg and gel baselines in $G_S$ (\textbf{middle}; higher is better) and $G_{S'}$ (\textbf{bottom}; lower is better).
}
\label{fig:comparison}
\end{figure}

\noindent{\bf Running Time:}
The running time of Algorithm~\ref{algo:grad_ascent} on the three real-world networks is showed in Figure~\ref{fig:runningTime}.
Each point in the figure is the average running time over 10 trials. 
Intuitively, as the budget $\gamma$ increases the attacker needs to search a larger space, therefore the running time increases.
The numbers of nodes and edges of the three networks are in Table~\ref{tab:real-stat}.

\begin{figure}[ht]
\def\FigSize{1.8in}
\centering
\small
\setlength{\tabcolsep}{0.1pt}
\begin{tabular}{c}
\includegraphics[width=\FigSize]{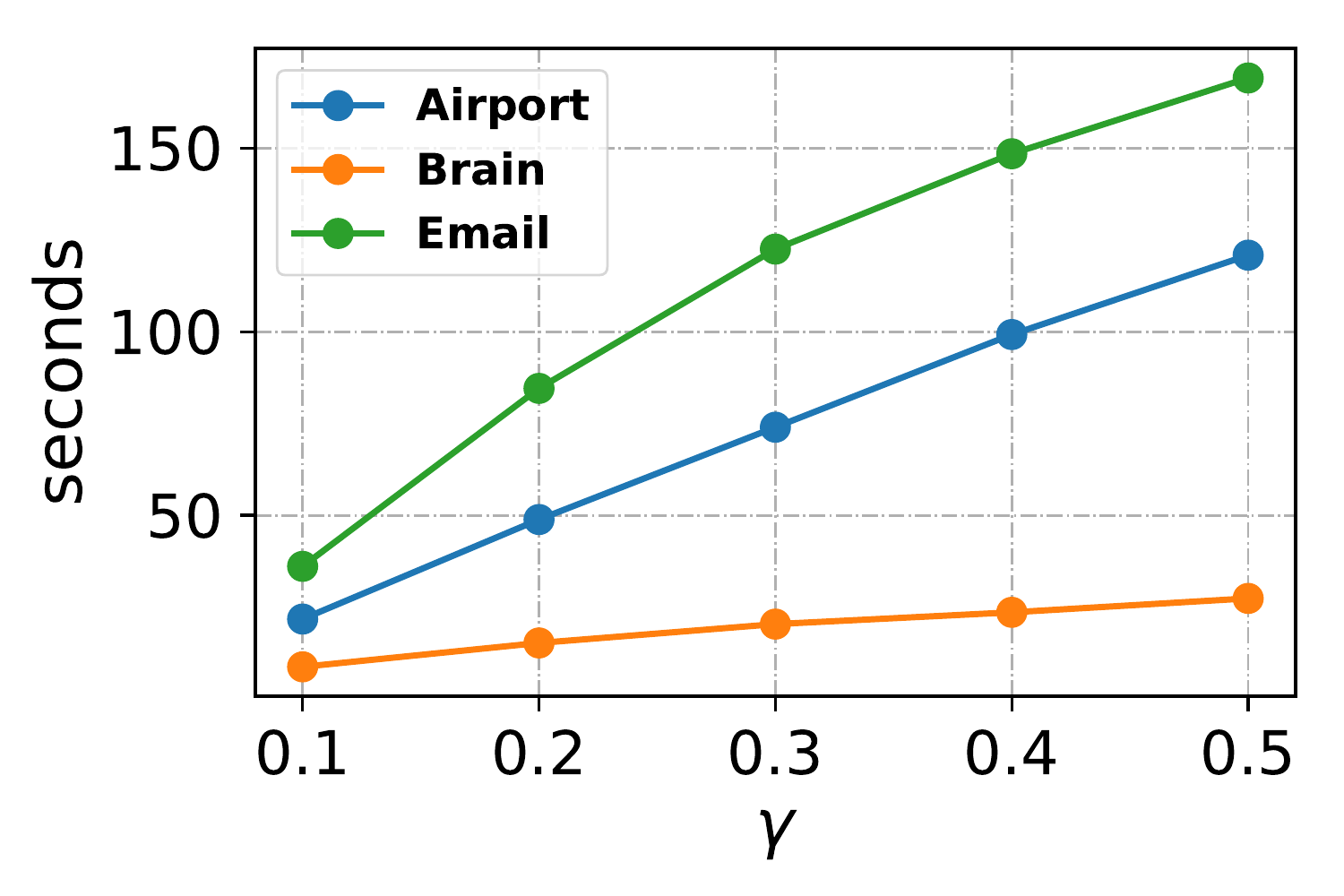}
\end{tabular}
\caption{Running time on the real-world networks.}
\label{fig:runningTime}
\end{figure}

\begin{small}
\begin{table}[h]
\centering
\begin{tabular}{@{}cccc@{}}
\toprule
        & Email & Airport & Brain \\ \midrule
\#nodes & 986   & 1572    & 638   \\
\#edges & 16064 & 17214   & 18625 \\ \bottomrule
\end{tabular}
\caption{Statistics of the real-world networks.}
\label{tab:real-stat}
\end{table}
\end{small}

\section{Conclusion}

Diffusion control on network has attracted much attention, however, most studies focus on diffusion over the entire network. 
We address the problem of \emph{targeted} diffusion attack on networks.
We present a combination of modeling and algorithmic advances to systematically address this problem.
On the modeling side, we present a novel model called \ModelName  that optimizes graph structure to affect such targeted diffusion attacks, which preserves structural properties of the graph. 
On the algorithmic side, we design an efficient algorithm named \AlgoName by leveraging Rayleigh quotients and pseudospectrum theory, which is scalable to real-world graphs. 
We also derive a condition to certify whether a network is robust against a broad class of targeted diffusion. 
Our experiments on both synthetic and real-world networks show that the model is highly effective in implementing the targeted diffusion attack.

% In this paper we propose the model POTION for targeted diffusion attack by optimizing graph structures.
% The attacker's targets---maximizing the impact on a targeted subgraph while limiting the impact on the rest of the graph---are encoded as the objective function of the model.
% An algorithm called POTION-ALG is proposed to solve the model, which leverages Rayleigh quotients and pseudospectrum theory.
% We also present a condition for certifying that a targeted subgraph is robust against such attack. 

% The effectiveness and efficiency of POTION and POTION-ALG are corroborated by experiments on both synthetic and real-world networks.
% In addition, we demonstrate the advantages of POTION by comparing against two baselines. 
% A running time analysis for POTION-ALG is provided

\section*{Acknowledgement}
SY and  YV were partially supported by the National Science Foundation (grants IIS-1903207 and IIS-1910392) and Army Research Office (grants W911NF1810208 and W911NF1910241). LT and TER were supported in part by the National Science Foundation (IIS-1741197) and by the Combat Capabilities Development Command Army Research Laboratory (under Cooperative Agreement Number W911NF-13-2-0045).
The authors would like to thank the anonymous reviewers and Chloe Wohlgemuth for their helpful comments.

% \section*{Broader Impact}
% \input{section/impact}

\bibliographystyle{plainnat}
\bibliography{main_abbrev}

\newpage
\section*{Appendix}

\def\thesection{\Alph{section}}
\setcounter{section}{0}

\section{Generalization to Other Diffusion Dynamics}\label{sec:gen}
In this section we discuss generalization of the targeted diffusion model, i.e., Eq.~\eqref{eq:model}, to other common diffusion dynamics.
The fundamental question is: does the heuristic encoded by the model apply to other scenarios with different diffusion dynamics (e.g., SIR or SEIR)?

First,  the feasible region of the  model is independent of the diffusion dynamics, as it is only related to the spectral properties of the underlying graph.
Thus,  the structural (i.e., spectra, degree sequence, and triangle distribution) preserving properties of the diffusion model generalize to other diffusion dynamics.
Next, recall that the objective function of the model is the following
\[
\small
\alpha_1 \lambda_1(\tilde{\Adj}_\mathcal{S}) + \alpha_2 \sigma(\mathcal{S})  + \alpha_3 \phi(\mathcal{S}).
\]
The third term is the normalized cut, which only depends on structural properties of the underlying graph, so it generalizes to any other diffusion dynamics.
The first term also generalizes to many common diffusion dynamics, including SIR and SEIR, as their epidemic thresholds are known to also be determined by the largest eigenvalue of the underlying adjacency matrix ~\cite{prakash2012threshold}.
The only exception is the second term $\sigma(\mathcal{S})$, that is, limiting the impact on non-targeted subset through maximizing the eigencentrality of the targeted subset.
This is because the rationale of maximizing $\sigma(\mathcal{S})$ depends on the steady state of the diffusion dynamics.
Here, the steady state is where in the long run a constant (in average) fraction of infected nodes exist. 
However, both SIR and SEIR have been shown without a steady state, as in the long run all nodes will be in the recovered state (i.e., immune to the diffusion)~\cite{pastor2015epidemic}.

Finally, the certified robustness in Section~\ref{sec:min_budget} generalizes to other diffusion dynamics, as its proof only depends on the spectral properties of the underlying graph.

\section{Degree Sequence and Triangles}\label{app:deg_triangle}

We now show that satisfying the restrictions Eq~\eqref{eq:budget} implies that certain structural properties of the graph will be perturbed by only a small amount. 

Indeed, the principal's action has mild impact on the degree sequence of $G$.
Let $\DegDist=\Adj \bm{1}$ be the vector whose $i$-th entry is the degree of the $i$-th node in the original graph, and similarly, let $\TildeDegDist=\TildeAdj \bm{1}$ be the degree sequence after the perturbation.

\begin{proposition}\label{prop:deg_bound}
\small
The degree sequence of $G$ before and after the perturbation satisfies:
\begin{equation}
\norm{\TildeDegDist - \DegDist}_2 \le \sqrt{n}\epsilon.
\end{equation}
\end{proposition}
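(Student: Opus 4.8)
The plan is to reduce the claim to the operator-norm (spectral-norm) bound on the perturbation that the budget constraint already guarantees. First I would write the difference of degree sequences directly in terms of the perturbation matrix: since $\DegDist = \Adj\bm{1}$ and $\TildeDegDist = \TildeAdj\bm{1}$, subtracting gives $\TildeDegDist - \DegDist = (\TildeAdj - \Adj)\bm{1} = \Perturb\bm{1}$, where $\bm{1}$ is the all-ones vector.

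Next I would apply the defining property of the spectral norm as an operator norm, namely $\norm{\Perturb\bm{x}}_2 \le \norm{\Perturb}_2\,\norm{\bm{x}}_2$ for every vector $\bm{x}$. Taking $\bm{x} = \bm{1}$ and using $\norm{\bm{1}}_2 = \sqrt{n}$ yields $\norm{\TildeDegDist - \DegDist}_2 \le \sqrt{n}\,\norm{\Perturb}_2$.

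Finally I would invoke the equivalence from pseudospectrum theory established in Eq.~\eqref{eq:budget-pseudo}: any feasible $\TildeAdj$ satisfies the eigenvalue constraints of Eq.~\eqref{eq:budget} if and only if $\norm{\Perturb}_2 = \norm{\TildeAdj - \Adj}_2 \le \epsilon$. Substituting this into the previous inequality gives $\norm{\TildeDegDist - \DegDist}_2 \le \sqrt{n}\,\epsilon$, as claimed.

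There is no substantial obstacle here; the whole argument is essentially a one-line consequence of the submultiplicativity of the spectral norm together with the budget constraint. The only point requiring care is to make the passage to $\norm{\Perturb}_2$ explicit by citing Eq.~\eqref{eq:budget-pseudo}, rather than trying to work from the per-eigenvalue constraints $|\lambda_i(\TildeAdj) - \lambda_i(\Adj)| \le \epsilon$ directly, since it is precisely the spectral-norm bound on $\Perturb$ that the operator-norm inequality consumes.
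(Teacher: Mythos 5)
Your argument is correct and is essentially the same as the paper's proof: both write $\TildeDegDist - \DegDist = \Perturb\bm{1}$, bound this by $\sqrt{n}\,\norm{\Perturb}_2$ via the operator-norm characterization of the spectral norm, and then use the equivalence of the eigenvalue budget constraint with $\norm{\Perturb}_2 \le \epsilon$. Your explicit citation of Eq.~\eqref{eq:budget-pseudo} for that last step is a minor presentational improvement over the paper, which leaves it implicit.
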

\begin{proof}
\small
\begin{equation}
    \begin{aligned}
        \norm{\TildeDegDist - \DegDist}_2 = \norm{\TildeAdj \bm{1} - \Adj \bm{1}}_2 \overset{(a)}{=} \norm{\bm{\Delta} \bm{1}}_2 & = \norm{\bm{\Delta}(1/\sqrt{n})\bm{1} \sqrt{n}}_2 \\
        & \le \sqrt{n} \max_{\norm{x}=1}\norm{\bm{\Delta} \bm{x}}_2 \\
        & \overset{(b)}{=} \sqrt{n} \norm{\bm{\Delta}}_2 \\
        & \le \sqrt{n} \epsilon,
    \end{aligned}   
\end{equation}
where $(a)$ is due to the fact that $\TildeAdj - \Adj = \bm{\Delta}$, and $(b)$ comes from the definition of spectral norm.% of $\bm{\Delta}$, that is $\norm{\bm{\Delta}}_2 := \max_{\norm{x}=1}\norm{\bm{\Delta} \bm{x}}_2$.
\end{proof}

A direct corollary of Proposition~\ref{prop:deg_bound} concerns the average degree of $G$.
\begin{corollary}\label{cor:avg-deg-bound}
\small
The average degree of $G$ after the perturbation is within $\epsilon$ of the average degree before the perturbation:
\begin{equation}
\left| d_{avg}(G; \TildeAdj) - d_{avg}(G; \Adj) \right| \le \epsilon.
\end{equation}
\end{corollary}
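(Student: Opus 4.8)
The plan is to derive the bound directly from Proposition~\ref{prop:deg_bound}, of which this statement is labelled a corollary. First I would rewrite the average degree as a normalized inner product: since $d_{avg}(G;\Adj)=\tfrac1n\bm 1^\top\DegDist$ and $d_{avg}(G;\TildeAdj)=\tfrac1n\bm 1^\top\TildeDegDist$, the quantity of interest is $d_{avg}(G;\TildeAdj)-d_{avg}(G;\Adj)=\tfrac1n\bm 1^\top(\TildeDegDist-\DegDist)$. This reduces the claim to controlling a single inner product between the all-ones vector and the degree-change vector.

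Next I would apply the Cauchy--Schwarz inequality together with $\norm{\bm 1}_2=\sqrt n$ to obtain $\abs{\bm 1^\top(\TildeDegDist-\DegDist)}\le\norm{\bm 1}_2\,\norm{\TildeDegDist-\DegDist}_2=\sqrt n\,\norm{\TildeDegDist-\DegDist}_2$. Feeding in Proposition~\ref{prop:deg_bound}, which already established $\norm{\TildeDegDist-\DegDist}_2\le\sqrt n\,\epsilon$, yields $\abs{\bm 1^\top(\TildeDegDist-\DegDist)}\le n\epsilon$. Dividing by $n$ gives $\abs{d_{avg}(G;\TildeAdj)-d_{avg}(G;\Adj)}\le\epsilon$, which is exactly the claim. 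The only bookkeeping worth watching is that two independent factors of $\sqrt n$ (one from $\norm{\bm 1}_2$, one from the Proposition) combine with the $\tfrac1n$ normalization to cancel cleanly and leave $\epsilon$.

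An even shorter route bypasses the Proposition: using $\TildeDegDist-\DegDist=\Perturb\bm 1$, the difference of averages equals the Rayleigh quotient $\bm 1^\top\Perturb\bm 1/(\bm 1^\top\bm 1)$ of the symmetric perturbation $\Perturb=\TildeAdj-\Adj$. Since the Rayleigh quotient of a symmetric matrix is bounded in absolute value by its spectral norm, and the budget constraint together with the pseudospectrum equivalence in Eq.~\eqref{eq:budget-pseudo} guarantees $\norm{\Perturb}_2\le\epsilon$, the bound follows immediately. I would nonetheless present the first route as the primary argument, since the statement is explicitly framed as a corollary of Proposition~\ref{prop:deg_bound}.

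There is no genuine obstacle here: the result is a one-line consequence of the already-proved degree-sequence bound, and the only point requiring care is the constant tracking described above. I would keep the write-up to two or three lines.
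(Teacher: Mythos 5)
Your proposal is correct and follows essentially the same route as the paper: express the average degree as $\tfrac{1}{n}\bm{1}^\top \DegDist$, apply Cauchy--Schwarz with $\norm{\bm{1}}_2=\sqrt{n}$, and invoke Proposition~\ref{prop:deg_bound} so that the two factors of $\sqrt{n}$ cancel against the $1/n$ normalization. The alternative Rayleigh-quotient argument you sketch is also valid, but the primary argument you give is exactly the paper's proof.
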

\begin{proof}
Note that $d_{avg}(\PerturG) = \frac{\bm{1}^\top \TildeDegDist}{n}$ and $d_{avg}(G) = \frac{\bm{1}^\top \DegDist}{n}$.
Thus we have:
\begin{equation}
\begin{aligned}
\left| \bm{1}^\top \TildeDegDist/n -  \bm{1}^\top \DegDist / n \right| & = (1/n)\left| \bm{1}^\top(\TildeDegDist - \DegDist) \right| \\
& \le (1/n) \norm{\bm{1}}_2 \cdot \norm{\TildeDegDist - \DegDist}_2 \\
& = \epsilon.
\end{aligned}
\end{equation}
\end{proof}

Next, we perform a similar analysis for the number of triangles before and after the perturbation.
\begin{proposition}\label{prop:triangles}
\small
Assume $G$ is unweighted with $m$ edges and $T$ triangles. Suppose the number of triangles after the perturbation is $\tilde{T}$. Then we have
\begin{equation}
    |T - \tilde{T}| \leq \epsilon m,
\end{equation}
where the estimate is correct up to a first order approximation.
\end{proposition}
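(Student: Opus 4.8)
The plan is to express the triangle count spectrally and then transfer the per-eigenvalue budget constraint of Eq.~\eqref{eq:budget} to the triangle count through a first-order expansion of the cube. Recall that for an unweighted graph the number of triangles equals $T = \tfrac{1}{6}\Tr(\Adj^3) = \tfrac{1}{6}\sum_{i=1}^{n} \lambda_i(\Adj)^3$, and likewise $\tilde{T} = \tfrac{1}{6}\sum_{i=1}^{n} \lambda_i(\TildeAdj)^3$, since each triangle contributes six closed walks of length three along the diagonal of the cubed adjacency matrix. This rewrites both quantities purely in terms of the spectra, which is exactly the object that the feasibility constraint controls.

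First I would subtract these expressions to get $T - \tilde{T} = \tfrac{1}{6}\sum_i \big(\lambda_i(\Adj)^3 - \lambda_i(\TildeAdj)^3\big)$. Treating each $\lambda_i(\TildeAdj)$ as a small perturbation of $\lambda_i(\Adj)$ --- which is precisely what Eq.~\eqref{eq:budget} guarantees, since it gives $|\lambda_i(\TildeAdj) - \lambda_i(\Adj)| \le \epsilon$ for every $i$ with the eigenvalues paired in descending order --- a first-order Taylor expansion of $x \mapsto x^3$ yields $\lambda_i(\TildeAdj)^3 - \lambda_i(\Adj)^3 \approx 3\lambda_i(\Adj)^2\big(\lambda_i(\TildeAdj) - \lambda_i(\Adj)\big)$. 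This is the step where the ``correct up to a first order approximation'' qualifier in the statement enters.

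Next I would apply the triangle inequality together with the budget bound to obtain $|T - \tilde{T}| \le \tfrac{1}{2}\sum_i \lambda_i(\Adj)^2\,|\lambda_i(\TildeAdj) - \lambda_i(\Adj)| \le \tfrac{\epsilon}{2}\sum_i \lambda_i(\Adj)^2$. The final step is to identify $\sum_i \lambda_i(\Adj)^2 = \Tr(\Adj^2) = \sum_{i,j} A_{ij}^2 = 2m$, where the last equality uses that $\Adj$ is a symmetric $0/1$ matrix (so $A_{ij}^2 = A_{ij}$) with exactly $2m$ nonzero entries. Substituting gives $|T - \tilde{T}| \le \tfrac{\epsilon}{2}\cdot 2m = \epsilon m$, as claimed.

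The main obstacle is making the per-index pairing of eigenvalues rigorous in the first-order step: the clean bound depends on matching $\lambda_i(\TildeAdj)$ to $\lambda_i(\Adj)$ in the same descending order, which is exactly how Eq.~\eqref{eq:budget} is phrased, so the pairing is inherited for free rather than requiring a separate perturbation argument. The neglected error is genuinely second order, of size $O\big(\sum_i \lambda_i(\Adj)\,(\lambda_i(\TildeAdj)-\lambda_i(\Adj))^2\big) = O(\epsilon^2)$ times a spectral quantity, which is consistent with the first-order qualifier; I would not attempt to control it exactly but simply note that it is dominated by the quadratic term and hence negligible relative to the stated bound.
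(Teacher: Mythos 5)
Your proposal is correct and follows essentially the same route as the paper's proof: both expand $\Tr(\TildeAdj^3)$ spectrally, perform a first-order expansion of the cube using the per-index eigenvalue bound from Eq.~\eqref{eq:budget} (the paper writes $\lambda_i(\TildeAdj)=\lambda_i(\Adj)+\eta_i\epsilon$ with $\eta_i\in[-1,1]$, which is the same device), and conclude via $\sum_i\lambda_i(\Adj)^2=\Tr(\Adj^2)=2m$. No substantive differences.
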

\begin{proof}
\small
Since $G$ is unweighted, we have $T = \Tr \left({\Adj}^3 \right) / 6$, where $\Tr$ is the trace operator.
%Recall that $\Tr(X) = \sum_i^n \lambda_i(X)$, where $\lambda_i(X)$ are the eigenvalues of the $n \times x$ matrix $X$, counted with multiplicity.
%Recall also that $\lambda_i(X)^k$ is an eigenvalue of $X^k$.
The restrictions Eq.~\eqref{eq:budget} guarantee that we can write $\lambda_i ( \TildeAdj ) = \lambda_i\left( \Adj \right) + \eta_i \epsilon$, where $\eta_i \in [-1 ,1]$. Thus,
\begin{equation}
\begin{aligned}
6 \tilde{T} = \sum_i^n \lambda_i( \TildeAdj^3 ) = \sum_i^n \lambda_i( \TildeAdj )^3 = \sum_i^n \left( \lambda_i( \Adj ) + \eta_i \epsilon \right)^3.
\end{aligned}
\end{equation}
Expanding the cube and neglecting the terms of higher order in $\eta_i$, we have
\begin{equation}
\begin{aligned}
6 \tilde{T} \approx \sum_i^n \lambda_i(\Adj)^3 + 3 \epsilon \sum_i^n \lambda_i(\Adj)^2 \eta_i &= 6T + 3 \epsilon \sum_i^n \lambda_i(\Adj)^2 \eta_i.
\end{aligned}
\end{equation}
And thus
\begin{equation}
\begin{aligned}
2 |\tilde{T} - T| \approx \epsilon \left| \sum_i^n \lambda_i(\Adj)^2 \eta_i \right| & \leq \epsilon \sum_i^n \left|\lambda_i(\Adj)^2 \eta_i \right|  \\
        & \leq  \epsilon \sum_i^n \lambda_i(\Adj)^2 \\
        & = \epsilon \Tr(\Adj^2).
\end{aligned}
\end{equation}
Since $\Adj$ is symmetric and binary, we have $\Tr(\Adj^2) = 2m$, where $m$ is the number of edges in $G$.
\end{proof}

In what follows we present experimental results to show that the spectra and degree sequences do not change a lot due to the targeted diffusion.
The spectra and degree sequences of $G$ and $\PerturG$ are showed in Figure~\ref{fig:spec_deg}, in which the top row presents the spectra with the eigenvalues as ranked in descending order, and the bottom row presents the degree sequences.
The three columns (from left to right) correspond to the email network, the airport network, and the brain network, respectively.
The parameter $\gamma$ is set to $0.5$, the most powerful principal.

\noindent{\bf The Email Network:} From Figure~\ref{fig:spec_deg} (top row), the eigenvalues with large value admit the largest deviation, while the bottom row of that figure shows that the degree sequence is not significantly affected by the targeted diffusion.
In fact, the change to the original degree sequence is mild, and a student's t-test cannot differentiate the modified degree sequence from the original one (p-value=0.081).

\noindent{\bf The Airport and The Brain Networks: }
\syedit{The airport network is directed and each node pair is associated with two edges in opposite directions. 
We convert the network to an undirected one by substituting an undirected edge for the two edges.
The weight of the undirected edge is the sum of the weights on the two edges.  
}
The spectra and degree sequences of $G$ and $\PerturG$ are showed in the last two columns of Figure~\ref{fig:spec_deg}.
As we can see from Figure~\ref{fig:spec_deg} (top row), the graph spectrum is again nearly preserved, except the eigenvalues with small values admit some deviation.
Similarly, the modified degree sequences cannot be differentiated from the original one by student's t-tests (airport: p-value=0.4969, brain: p-value=0.9919).

\begin{figure}[h]
\def\FigSize{1.1in}
\centering
\setlength{\tabcolsep}{0.01pt}
\begin{tabular}{ccc}
\includegraphics[width=\FigSize]{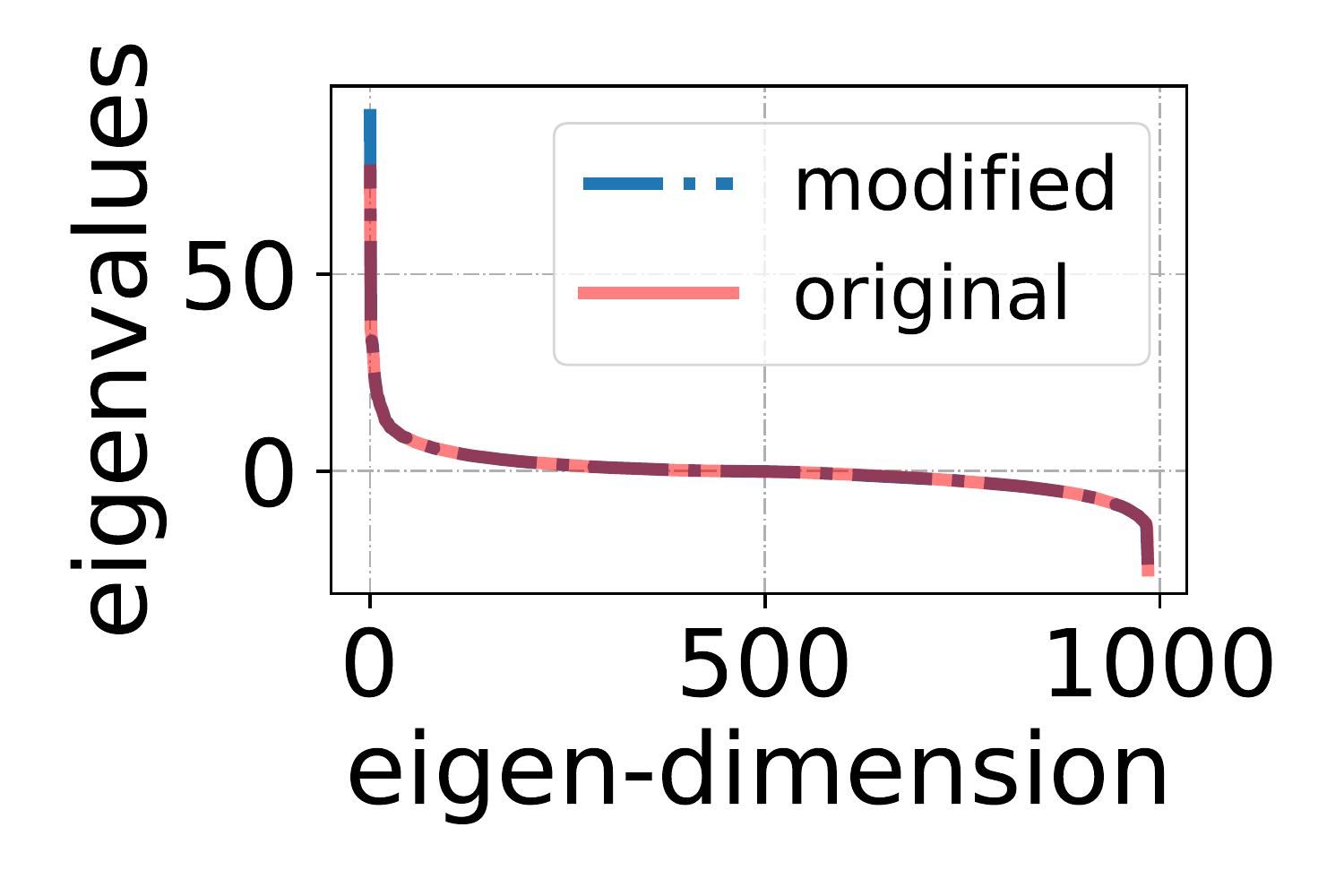} &
\includegraphics[width=\FigSize]{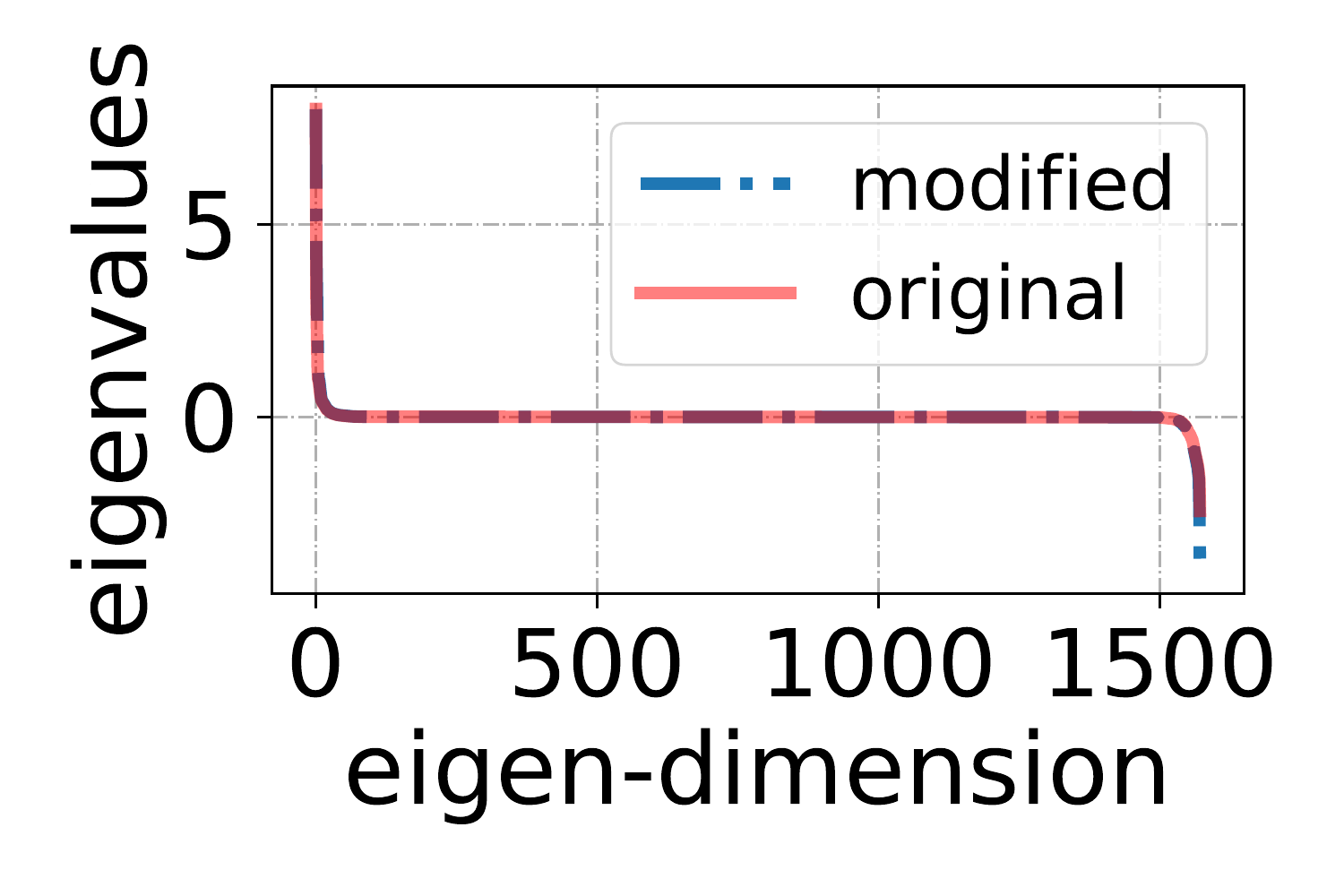} & 
\includegraphics[width=\FigSize]{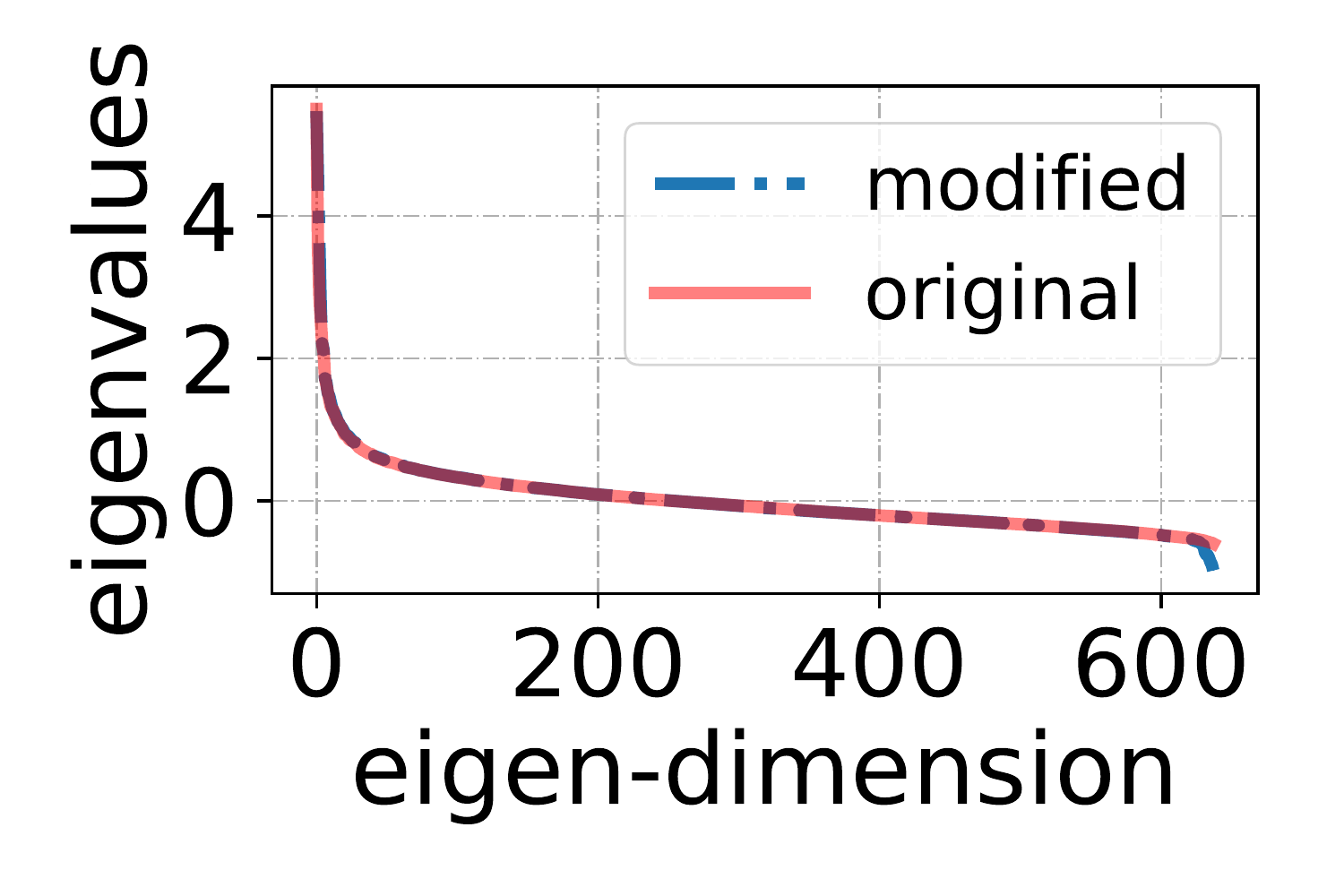} \\
\includegraphics[width=\FigSize]{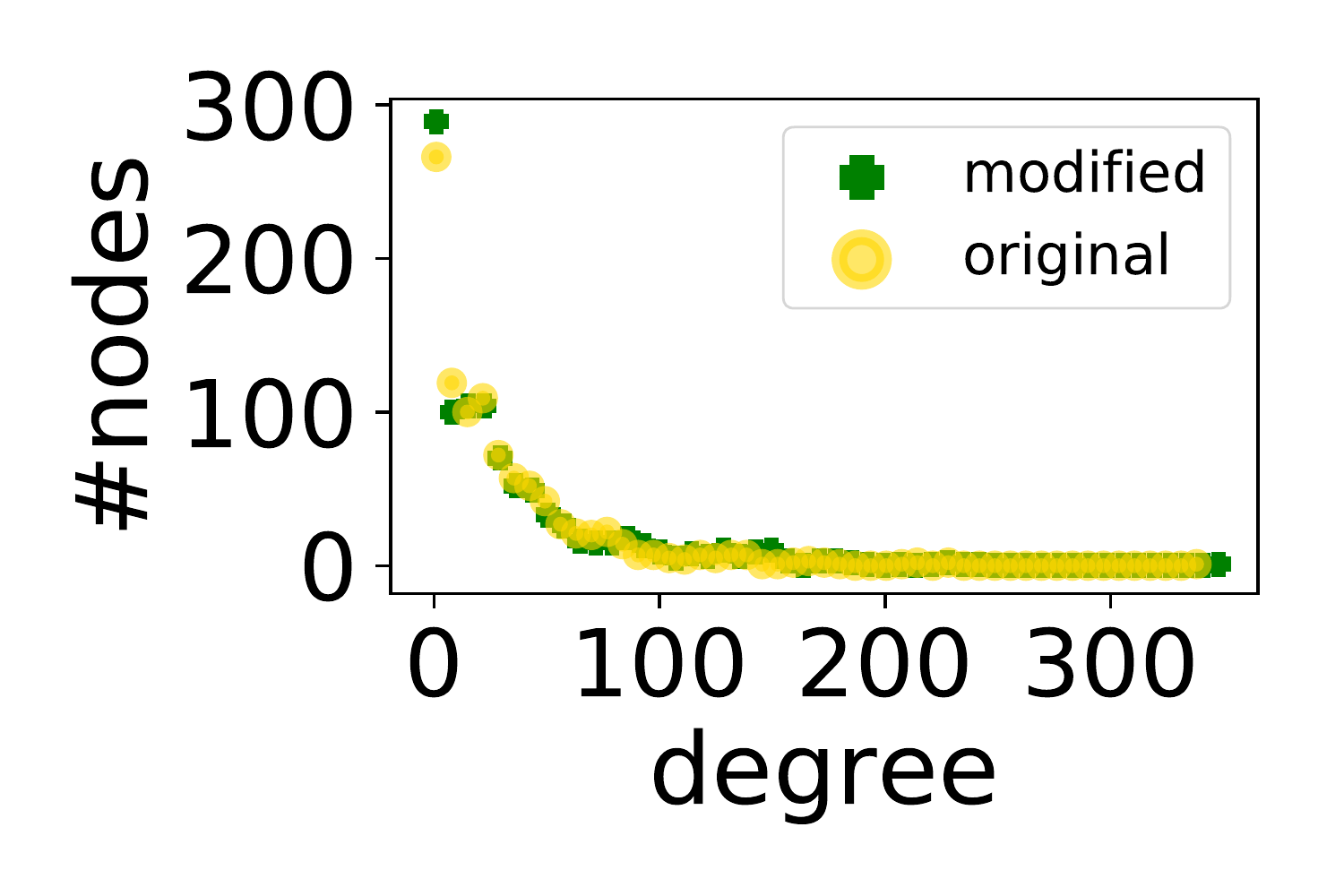}  &
\includegraphics[width=\FigSize]{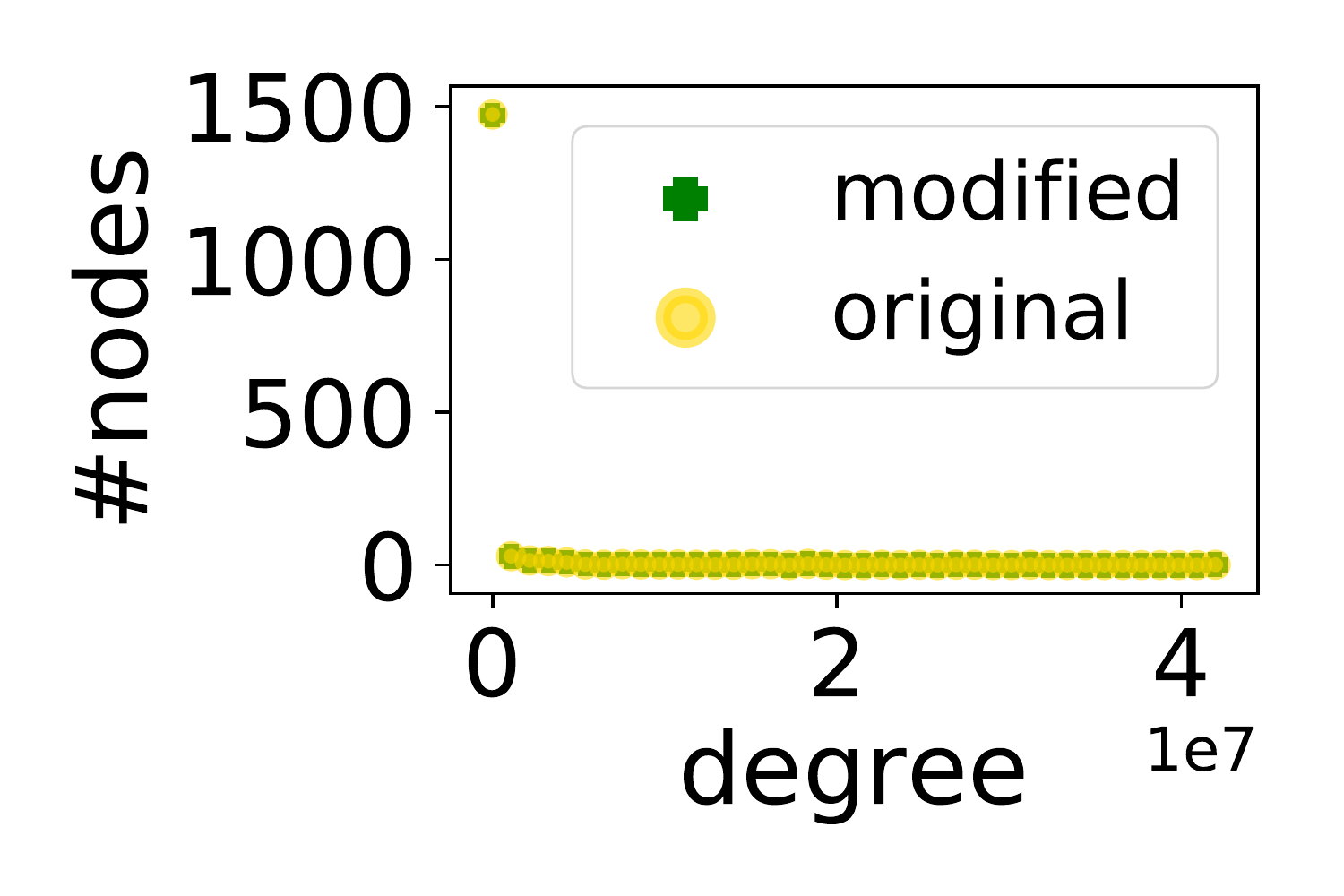} &
\includegraphics[width=\FigSize]{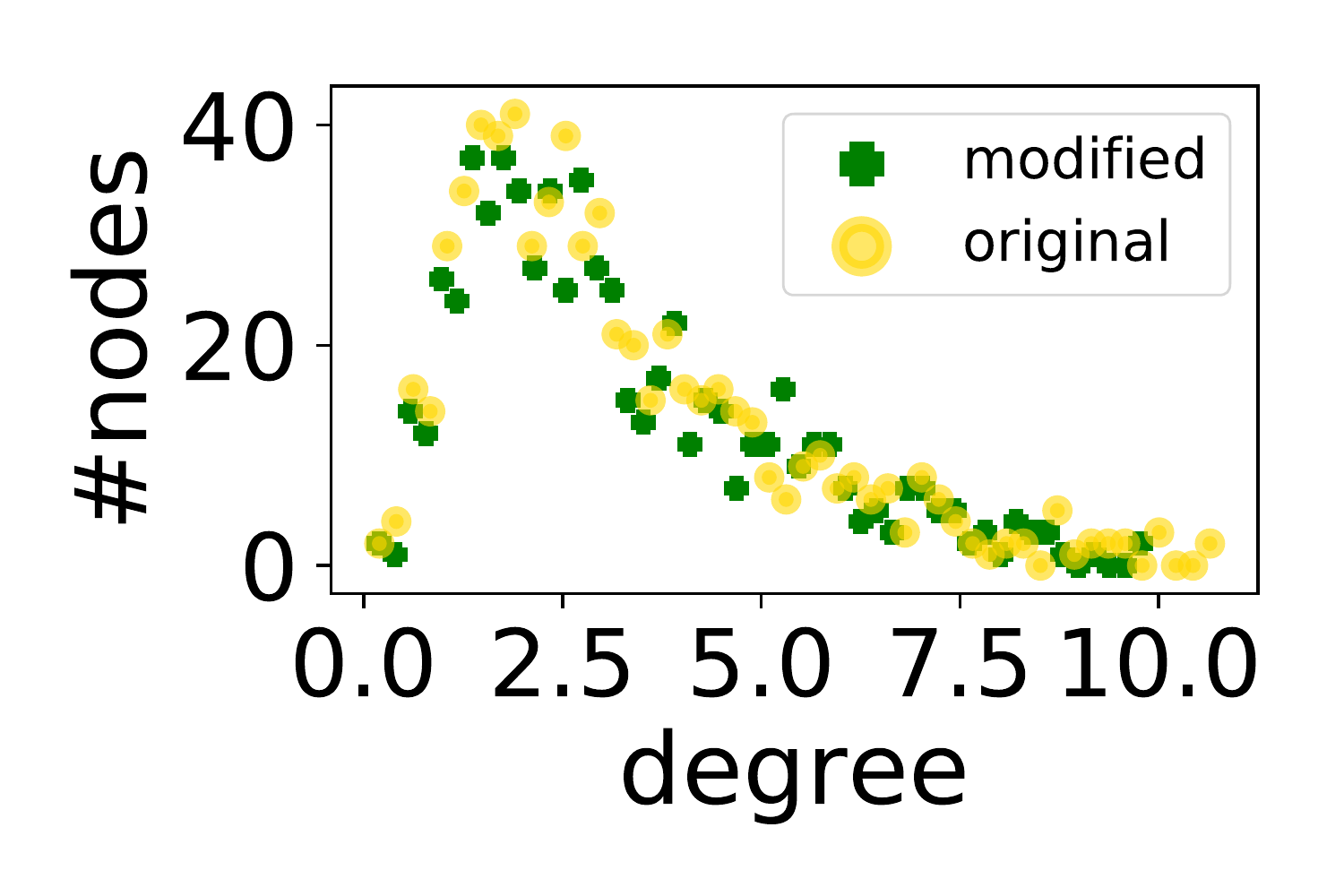} 
\end{tabular}
\vspace{-0.15in}
\caption{Spectra (top) and degree sequences (bottom) for the \textbf{left:} the email network; \textbf{middle:} the airport network; and \textbf{right:} the brain network. The hyper-parameters are set to $\alpha_1=\alpha_2=\alpha_3=1/3$.}
\label{fig:spec_deg}
\end{figure}

\section{Proof of Theorem~\ref{th:cert_robust}}\label{app:cert_robust}
\begin{proof}
\small
From the discussion in the main paper, an instance \texttt{TargetDiff}($\mathcal{S}, G, \epsilon$) can be encoded by the following meta model:
\begin{equation}
\begin{aligned}
& \max_{\TildeAdj} & &  I(\PerturGS) - I(\GS) \\
&s.t.     &    &    \TildeAdj \in \mathcal{P}.
\end{aligned}
\end{equation}
As discussed in~\cite{van2008virus} (see Section IV.B), computing the exact value of $I(\GS)$ is intractable, since the exact computation of $\pi_i$ is challenging.
Our model Eq.~\eqref{eq:model} can be thought of as a tractable proxy to the meta model. 
An estimation to $I(\GS)$ is given in \cite{van2008virus}, i.e., $\hat{I}(\GS) = \sum_{i \in \mathcal{S}}^{}{1 - \delta / (\beta d_i)}$, where $d_i$ is the degree of node $i$ in $G$. 
The estimator works in the region $\delta / \beta \le d_{min}$, where $d_{min}$ is the minimum degree of $G$.
When the estimation is reasonably good, that is $| \hat{I}(\GS) - I(\GS) | \le \tau$, we have the following relation:
\begin{equation}
    \begin{aligned}
I(\PerturGS) - I(\GS) > 2\tau & \implies (\hat{I}(\PerturGS) + \tau) - (\hat{I}(\GS) - \tau) > 2\tau \\
                              & \implies \hat{I}(\PerturGS) - \hat{I}(\GS) > 0.
    \end{aligned}
\end{equation}
Thus, in what follows we focus on deriving the necessary condition for $\hat{I}(\PerturGS) - \hat{I}(\GS) > 0$, which directly translates to the necessary condition for $I(\PerturGS) - I(\GS) > 2\tau$.

Suppose there exists an adjacency matrix $\TildeAdj^* \in \mathcal{P}$ such that $I(\PerturGS) - I(\GS) > 2\tau$.
This indicates that the corresponding instance \texttt{TargetDiff}($\mathcal{S}, G, \epsilon$) is successful.
Consequently, it follows that $\hat{I}(\PerturGS) - \hat{I}(\GS) > 0$.
Recall that $\hat{I}(\PerturGS) - \hat{I}(\GS)= \frac{\delta}{\beta} \sum_{i \in \mathcal{S}}^{}{( \frac{1}{d_i} - \frac{1}{\tilde{d}_i} )}$.
Let $\TildeDegDistS \in \mathbb{R}^{|\mathcal{S}|}$ represent the degree sequence of  nodes in $\mathcal{S}$.
Due to Proposition~\ref{prop:deg_bound} we have $\norm{  \TildeDegDistS - \bm{d}_\mathcal{S} }_2^2 \le \norm{\TildeDegDist - \bm{d}}_2^2 \le n \epsilon^2$.
Consider the optimization problem in Eq.~\eqref{eq:opt_robust}, where the objective function is $\hat{I}(\PerturGS) - \hat{I}(\GS)$ (up to a multiplicative factor).
The last constraint follows from the assumption that for a successful instance the degrees of nodes in $\mathcal{S}$ increase.
The fact that $\hat{I}(\PerturGS) - \hat{I}(\GS) > 0$ implies that the optimal solution of Eq.~\eqref{eq:opt_robust} exists and the associated objective value is greater than zero.

\begin{equation}\label{eq:opt_robust}
\begin{aligned}
& \max_{\TildeDegDistS} & &    \sum_{i \in \mathcal{S}}^{}{\left( \frac{1}{d_i} - \frac{1}{\tilde{d}_i} \right)} \\
&s.t.    &    &    \norm{  \TildeDegDistS - \bm{d}_\mathcal{S} }_2^2 \le n \epsilon^2 \\
&         &    &  \TildeDegDistS \succeq \bm{d}_\mathcal{S}.
\end{aligned}
\end{equation}

Denote the feasible region of the above optimization problem as $\mathcal{M}$.
Note that $\mathcal{M}$ is a convex set since it is the intersection of two convex sets.

The objective function is concave in $\TildeDegDistS$, since it is twice differentiable on the feasible region $\mathcal{M}$ and the Hessian matrix is negative definite; the Hessian matrix is a diagonal matrix with the $i$-th diagonal element being $-2 / \tilde{d}_i^3$. 
Thus, Eq.~\eqref{eq:opt_robust} is a convex optimization problem.
Note that the Slater's condition is satisfied (e.g., with $\TildeDegDist_\mathcal{S} = \bm{d}_\mathcal{S}$), which indicates that strong duality holds. 
Thus, the KKT conditions are satisfied at \emph{any}  primal and dual optimal solutions. 

For convenience, in what follows we use $d_i$ (resp. $\tilde{d}_i$) to represent the degree of a node $i \in \mathcal{S}$ before (resp. after) graph modification.
The Lagrange function of Eq.~\eqref{eq:opt_robust} is:
\begin{equation}
\begin{aligned}
\mathcal{L}(\TildeDegDistS, \lambda, \bm{\beta}) & = \sum_{i \in \mathcal{S}}^{}{\left( \frac{1}{d_i} - \frac{1}{\tilde{d}_i} \right)} + \lambda \left( n\epsilon^2 - \norm{  \TildeDegDistS - \bm{d}_\mathcal{S} }_2^2 \right)  \\
& + \bm{\beta}^\top \left( \TildeDegDistS - \bm{d}_\mathcal{S} \right),
\end{aligned}
\end{equation}
where $\lambda \ge 0$ and $\bm{\beta} \succeq \bm{0}$ are Lagrangian multipliers.
Recall that the degrees of nodes in $\mathcal{S}$ are  increased, i.e., $\tilde{d}_i \ge d_i$ for all $i \in \mathcal{S}$.
Note that for a node $i \in \mathcal{S}$ such that $\tilde{d}_i = d_i$, we let the corresponding $\beta_i=0$.
Thus, by complementary slackness, we have $\beta_i=0$ for all $i \in \mathcal{S}$.
The gradient of $\mathcal{L}(\TildeDegDistS, \lambda, \bm{\beta})$ w.r.t. $\tilde{d}_i$ becomes:
\[
\frac{\partial \mathcal{L}}{\partial \tilde{d}_i} = \frac{1}{\tilde{d}^2_i} - 2\lambda \tilde{d}_i + \beta_i = \frac{1}{\tilde{d}^2_i} - 2\lambda \tilde{d}_i.
\]
Setting the gradient to zero leads to:
\[
\tilde{d}_i = \left( \frac{1}{2\lambda} \right)^{1/3}, \forall i \in \mathcal{S}.
\]
Since the optimal solution exists, we have $\lambda \ne 0$. 
By complementary slackness we have $\lambda \left( n\epsilon^2 - \norm{  \TildeDegDistS - \bm{d}_\mathcal{S} }_2^2 \right)=0$, which indicates:
\[
n \epsilon^2 = \norm{  \TildeDegDistS - \bm{d}_\mathcal{S} }_2^2. 
\]
Expand the above equation:
\begin{equation}
\begin{aligned}
   n \epsilon^2 & =  \norm{  \TildeDegDistS - \bm{d}_\mathcal{S} }_2^2 \\
                            & = \sum_{i \in \mathcal{S}}^{}{\left( \tilde{d}_i - d_i \right)^2} \\
                            & = \sum_{i \in \mathcal{S}}^{}{\left( \left(\frac{1}{2\lambda}\right)^{1/3} - d_i \right)^2}.
\end{aligned}
\end{equation}
% Define a constant $K = \sum_{i \in \mathcal{S}_2}^{}{\left( \left(\frac{1}{2\lambda}\right)^{1/3} - d_i \right)^2} \ge 0$, which leads to:
% \[
% n \epsilon^2 + K = \sum_{i \in \mathcal{S}}^{}{\left( \left(\frac{1}{2\lambda}\right)^{1/3} - d_i \right)^2}
% \]
Substitute $\left(\frac{1}{2\lambda}\right)^{1/3}$ with a variable $x$ and re-arrange the above equation:
\[
x^2 - \frac{2\sum_{i \in \mathcal{S}}^{}{d_i}}{|\mathcal{S}|}x + \frac{\sum_{i\in \mathcal{S}}^{}{d^2_i - n\epsilon^2}}{|\mathcal{S}|} = 0.
\]
According to vieta theorem, a necessary condition that we can solve for $x \in \mathbb{R}$ from the above equation is:
\[
\left( \frac{2\sum_{i \in \mathcal{S}}^{}{d_i}}{|\mathcal{S}|}  \right)^2 - 4 \left(\frac{\sum_{i\in \mathcal{S}}^{}{d^2_i - n\epsilon^2}}{|\mathcal{S}|} \right) \ge 0,
\]
which leads to:
\[
\epsilon \ge \sqrt{\frac{|\mathcal{S}|}{n}} \left( \frac{\sum_{i \in \mathcal{S}}^{}{d^2_i}}{|\mathcal{S}|} - \frac{\left(\sum_{i \in \mathcal{S}}^{}{d_i} \right)^2}{|\mathcal{S}|^2} \right)^{1/2}.
\]
\end{proof}

\section{Additional Results on Real Networks}\label{sec:real_exp}
We run three experiments to show the effectiveness of each term in the objective function of \ModelName.
The results are showed in Figure~\ref{fig:control}.
The three rows (from top to bottom) correspond to experimental results on the email, airport and brain networks.

The first column corresponds to the first experiment, which is to show
 that maximizing $\lambda_1(\TildeAdjS)$ leads to higher infected ratios.
The hyper-parameters are set to $\alpha_1=1/3, \alpha_2=0$, and $\alpha_3=1/3$ (the hyper-parameters do not need to sum to one).
The labels of the $y$-axis become $I_{\text{modified}}^\mathcal{S} - I_{\text{original}}^\mathcal{S}$, which highlights that the infected ratios are for the targeted subgraph $\GS$ (the higher the better).
Note that $\alpha_2$ is set to zero in order to avoid the coupling between the eigenvector centrality of $\mathcal{S}$ and $\lambda_1(\TildeAdjS)$.
From the plot it is clear that maximizing $\lambda_1(\TildeAdjS)$ is important to increase the infected ratios within $\GS$ (the blue line).
Note that solely maximizing the normalized cut of $\mathcal{S}$ may backfire (the red line), as a large portion of edges are deleted from $\GS$ when $\gamma$ increases.  

The second column is to show the effectiveness of limiting the impact on $\GSPrime$ by maximizing the eigenvector centrality of $\mathcal{S}$.
The $y$-axis represents $I_{\text{modified}}^\mathcal{S'} - I_{\text{original}}^\mathcal{S'}$, which highlights that the infected ratios are for the non-targeted subgraph $\GSPrime$ (the lower the better). 
The plot shows that the impact on $\GSPrime$ is well limited; the effectiveness is most significant when $\gamma > 40\%$.

The last column is to show the effectiveness of maximizing the normalized cut of $\mathcal{S}$.
We set $\alpha_2=0$ to avoid the effect of maximizing the eigenvector centrality of $\mathcal{S}$.
Observe that maximizing the normalized cut of $\mathcal{S}$ is effective in increasing the infected ratio within $\GS$ only for the email network.
This suggests that on weighted graphs normalized cut might not be a good heuristic to increase the centrality of $\mathcal{S}$.

\begin{figure}[t]
\def\FigSize{1.1in}
\centering
\setlength{\tabcolsep}{0.1pt}
\begin{tabular}{cccc}
\includegraphics[width=\FigSize]{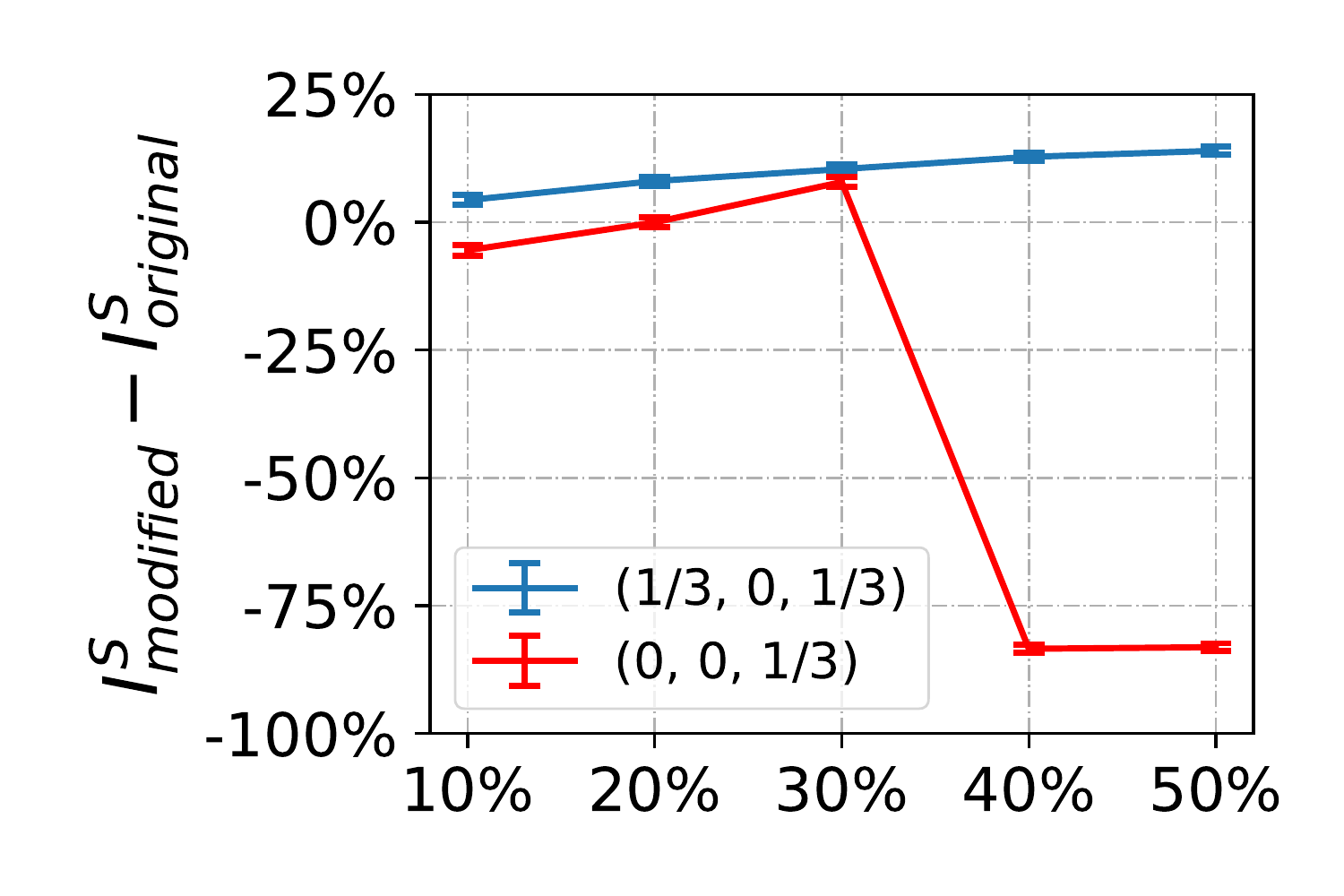} & \includegraphics[width=\FigSize]{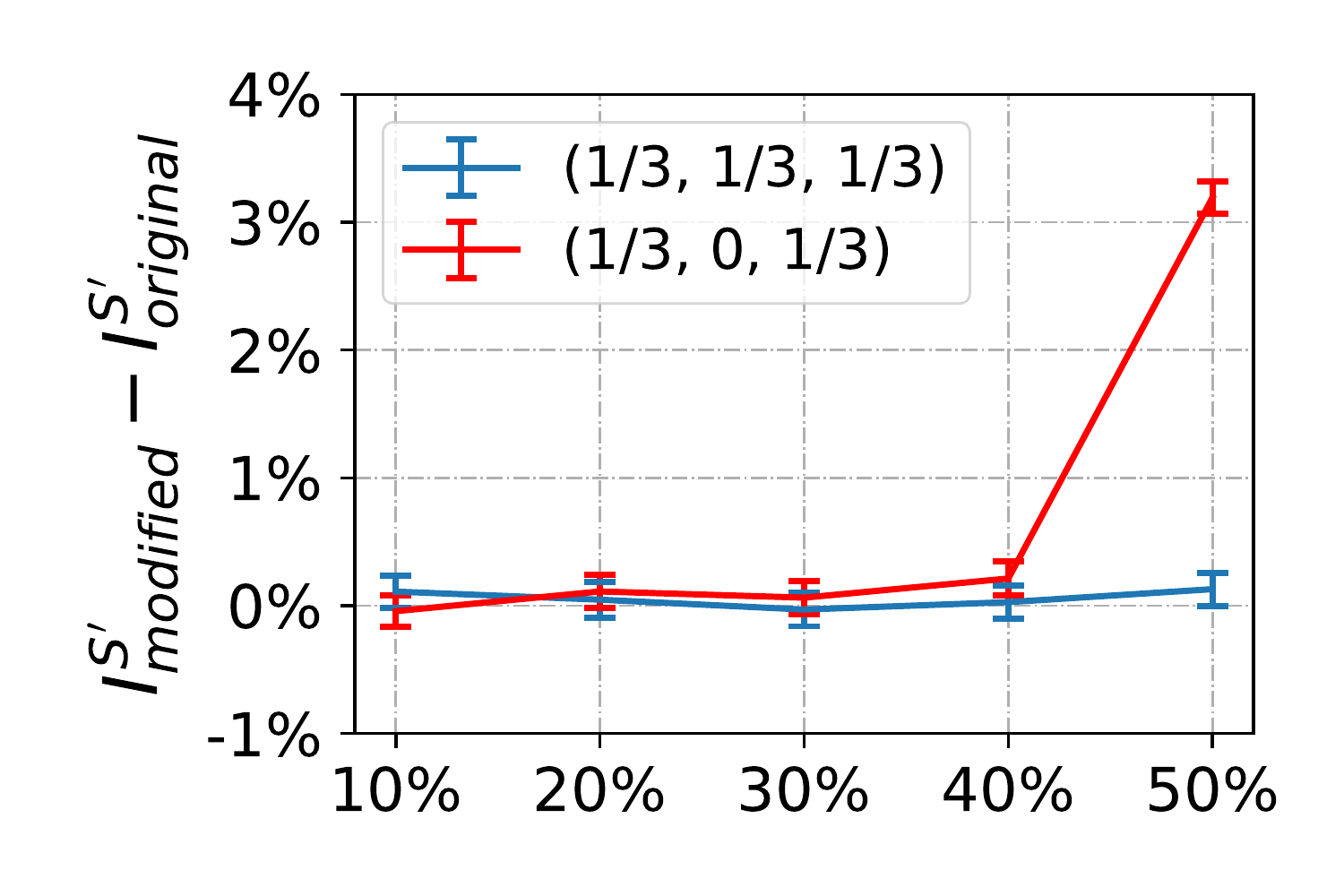} & \includegraphics[width=\FigSize]{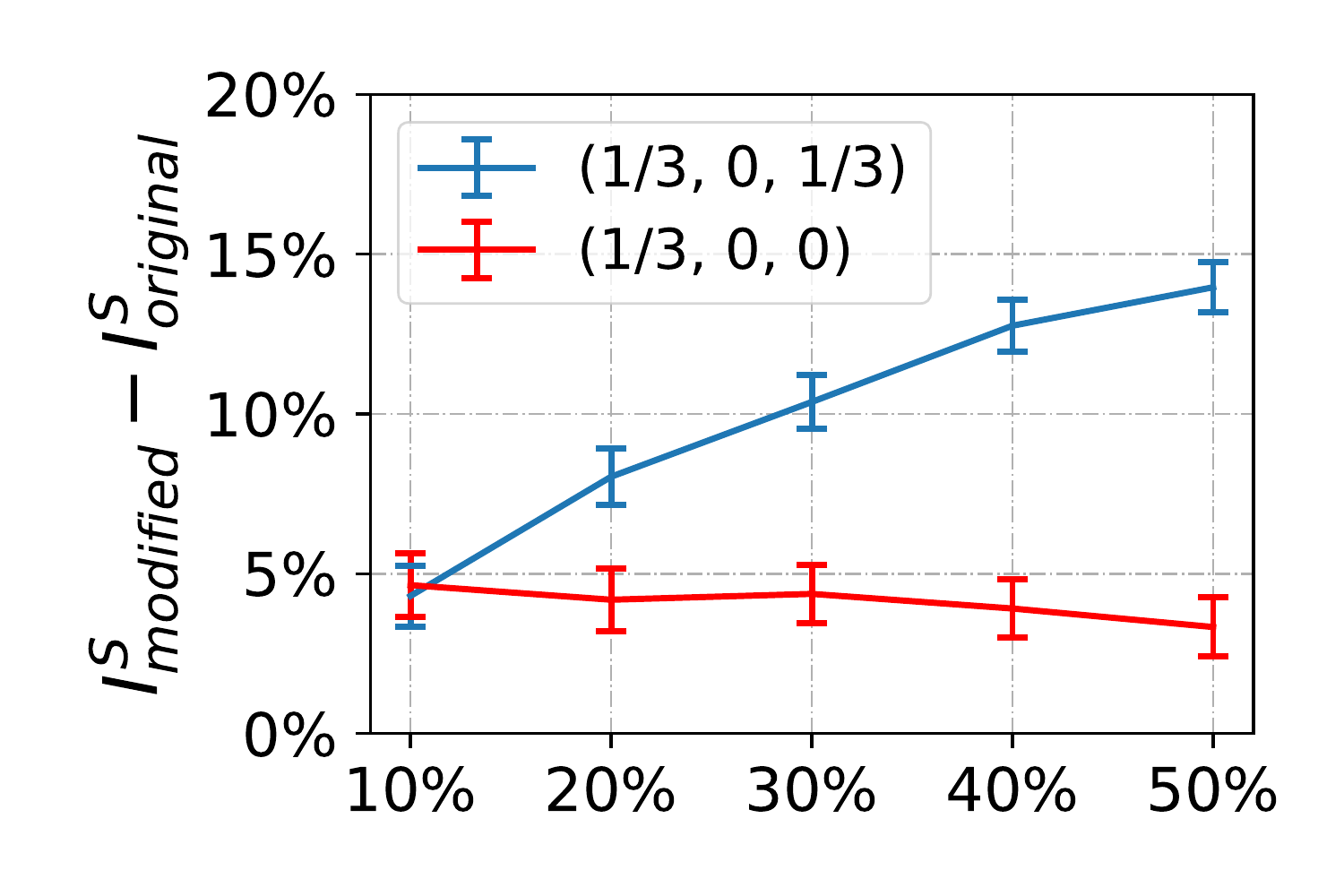} \\
\includegraphics[width=\FigSize]{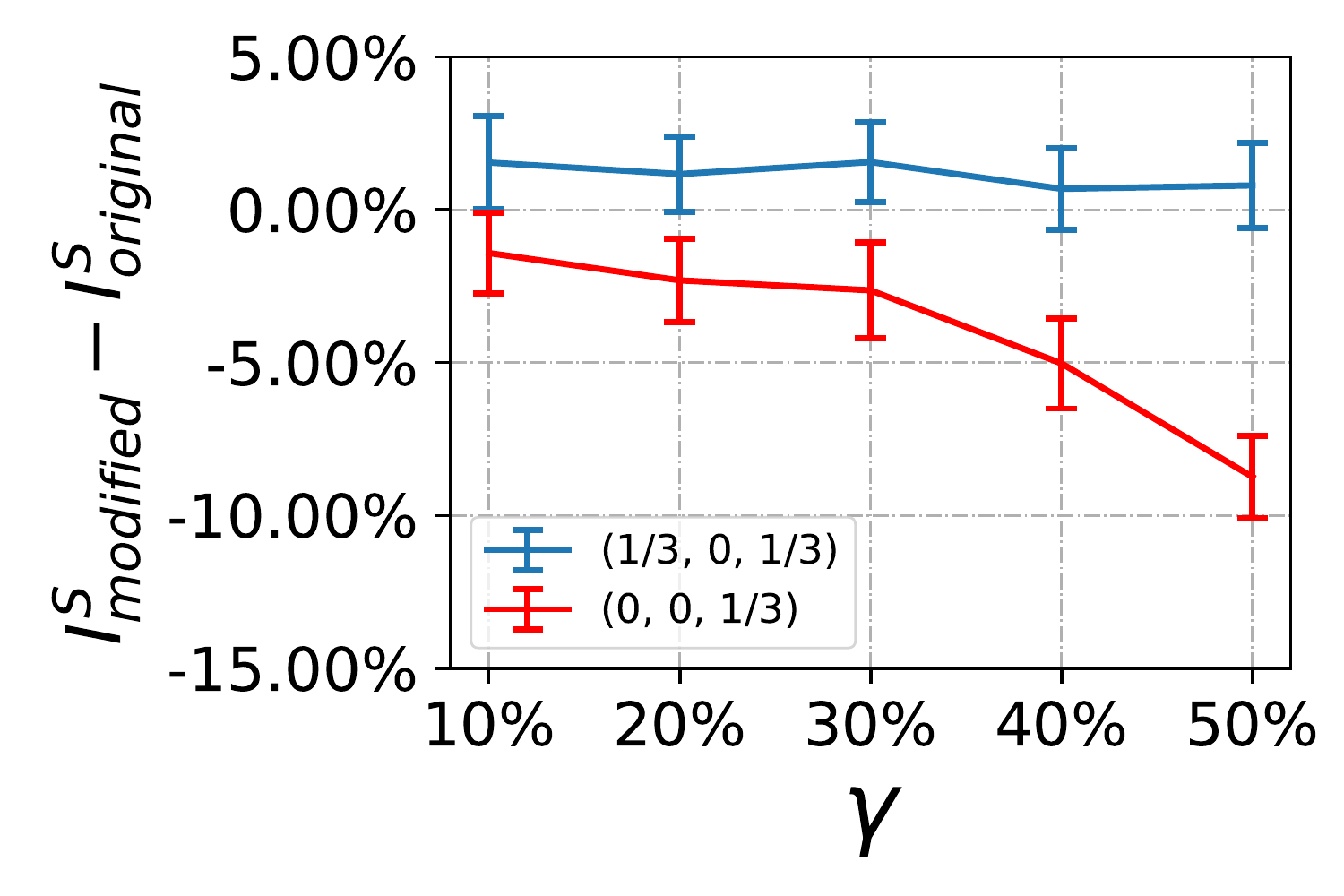} & \includegraphics[width=\FigSize]{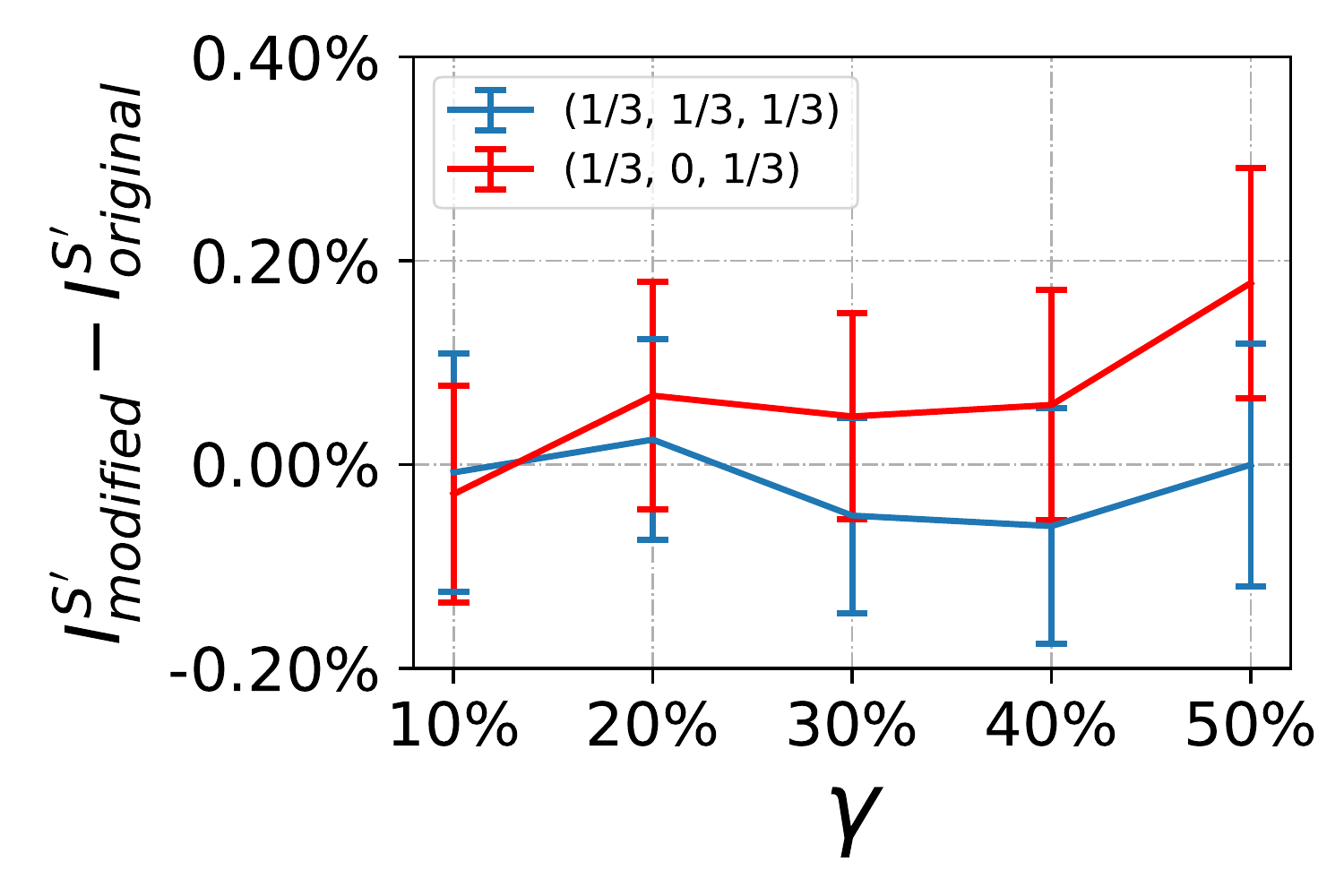} & \includegraphics[width=\FigSize]{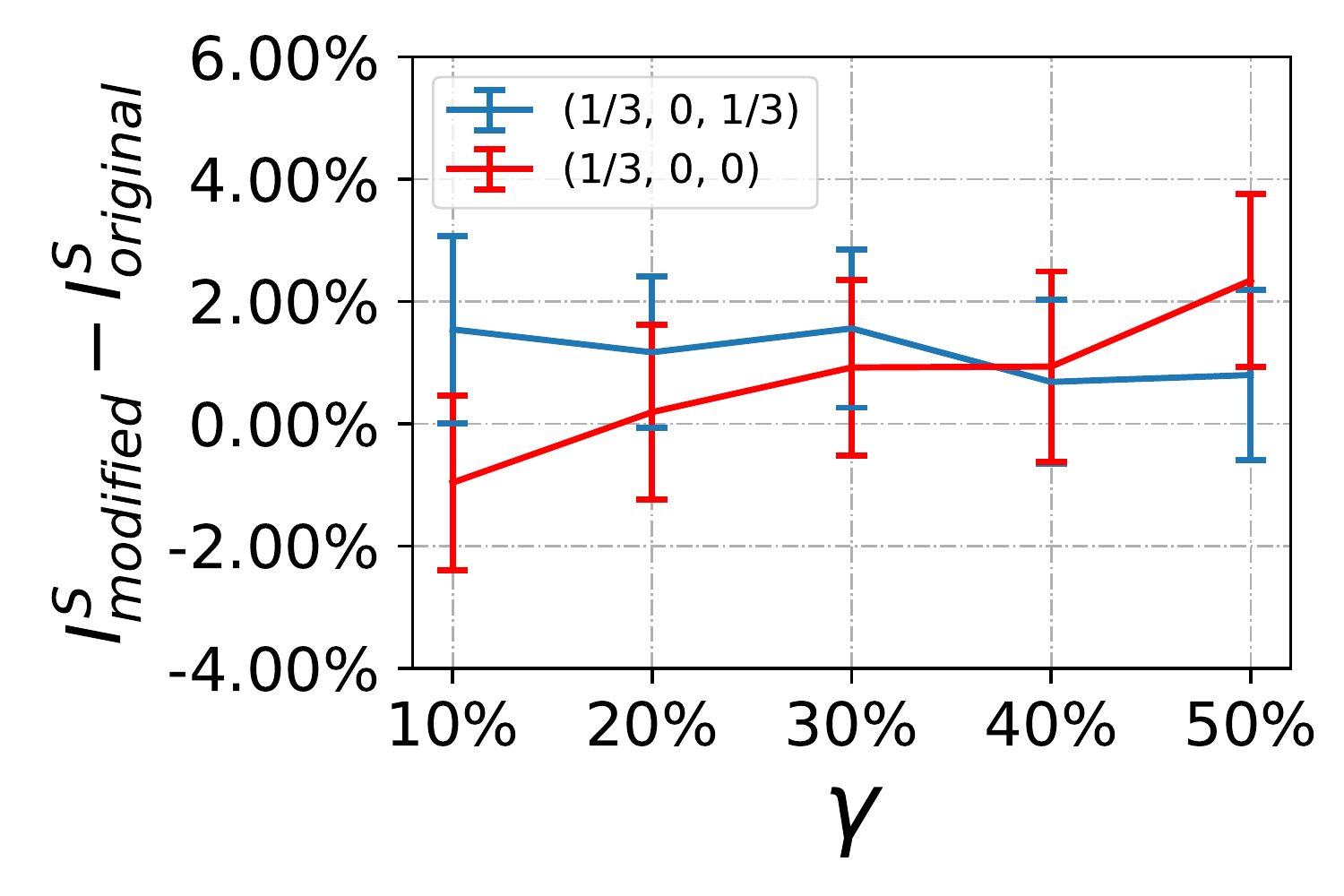} \\
\includegraphics[width=\FigSize]{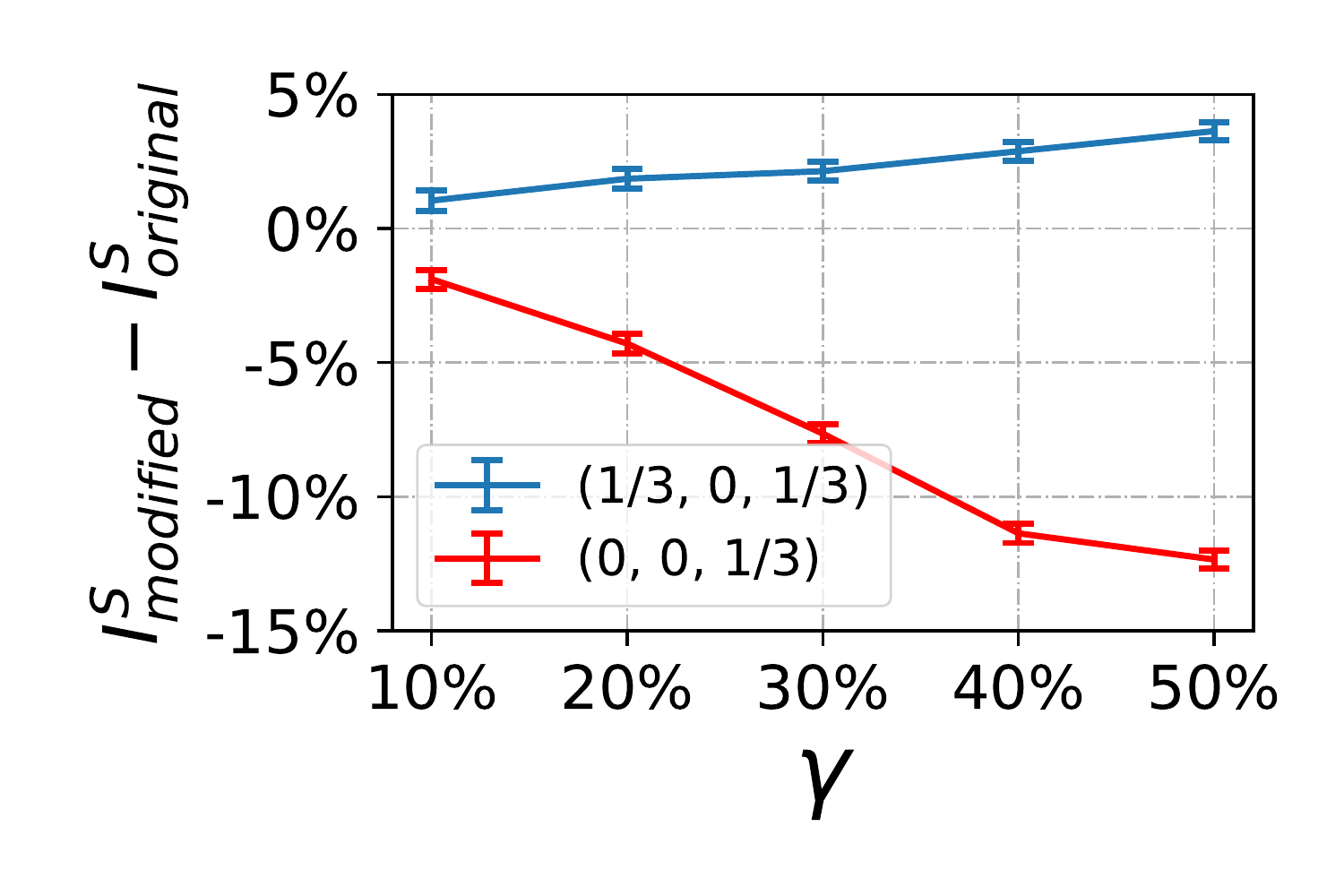} & \includegraphics[width=\FigSize]{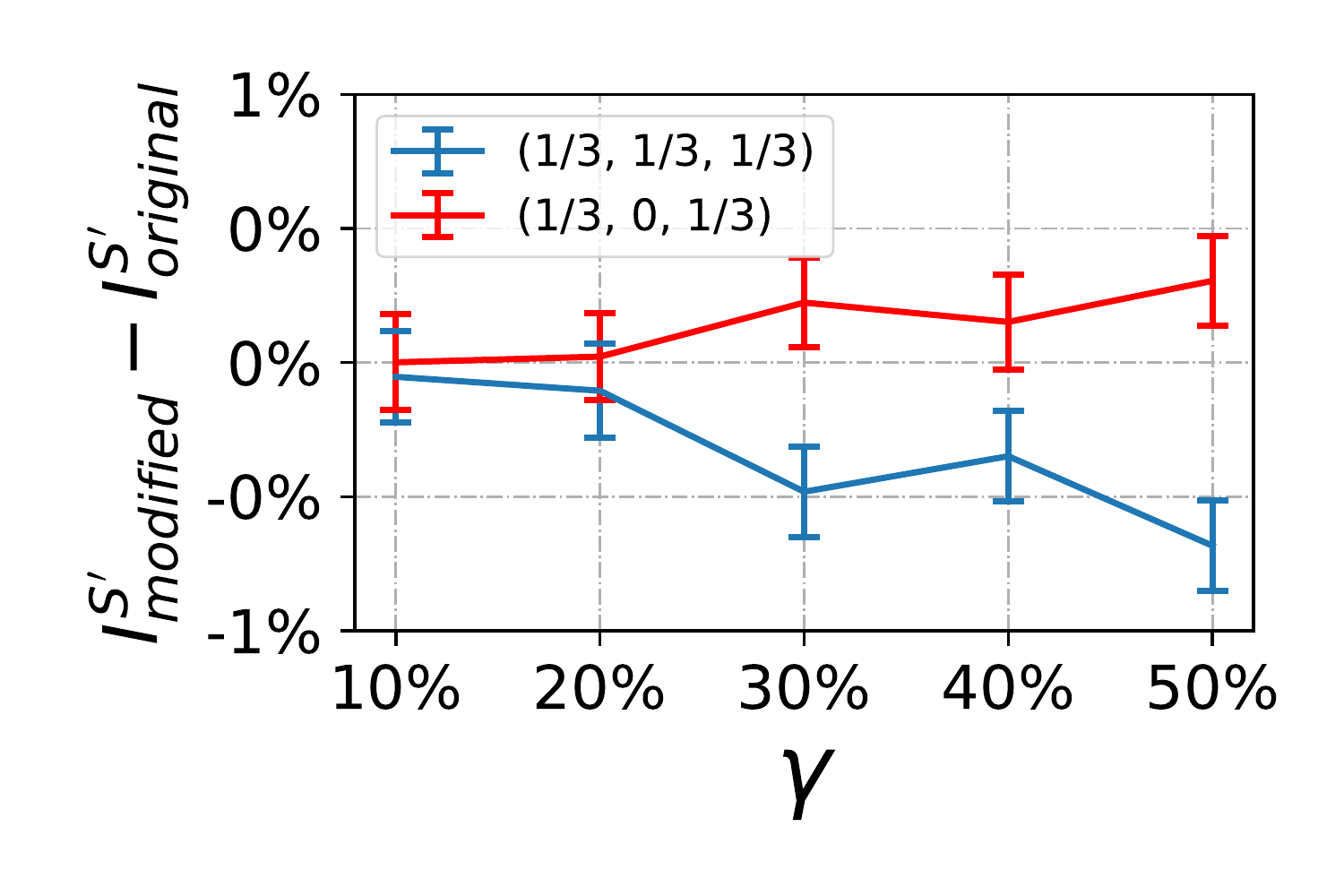} & \includegraphics[width=\FigSize]{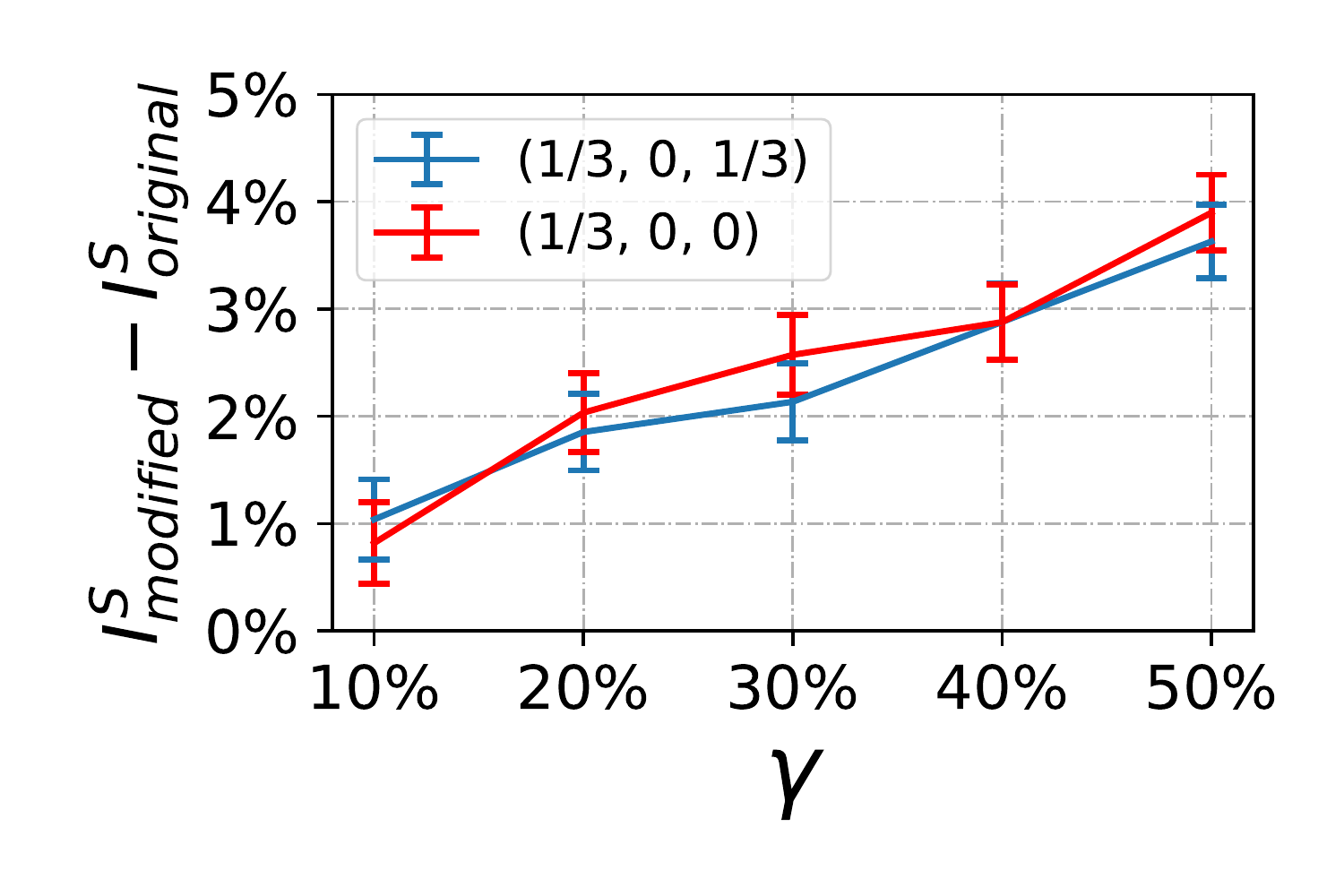}
\end{tabular}
\caption{
\small
Experiments showing the model's effectiveness.
\textbf{Top}: the email network; \textbf{Middle}: the airport network; \textbf{Bottom}: the brain network.
The three columns (from left to right) show the effectiveness of: 1) maximizing $\lambda_1(\TildeAdjS)$; 2) Maximizing the eigenvector centrality of $\mathcal{S}$; 3) maximizing the normalized cut of $\mathcal{S}$. }
\label{fig:control}
\end{figure}

\section{Additional Results on Synthetic Networks}\label{sec:synthetic_exp}
In this section we show experimental results on synthetic unweighted graphs with 375 nodes.
We focus on three classes of networks: Barabasi-Albert (BA), Watts-Strogatz, and BTER~\cite{seshadhri2012community}.
BA is characterized by its power-law degree distribution~\cite{barabasi1999emergence}.
Watts-Strogatz is well-known for its local clustering in a way as to qualitatively resemble real networks~\cite{watts1998collective}.
BTER are generative network models that can be calibrated to match real-world networks, in particular, to reproduce the community structures~\cite{seshadhri2012community}.

The experimental setup is similar to the setup for the email network, except for a few changes.
First, the experimental results for each class of the synthetic networks are averaged over 30 randomly generated network topologies. 
Another difference lies in how the targeted set $\mathcal{S}$ is selected.
For each randomly generated network, the targeted set $\mathcal{S}$ is selected as the node whose degree is the 90 percentile of the degree sequence,  and its neighbors. 
Some statistics of the synthetic networks are summarized in Table~\ref{tab:sync-stat}.
Recall that $\delta=0.24$ and $\beta=0.06$.
The experimental results are showed in Figure~\ref{fig:synthetic}.
The conclusion derived from  Figure~\ref{fig:synthetic} is similar to that of the email network. 
It is worth pointing out that maximizing the normalized cut of $\mathcal{S}$ is  effective on BA networks, while for other network it may  backfire.

\begin{table}[h]
\small
\centering
\begin{tabular}{@{}cccc@{}}
\toprule
                          & BA   & Watts-Strogatz & BTER \\ \midrule
$|\mathcal{S}|$ & 17.5 & 12 & 20.03 \\
$d_{min}$       & 9.86 & 10 & 11.69 \\
density                   & 0.02 & 0.03        & 0.03 \\
average degree            & 9.87 & 10          & 11.5 \\
average clustering coeff. & 0.08 & 0.35        & 0.05 \\
\bottomrule
\end{tabular}
\caption{Statistics of synthetic networks.}
\label{tab:sync-stat}
\end{table}

\begin{figure*}[h]
\def\FigSize{1in}
\centering
\setlength{\tabcolsep}{0.1pt}
\begin{tabular}{cccc}
\includegraphics[width=\FigSize]{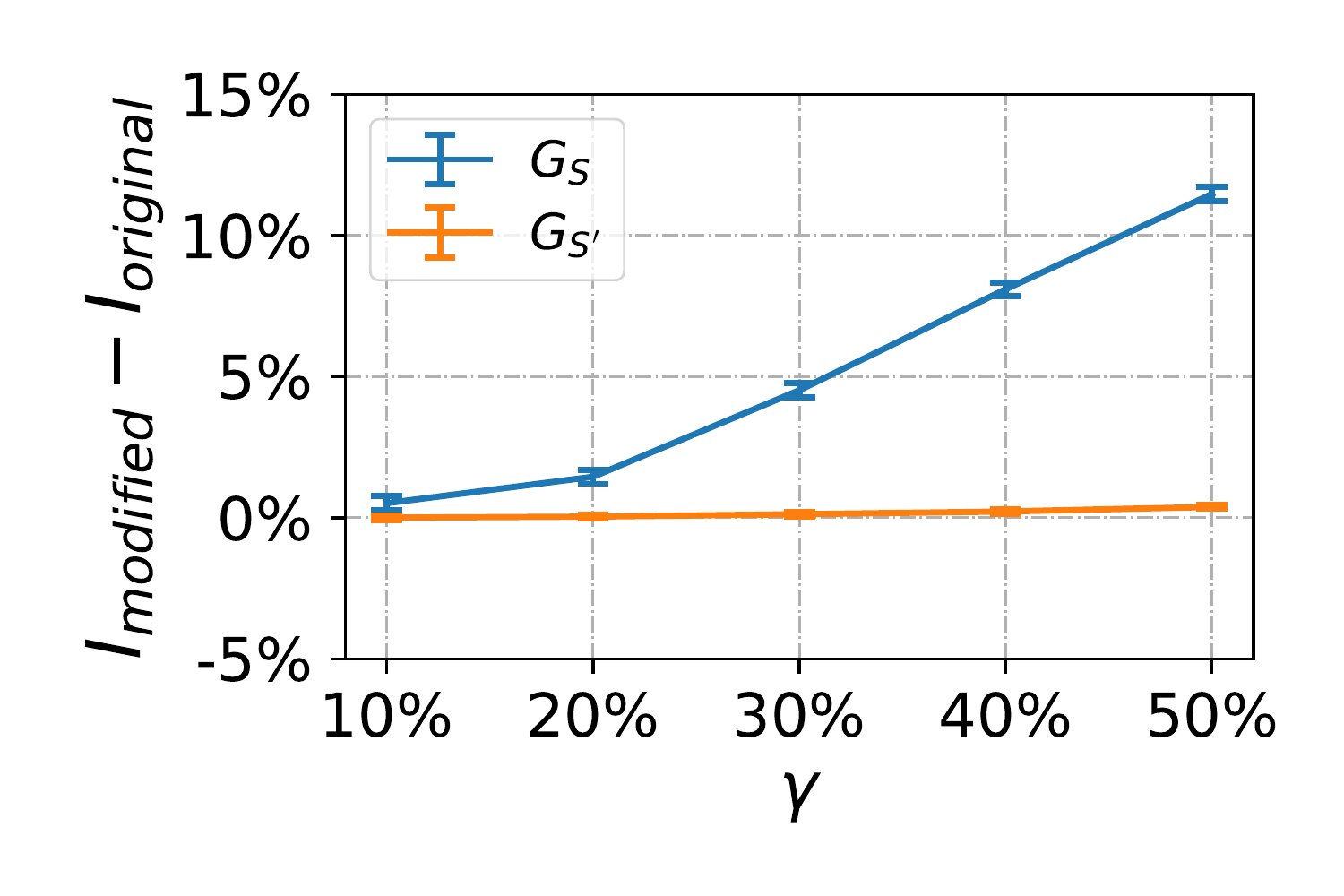} & \includegraphics[width=\FigSize]{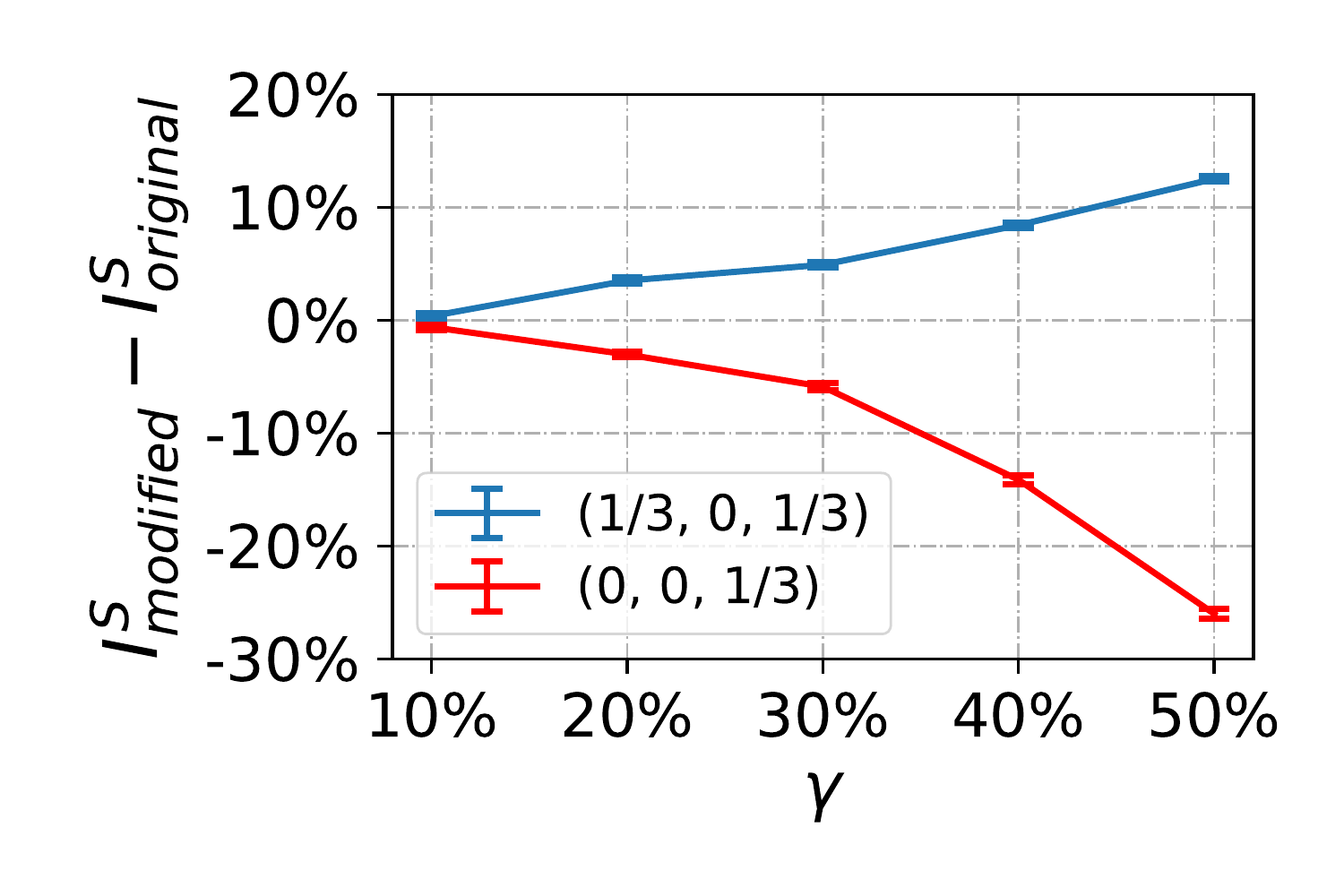} & \includegraphics[width=\FigSize]{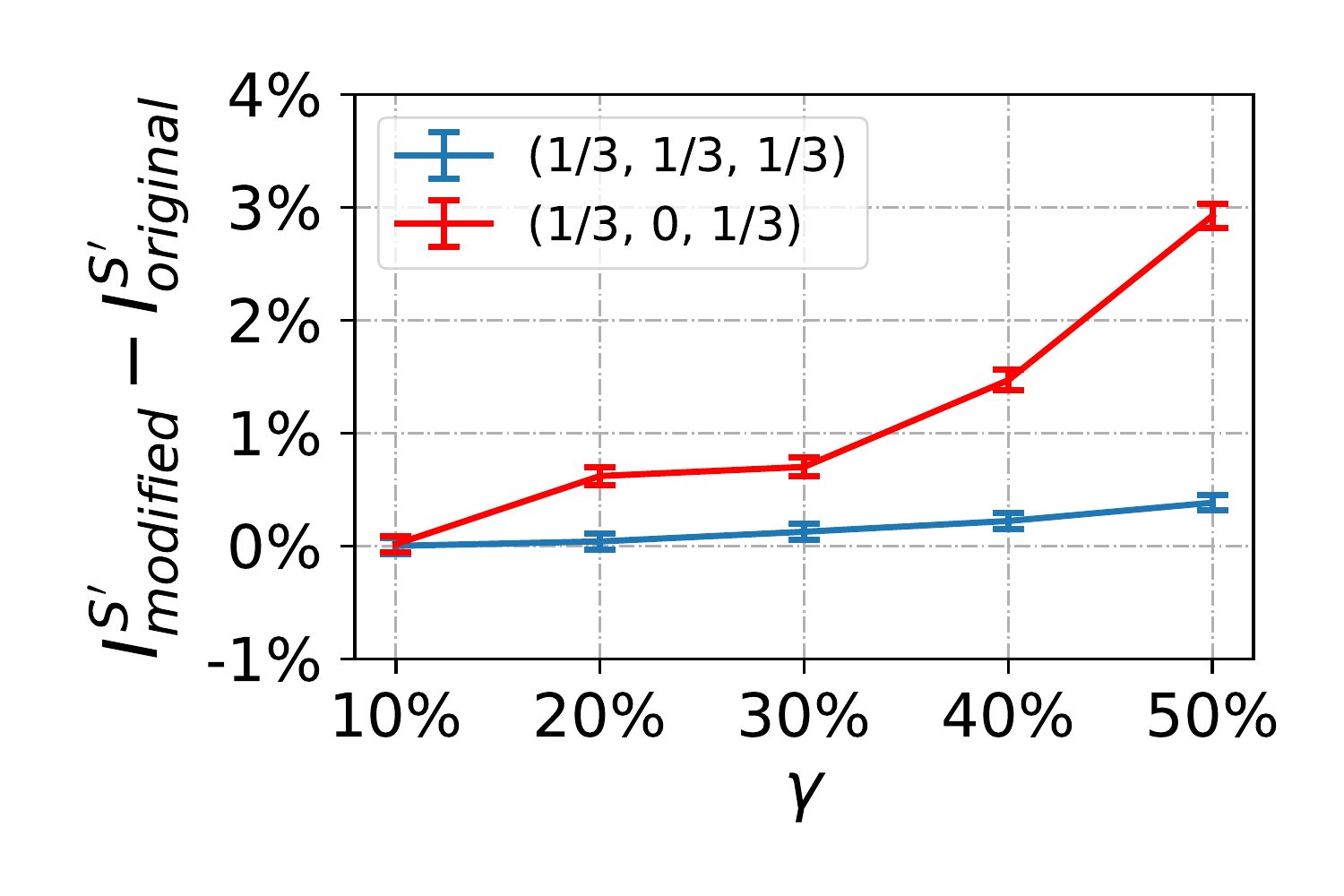} & \includegraphics[width=\FigSize]{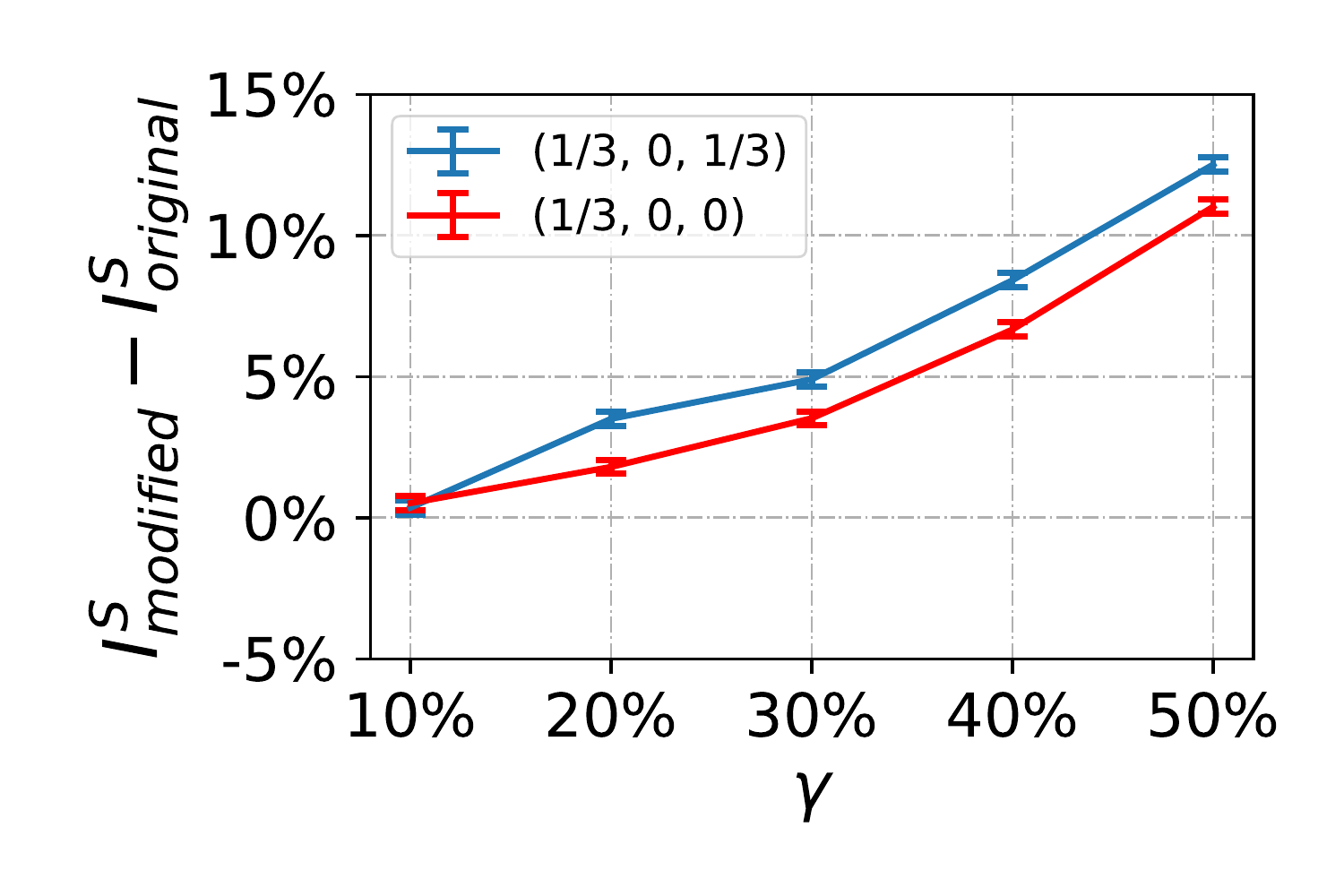} \\
\includegraphics[width=\FigSize]{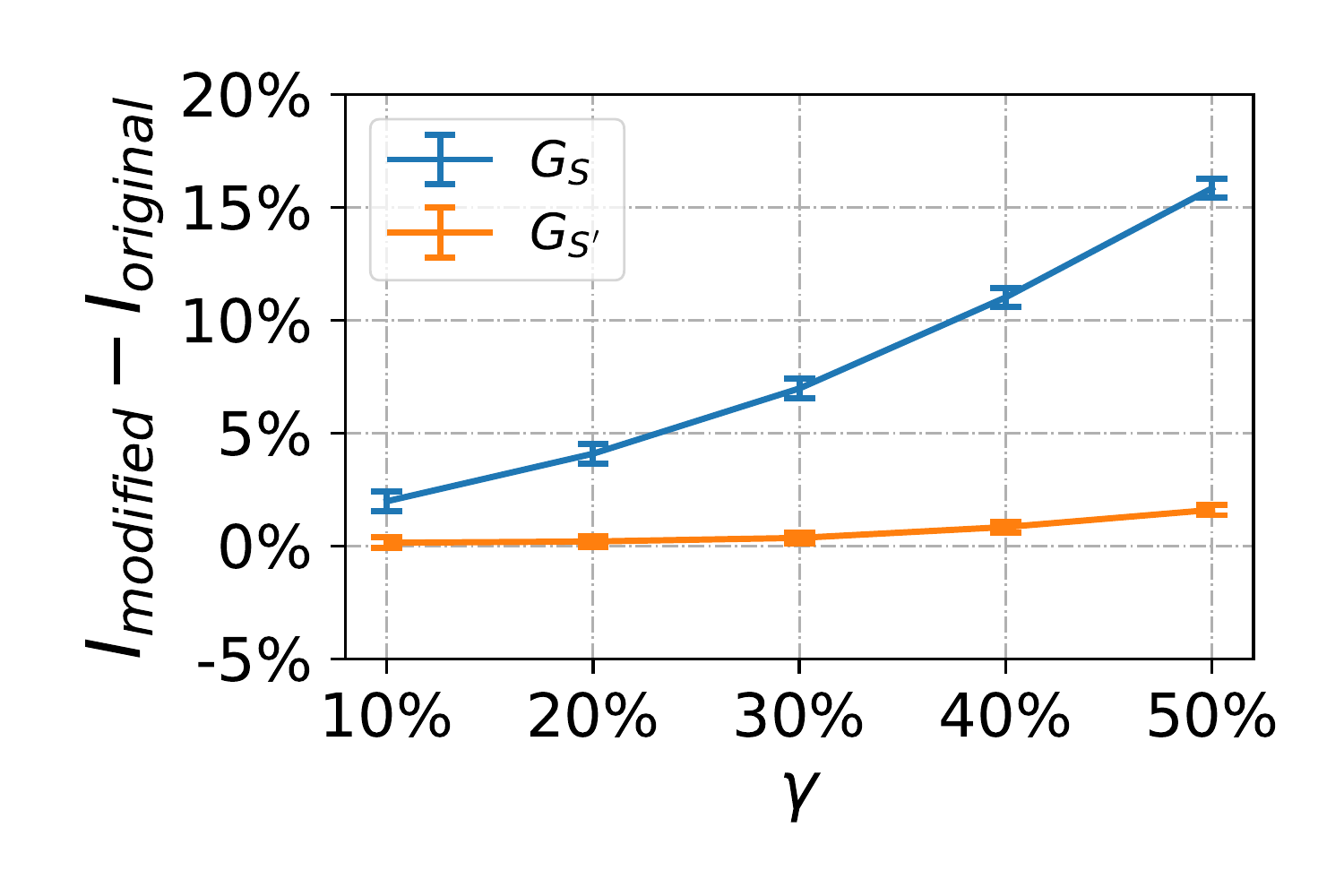} & \includegraphics[width=\FigSize]{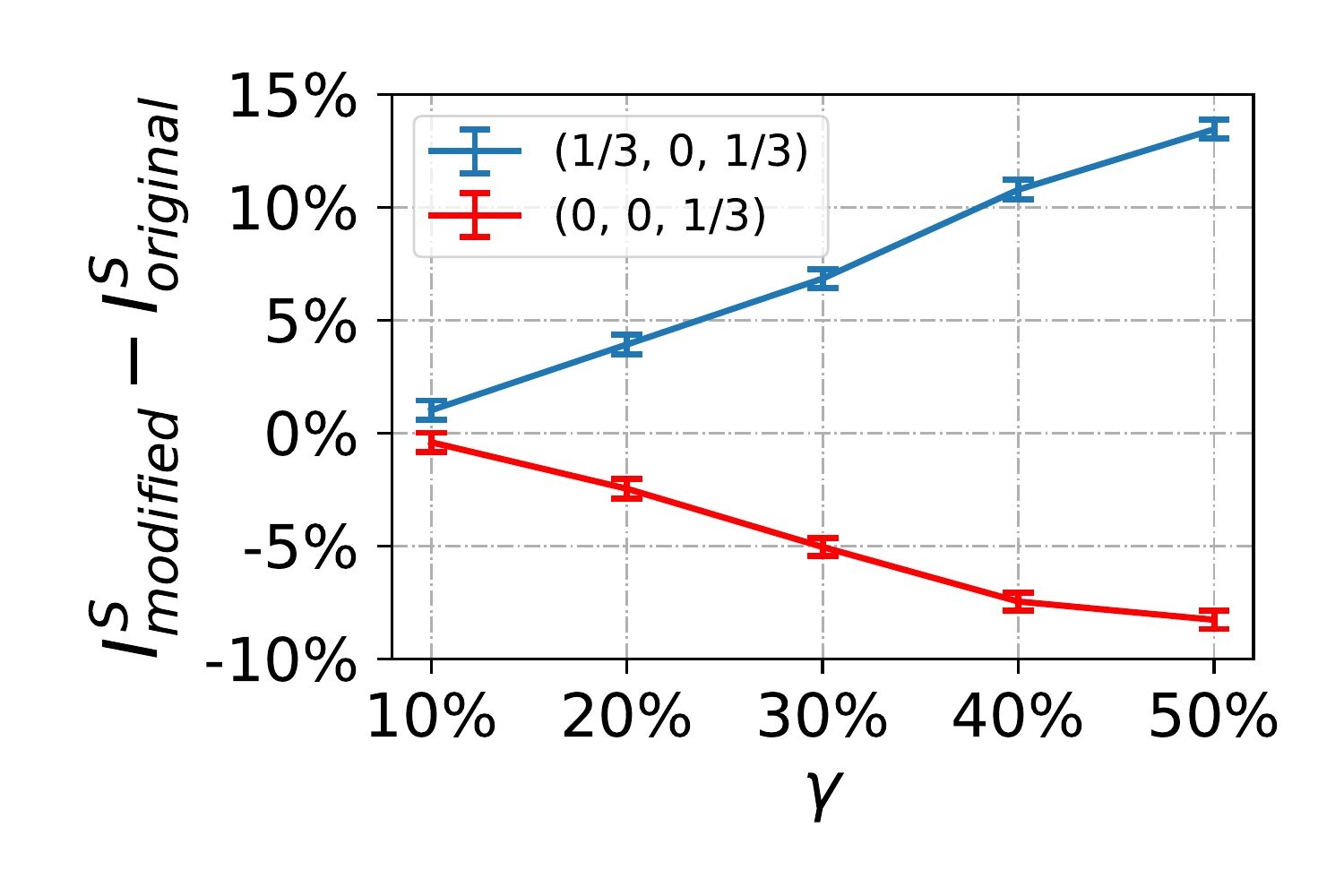} & \includegraphics[width=\FigSize]{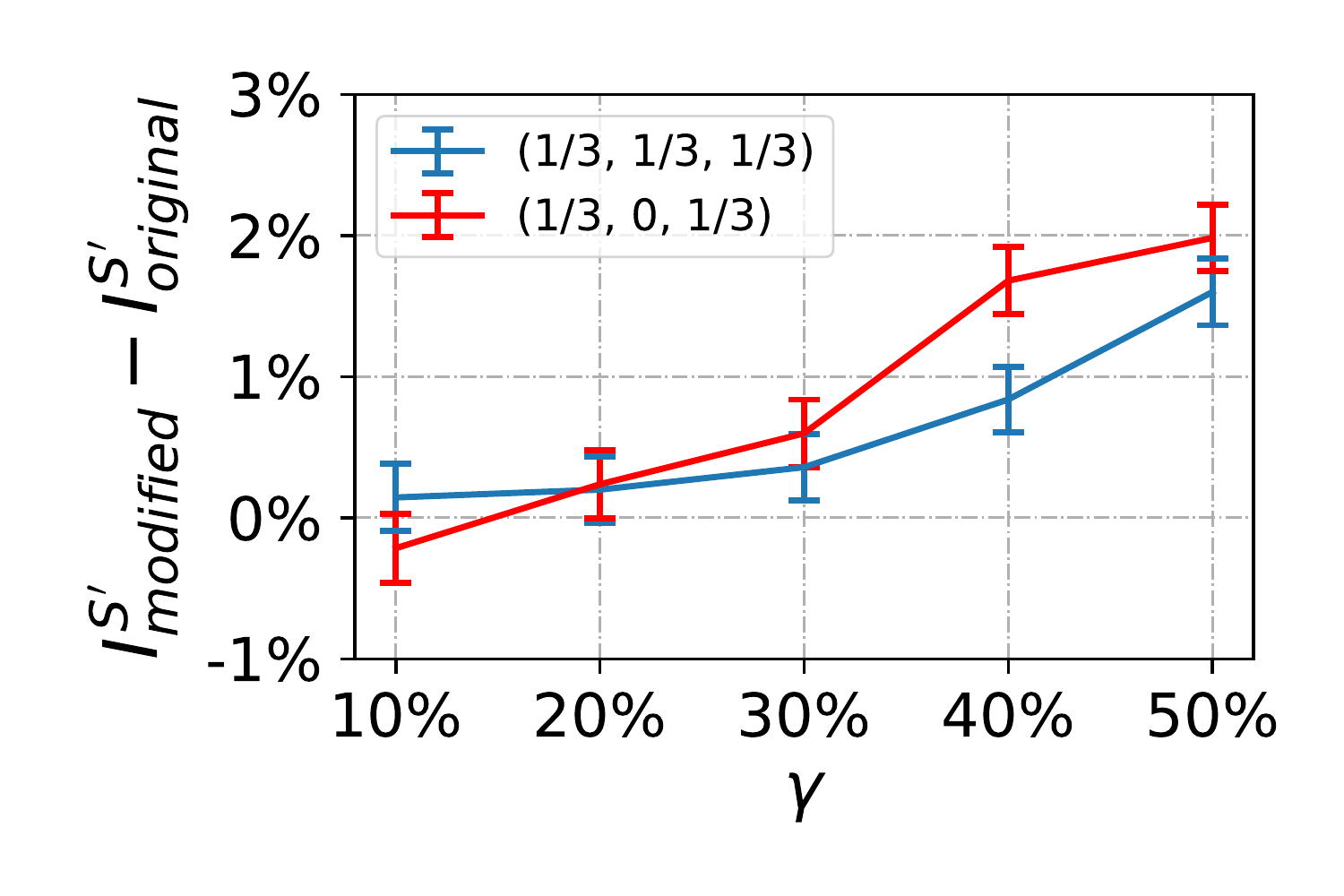} & \includegraphics[width=\FigSize]{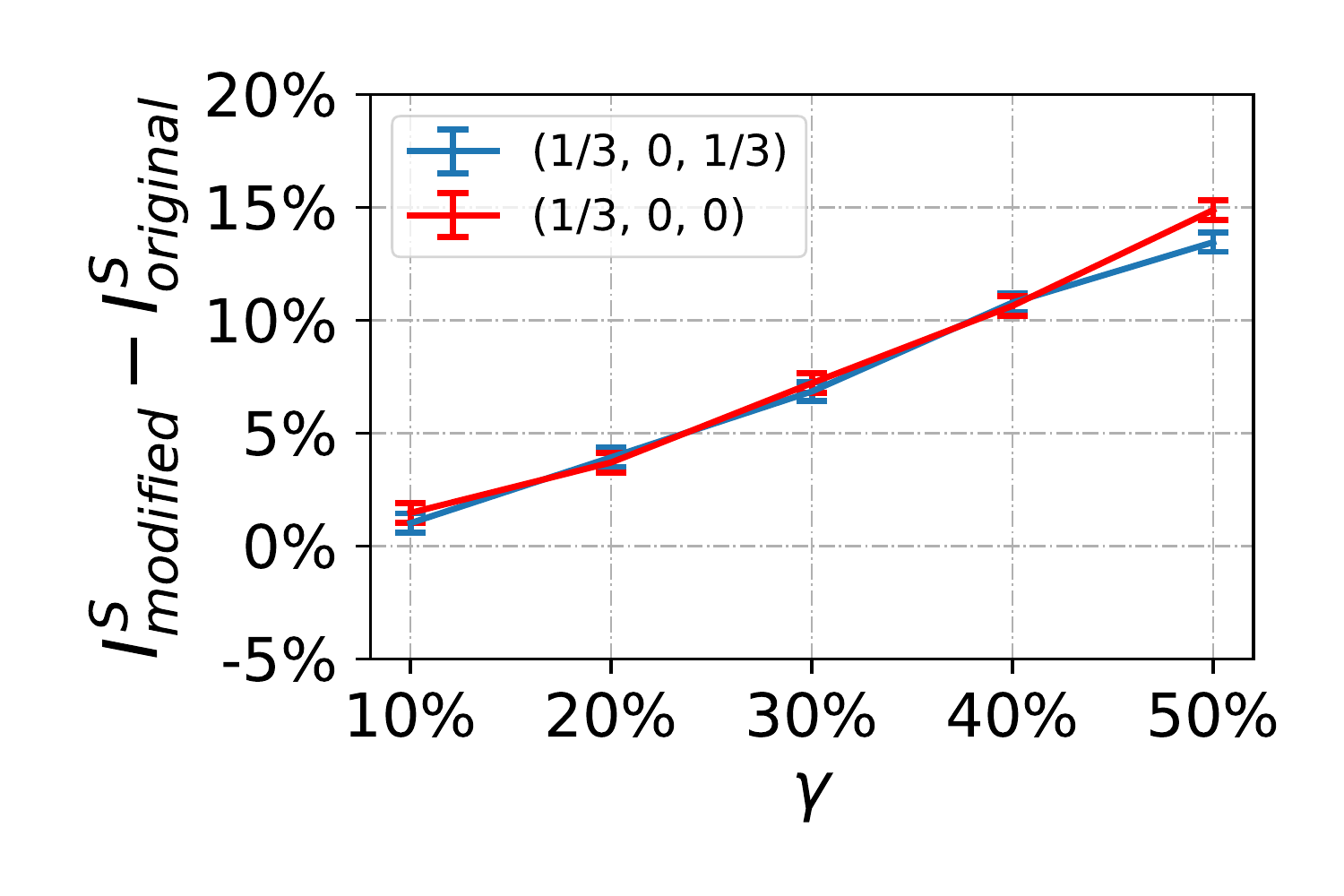} \\
\includegraphics[width=\FigSize]{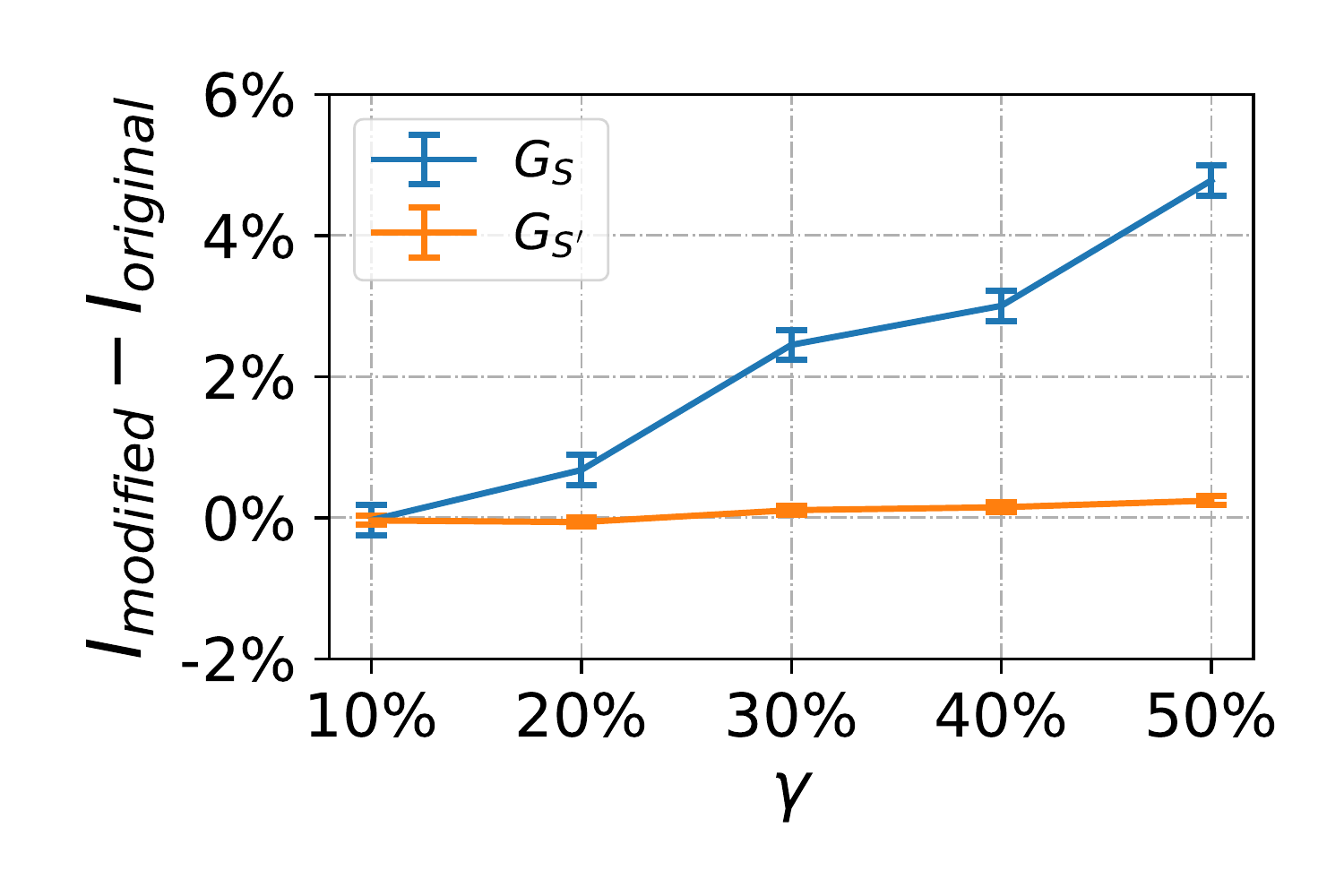} & \includegraphics[width=\FigSize]{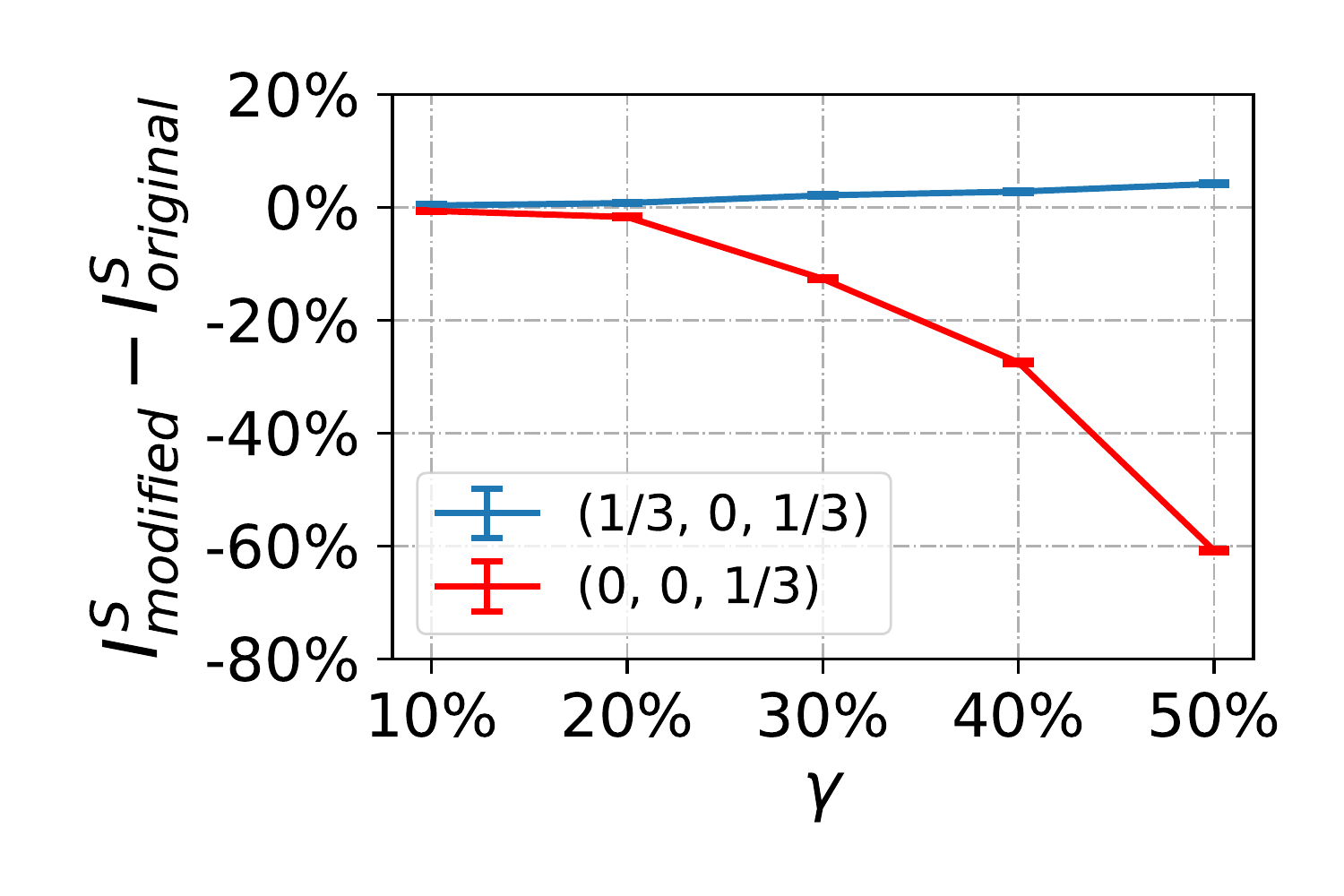} & \includegraphics[width=\FigSize]{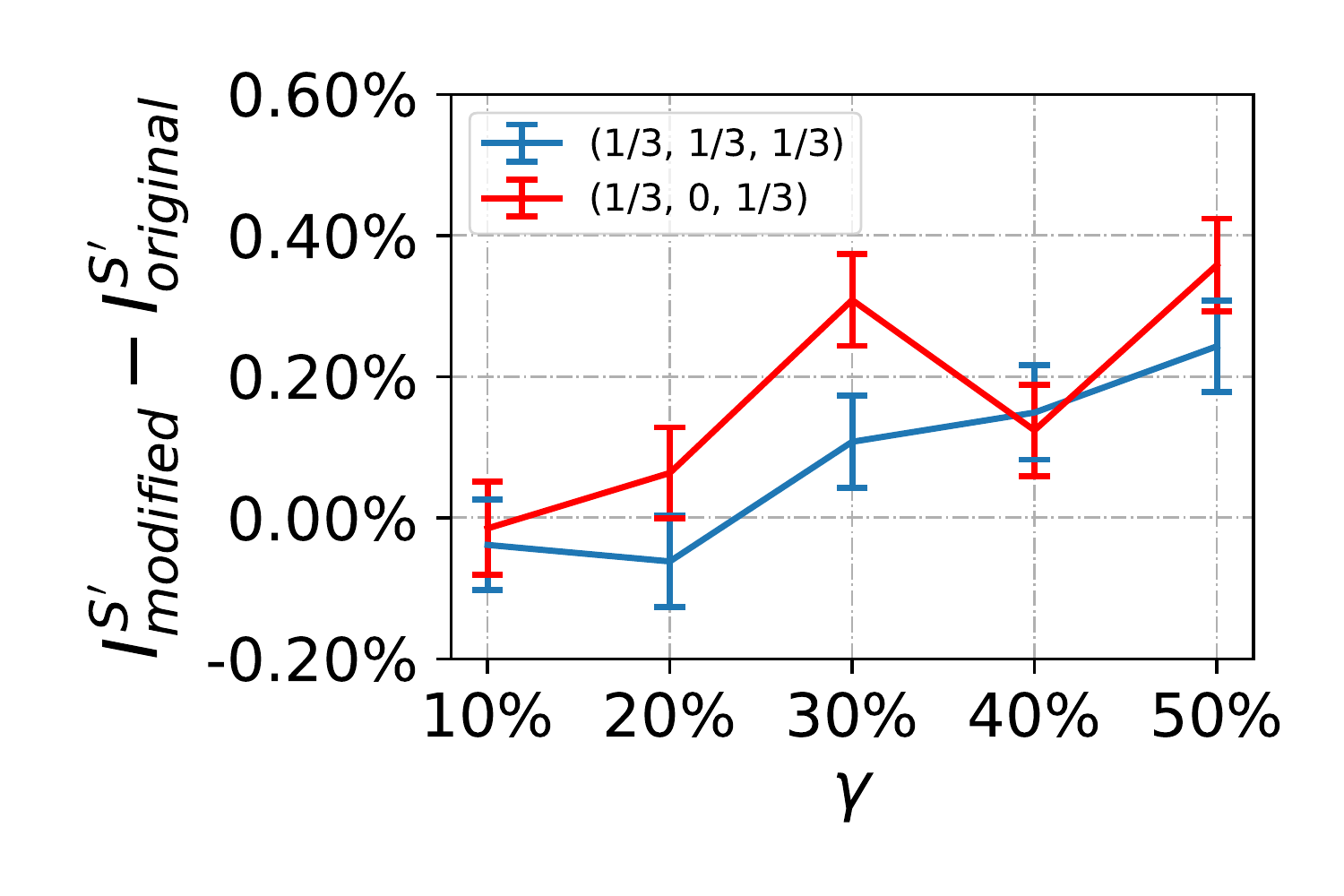} & \includegraphics[width=\FigSize]{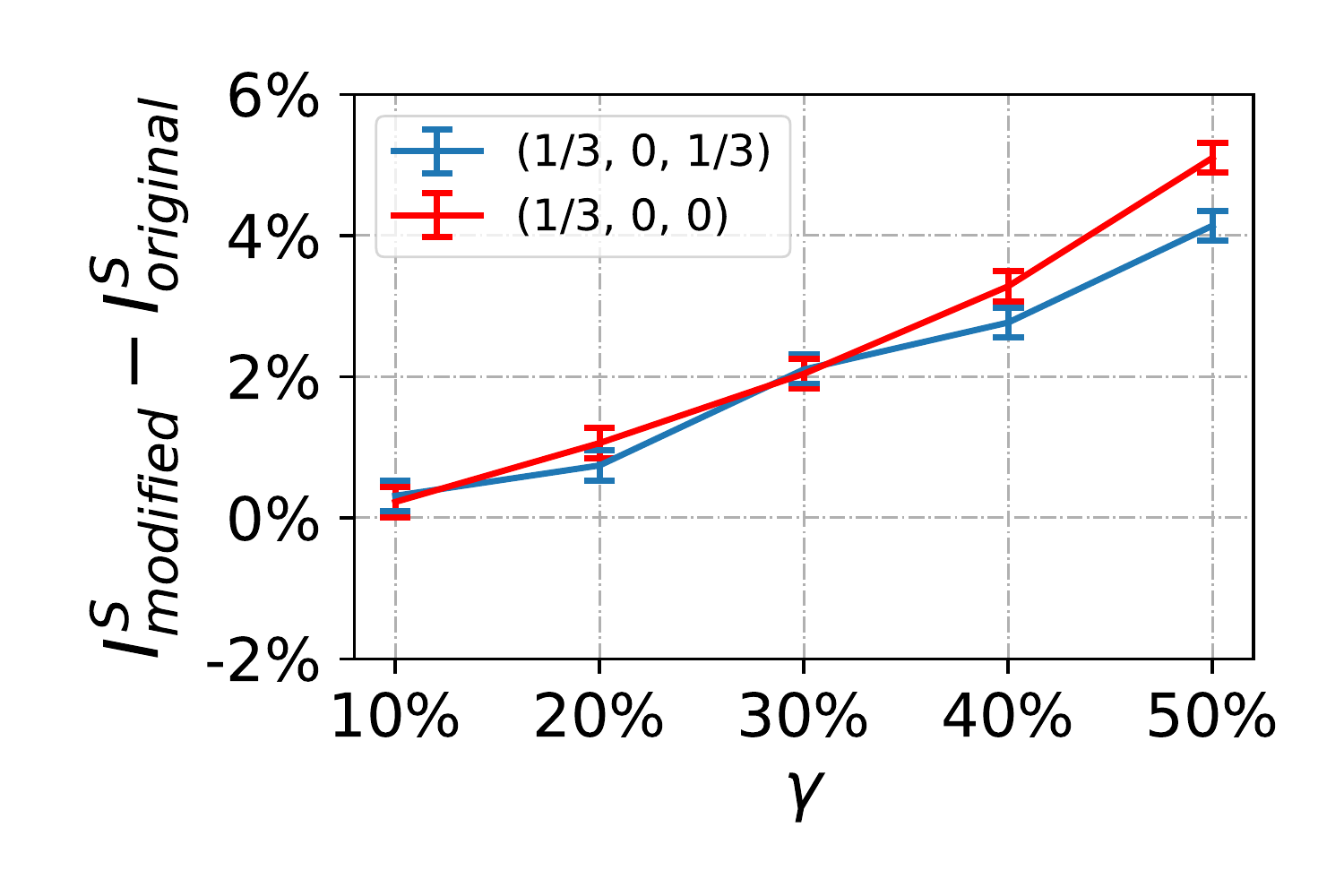} 
\end{tabular}
\caption{Experimental results on synthetic networks.
The first column shows the overall effectiveness of the threat model. 
Remaining columns show the effectiveness of: 1) maximizing $\lambda_1(\TildeAdjS)$; 2) Maximizing the eigenvector centrality of $\mathcal{S}$; 3) maximizing the normalized cut of $\mathcal{S}$. 
\textbf{Top}: BA; \textbf{Middle}: Watts-Strogatz; \textbf{Bottom}: BTER}
\label{fig:synthetic}
\end{figure*}

\section{Additional Results for Different Values of $\delta$ and $\beta$}\label{app:other_delta_beta}
In the main paper, $\delta=0.24$ and  $\beta=0.2$  for the airport and brain networks, while $\delta=0.24$ and $\beta=0.06$ for the email network.  
The ratio  $\delta/ \beta$ is $1.2$ for the former two networks, while $4$ for the latter.
In what follows we explore the effectiveness of our model in different regimes of $\delta / \beta$.
For the airport and brain networks,  we present results for $(\delta=0.5, \beta=0.1)$ and $(\delta=0.3, \beta=0.5)$.
The former (resp. latter) corresponds to the regime above (resp. below) $1.2$.
The results for the airport network are showed in Figure~\ref{fig:airport_diff_threshold}, and the results for the brain network are in Figure~\ref{fig:brain_diff_threshold}.
For the email network we present results for $(\delta=0.5, \beta=0.1)$ and $(\delta=0.3, \beta=0.5)$, also corresponds to the regime above and below the original ratio respectively. 
The results are showed in Figure~\ref{fig:email_diff_threshold}.
The conclusions are consistent with that presented in the main paper.

\section{Results for Random Walk Based Spreading Dynamics}
We simulate random walk based spreading dynamics on the original and the modified networks. 
Although \ModelName is motivated from the analysis of SIS spreading dynamics, the simulation results show that it is capable of achieving targeted diffusion when the underlying spreading dynamics is based on random walk.
Random walk  has extensive use in machine learning, data mining, security, ranking, etc.~\citep{perozzi2014deepwalk,backstrom2011supervised,tong2006fast,sun2005neighborhood,page1999pagerank}.
We focus on two variants of random walks: random walk with restart (RWR, a.k.a.~personalized PageRank) and PageRank. The former has been widely used in data mining and security applications~\citep{tong2006fast,sun2005neighborhood}. The latter is a powerful tool to measure the ``importance'' of nodes in a network~\citep{page1999pagerank}.
We run \ModelName on the Airport, Brain, and Email networks, with the same targeted subgraphs as in previous experiments.
The trade-off parameters are set to $(\alpha_1=1/3,\alpha_2=1/3,\alpha_3=1/3)$.

For the RWR dynamics, the starting node of a random walk is picked uniformly at random from the non-targeted subgraph \GSPrime. 
The restart probability $c$ is set to $0.05$ -- i.e., at each time step the RWR restarts from the starting node with probability $0.05$.
The RWR dynamics is simulated until convergence,\footnote{We are guaranteed convergence since the Markov transition matrix of the network is stochastic, irreducible, and aperiodic.} which gives us a rank vector $\bm{r} \in \R^n_+$ over the nodes for the given starting node. The sum of the sub-vector $\bm{r}[\mathcal{S}]$ (resp. $\bm{r}[\SPrime]$) is the probability that a random walk lands in the targeted subgraph $\GS$ (resp. non-targeted subgraph $\GSPrime$), which quantifies the impact on \GS (resp. \GSPrime).
 Figure~\ref{fig:RWR} shows the experimental results here.
The left column represents the landing probability on the targeted subgraph \GS. It is clear that the probability is higher when the underlying graph is modified by \ModelName (although the difference is only statistically significant on the Email network).
The right column is the landing probability on the non-targeted subgraph \GSPrime. The probability does not increase, which is desired as we would like to limit the impact on \GSPrime.

For the PageRank dynamics, the starting node is picked from the node set $\mathcal{V}$ uniformly at random. 
The restart probability $c$ is set to $0.1$ -- i.e., at each time step the PageRank dynamics restarts with probability $0.1$ from a node (not necessarily the starting node) picked from $\mathcal{V}$ uniformly at random. 
When the simulation is finished the PageRank gives a vector $\bm{r} \in \R^n_+$ indicating how ``important'' each node is.
Intuitively, $\bm{r}$ specifies a ranking of the nodes in $\mathcal{V}$ -- i.e., a node $i \in \mathcal{V}$ is ranked higher when $r[i]$ is larger.
We use the sum of the sub-vector $\bm{r}[\mathcal{S}]$ (resp. $\bm{r}[\SPrime]$) to quantify the impact on \GS (resp. \GSPrime).
Other experimental setup is the same as the setup for the RWR dynamics. Figure~\ref{fig:PR} shows these results.
The left column indicates the ranking of the nodes in \GS. It is clear that the ranking is boosted and the increase is statistically significant.
The right column shows that the ranking of the nodes in \GSPrime is not increased, as desired.

\begin{figure}[h]
\def\FigSize{3in}
\centering
\setlength{\tabcolsep}{0.1pt}
\begin{tabular}{c}
\includegraphics[width=\FigSize]{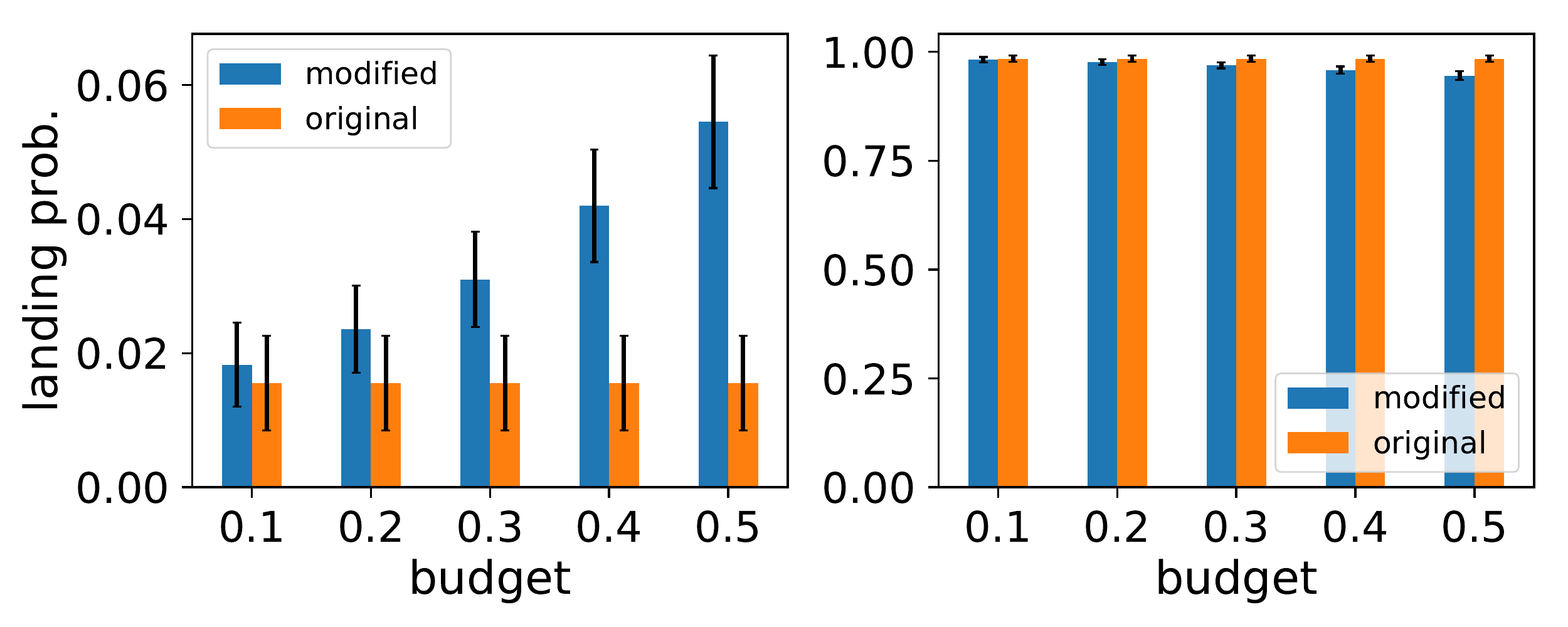} \\
\includegraphics[width=\FigSize]{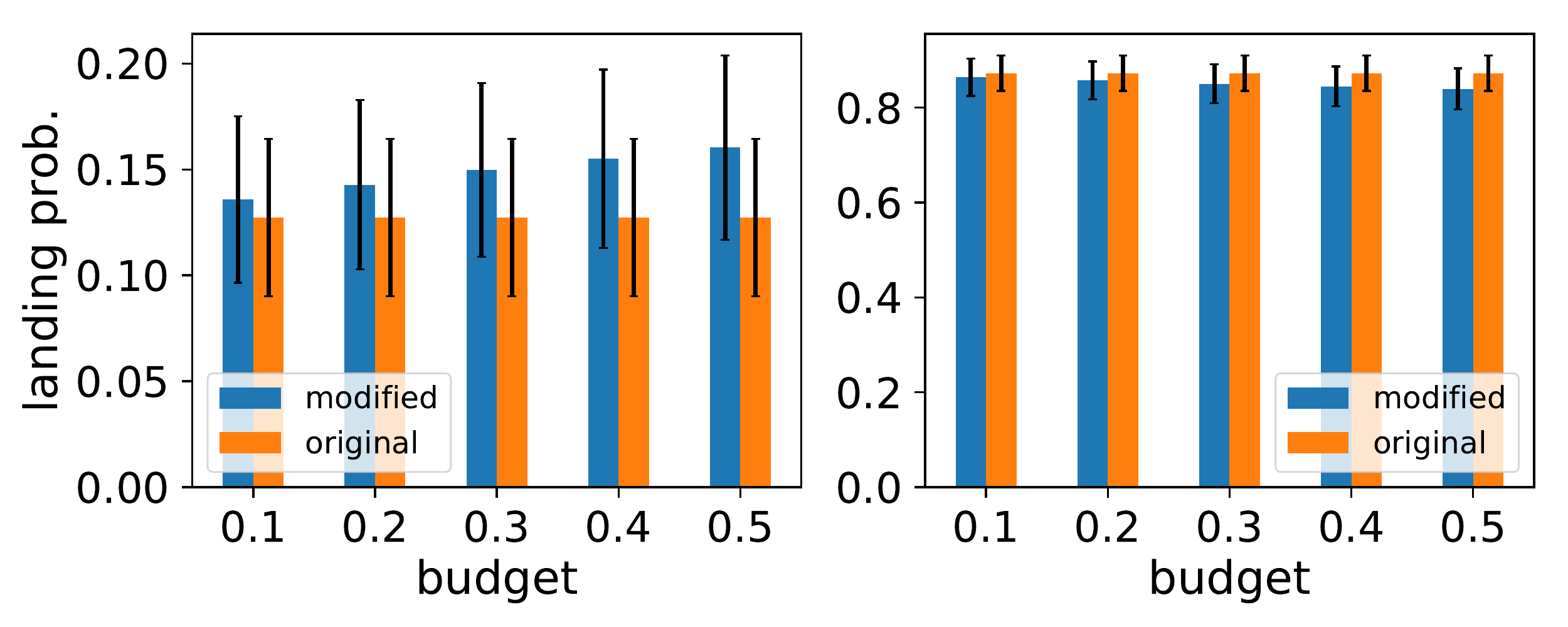} \\
\includegraphics[width=\FigSize]{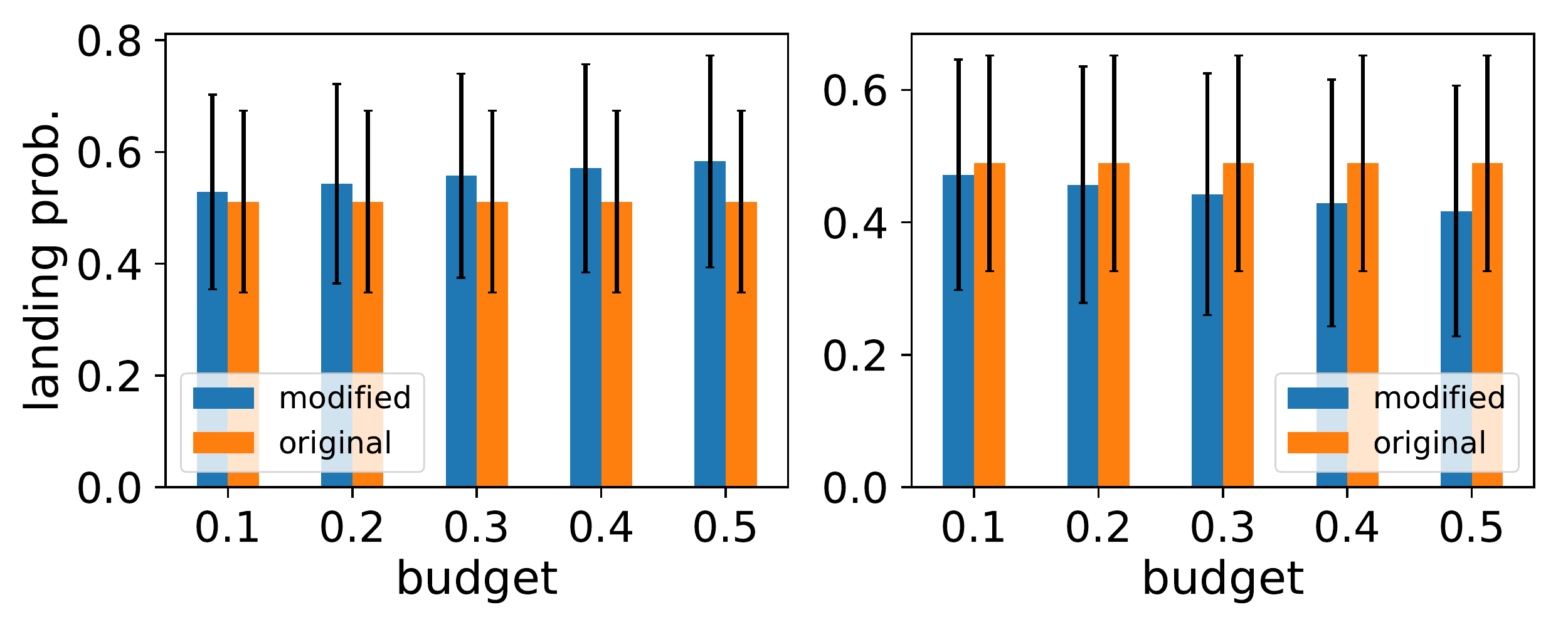}
\end{tabular}
\caption{Landing probability on (\textbf{left}) the targeted subgraph \GS and (\textbf{right}) the non-targeted subgraph \GSPrime. Modifications by \ModelName increase the probabilities of landing in the targeted subgraph \GS (left column) and do not increase the probabilities of landing in the non-targeted subgraph \GSPrime (right column), as desired.
 \textbf{Top}: Email; \textbf{Middle}: Brain; \textbf{Bottom}: Airport.}
\label{fig:RWR}
\end{figure}

\begin{figure}[h]
\def\FigSize{3in}
\centering
\setlength{\tabcolsep}{0.1pt}
\begin{tabular}{c}
\includegraphics[width=\FigSize]{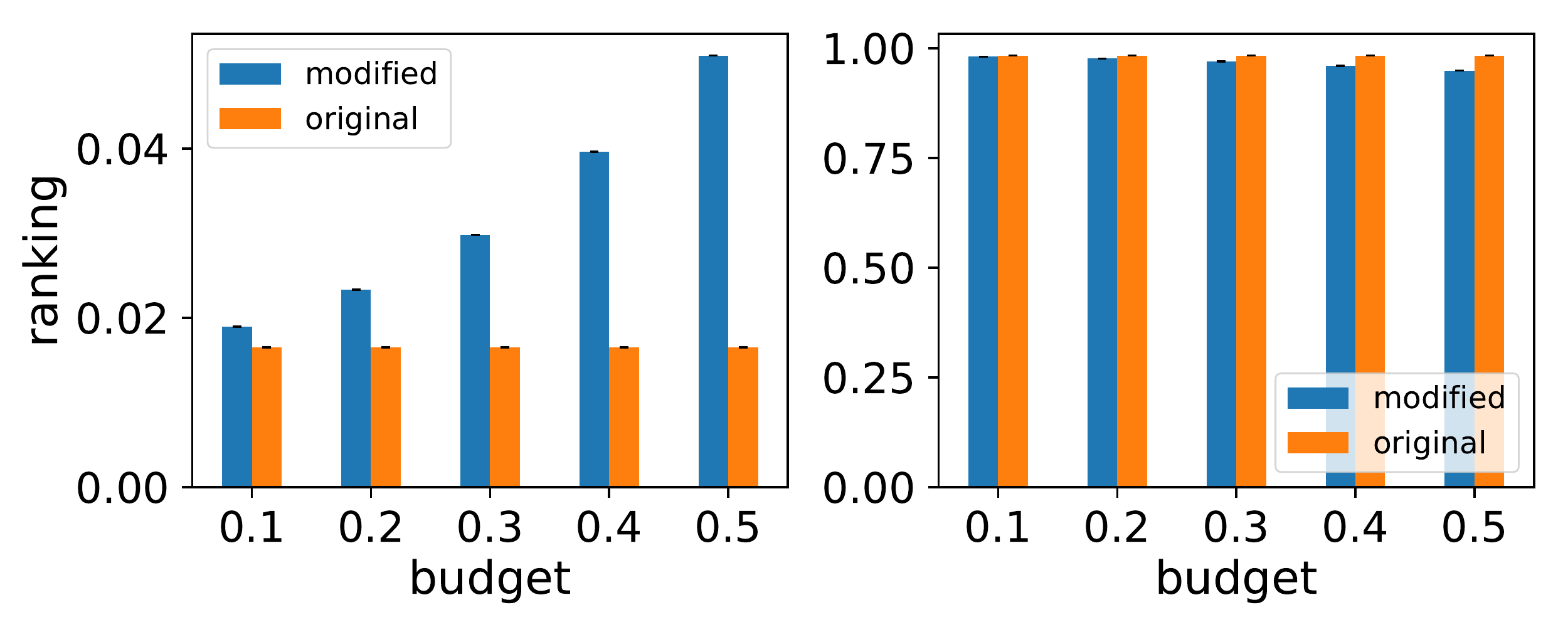} \\
\includegraphics[width=\FigSize]{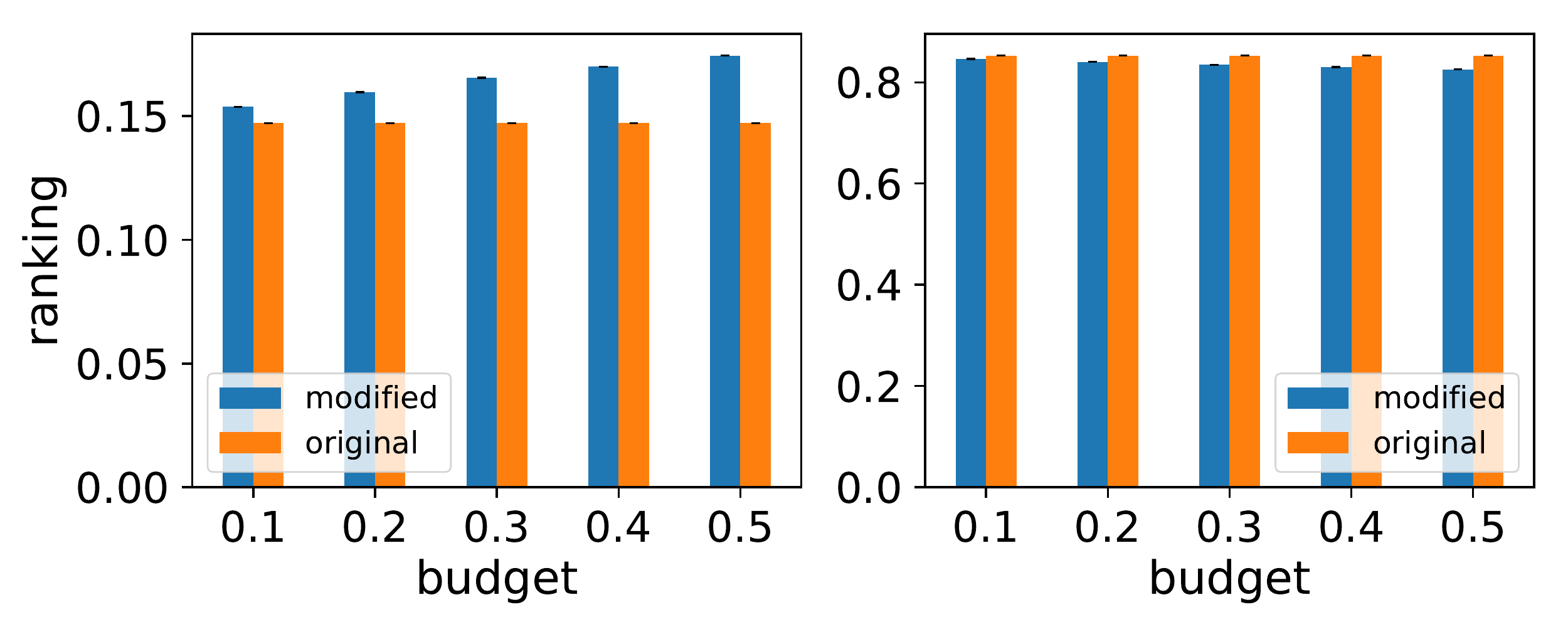} \\
\includegraphics[width=\FigSize]{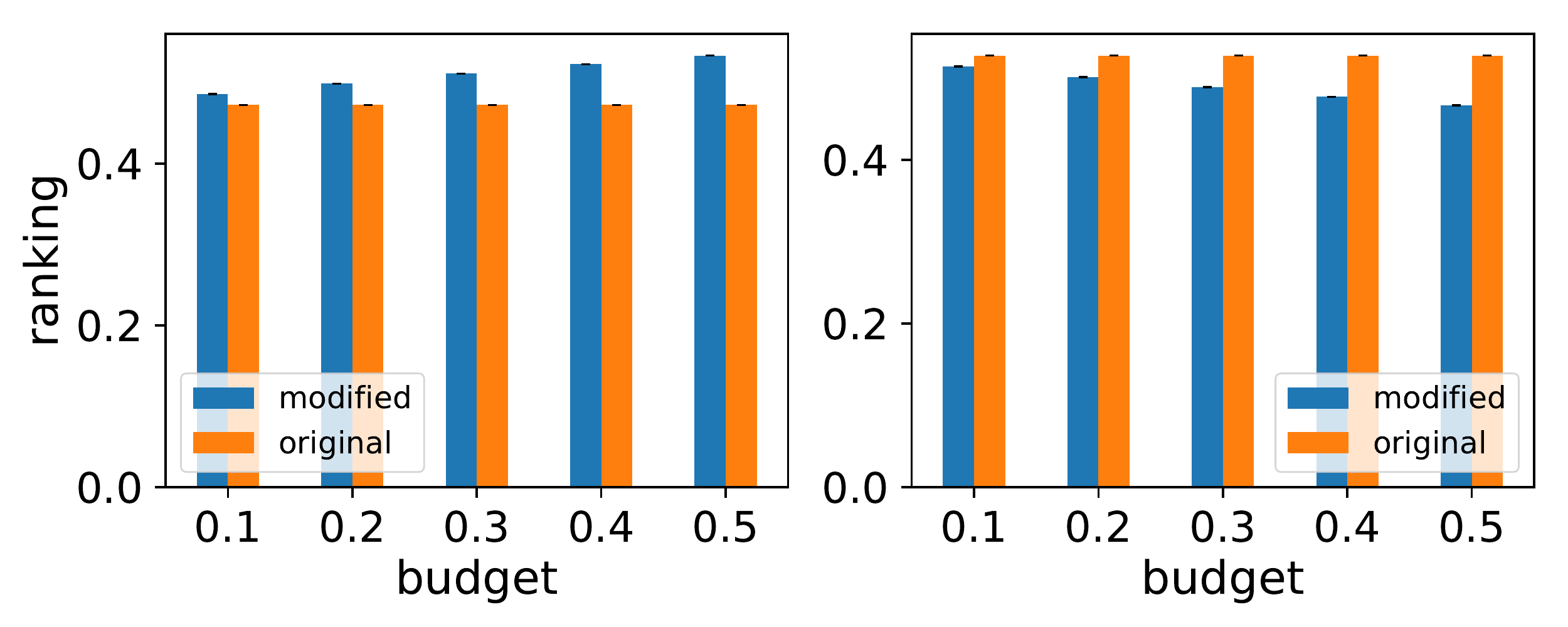}
\end{tabular}
\caption{Ranking of the nodes in (\textbf{left}) the targeted subgraph \GS and (\textbf{right}) the non-targeted subgraph \GSPrime. Modifications by \ModelName increase the probabilities of landing in the targeted subgraph \GS (left column) and do not increase the probabilities of landing in the non-targeted subgraph \GSPrime (rightt column), as desired.
 \textbf{Top}: Email; \textbf{Middle}: Brain; \textbf{Bottom}: Airport.}
\label{fig:PR}
\end{figure}

\begin{figure*}[h]
\def\FigSize{1in}
\centering
\setlength{\tabcolsep}{0.1pt}
\begin{tabular}{cccc}
\includegraphics[width=\FigSize]{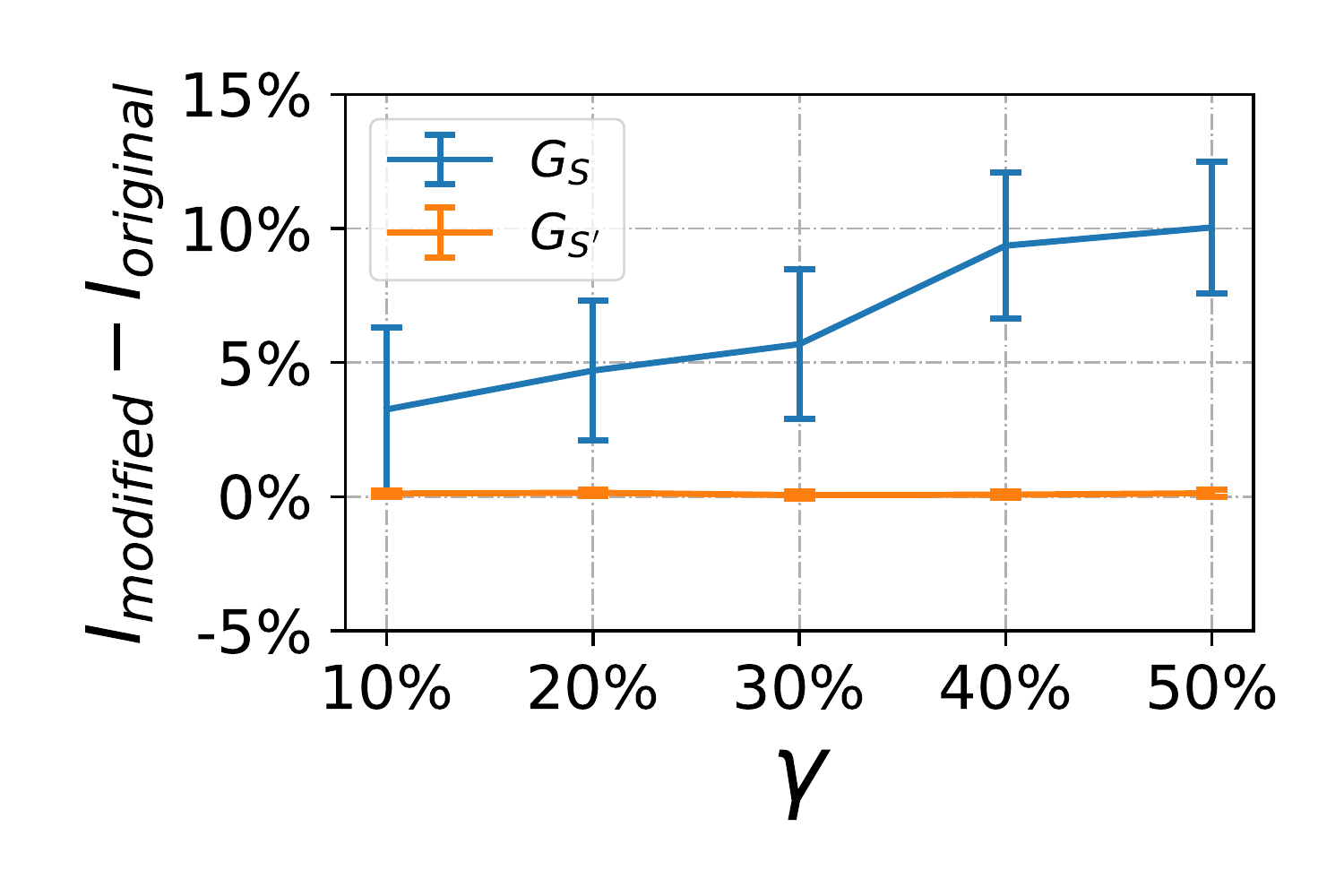} & \includegraphics[width=\FigSize]{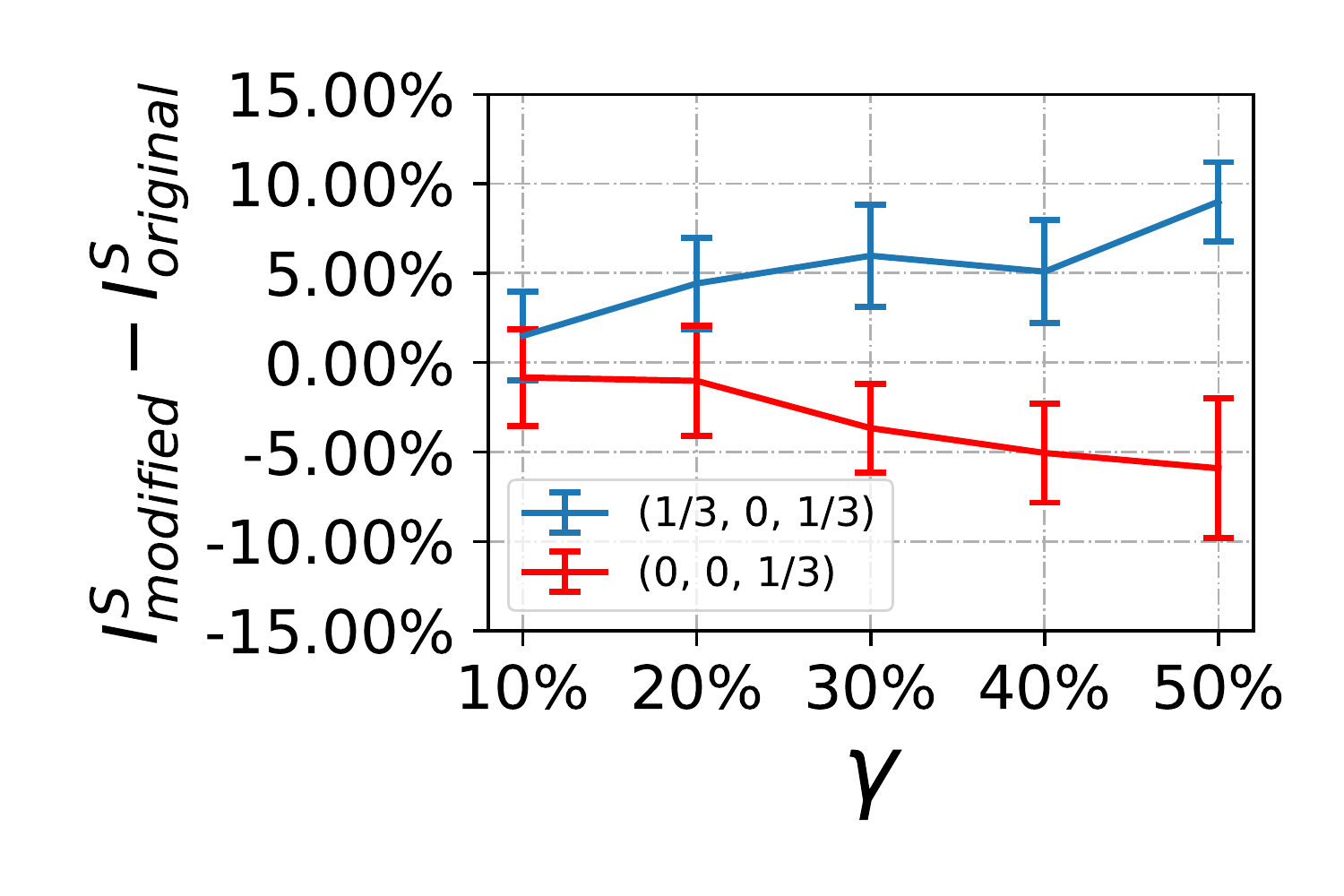} & \includegraphics[width=\FigSize]{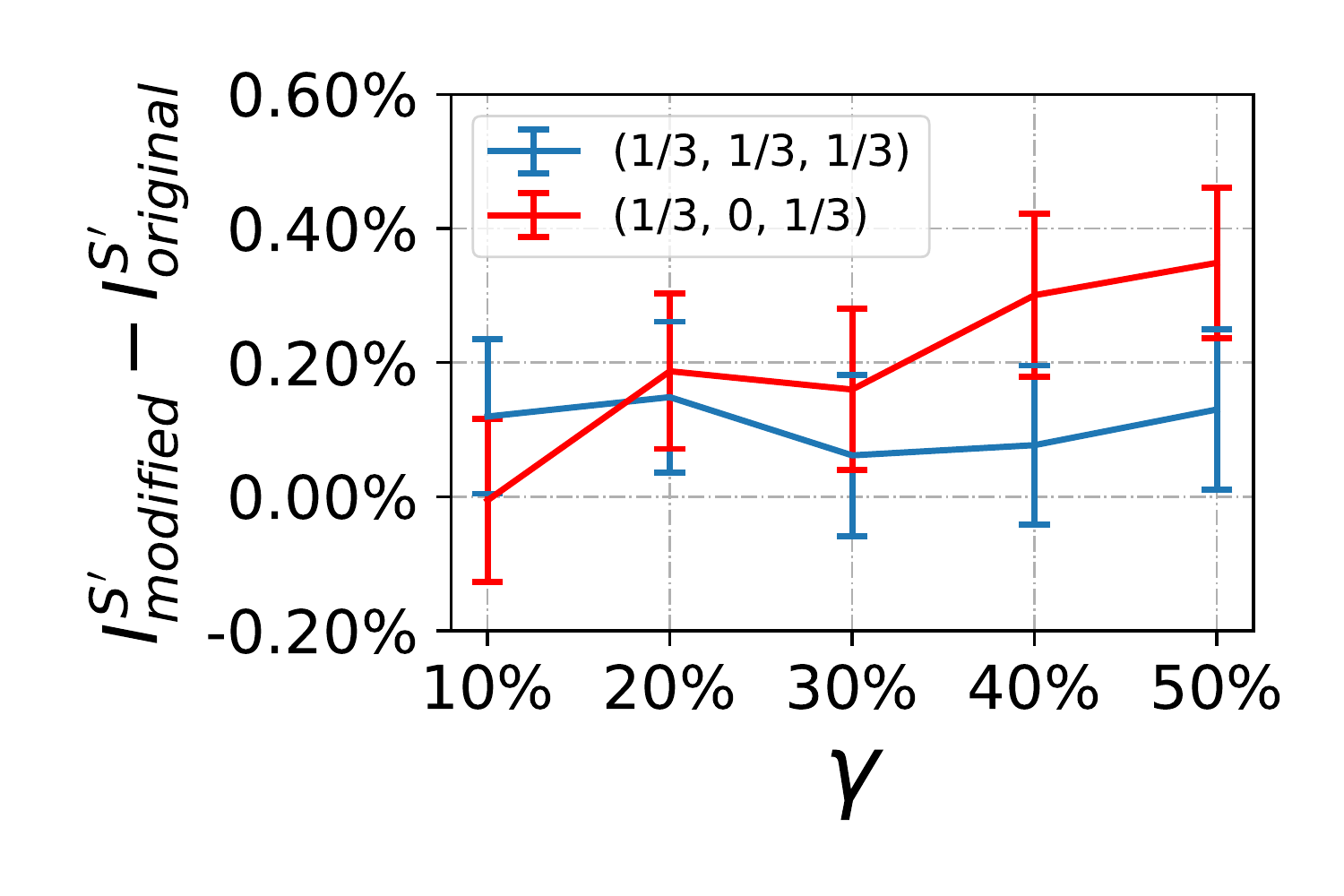} & \includegraphics[width=\FigSize]{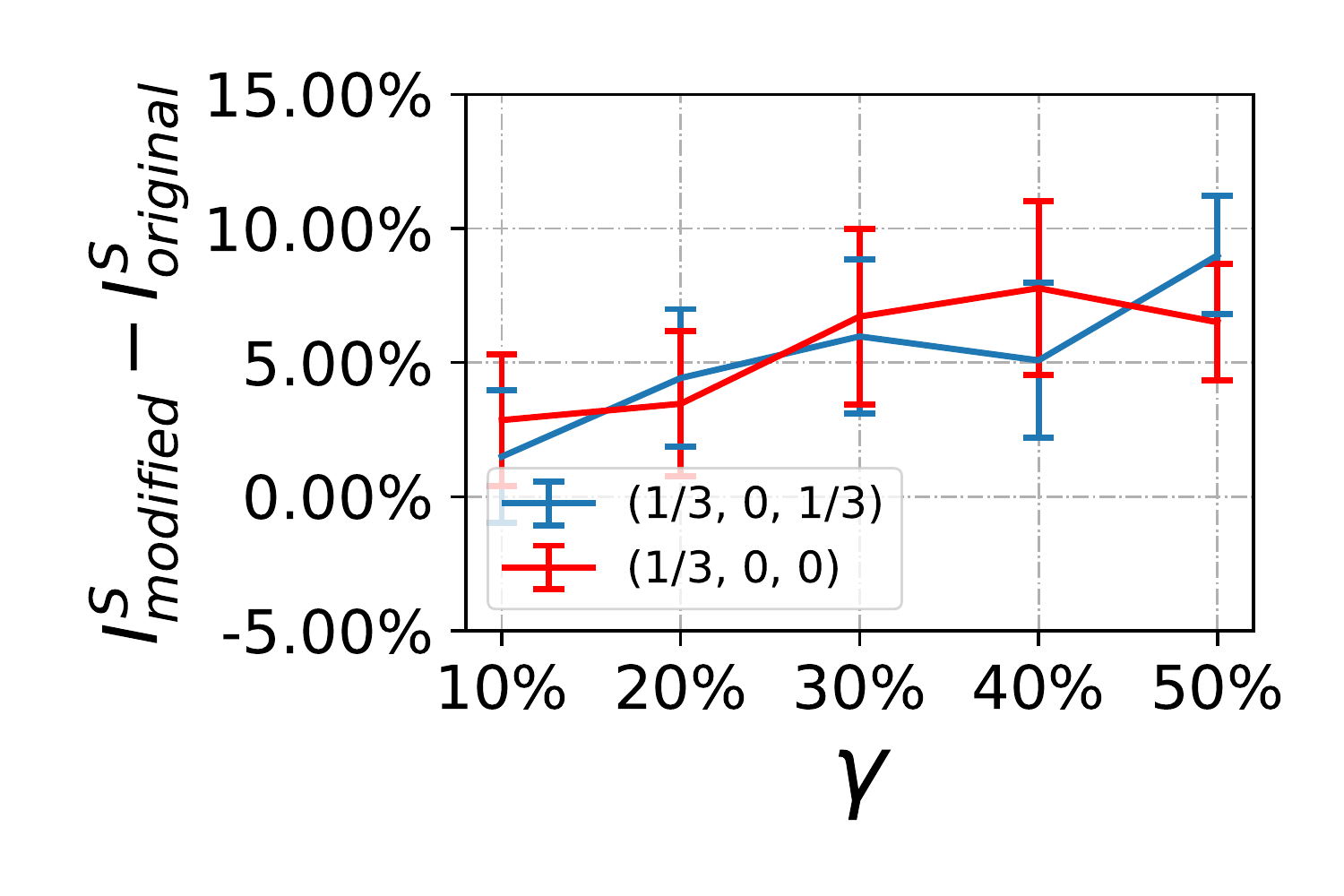} \\
\includegraphics[width=\FigSize]{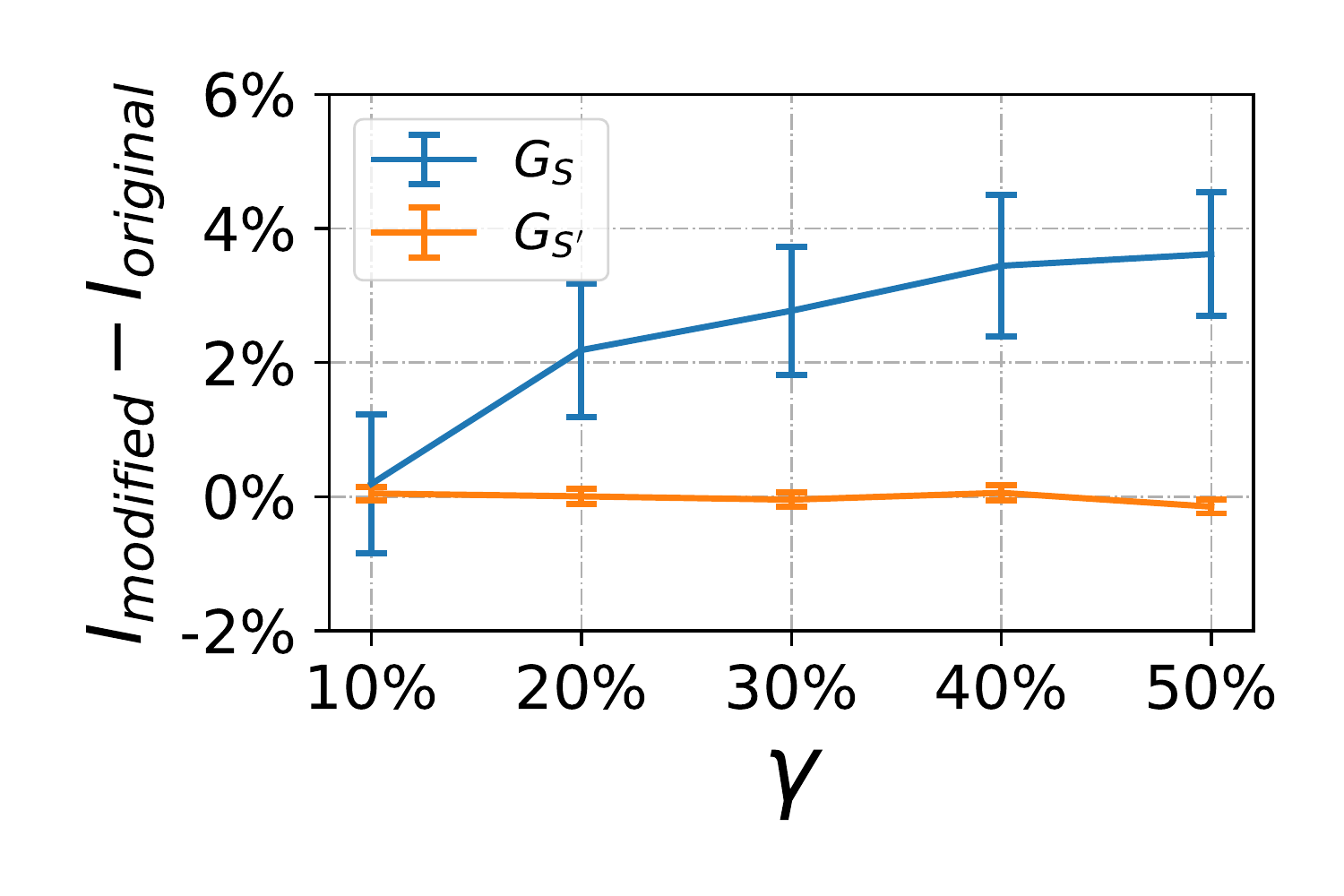} & \includegraphics[width=\FigSize]{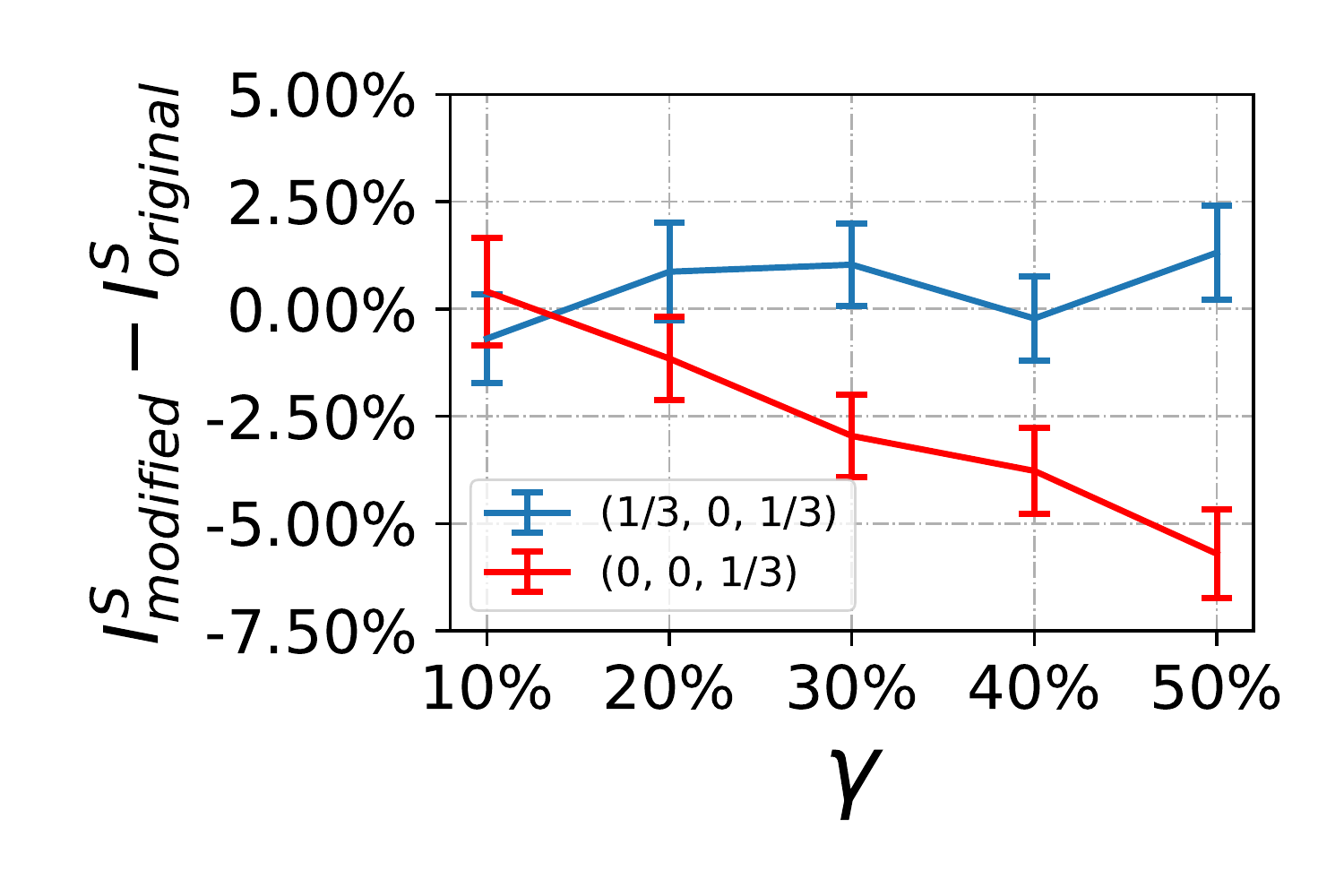} & \includegraphics[width=\FigSize]{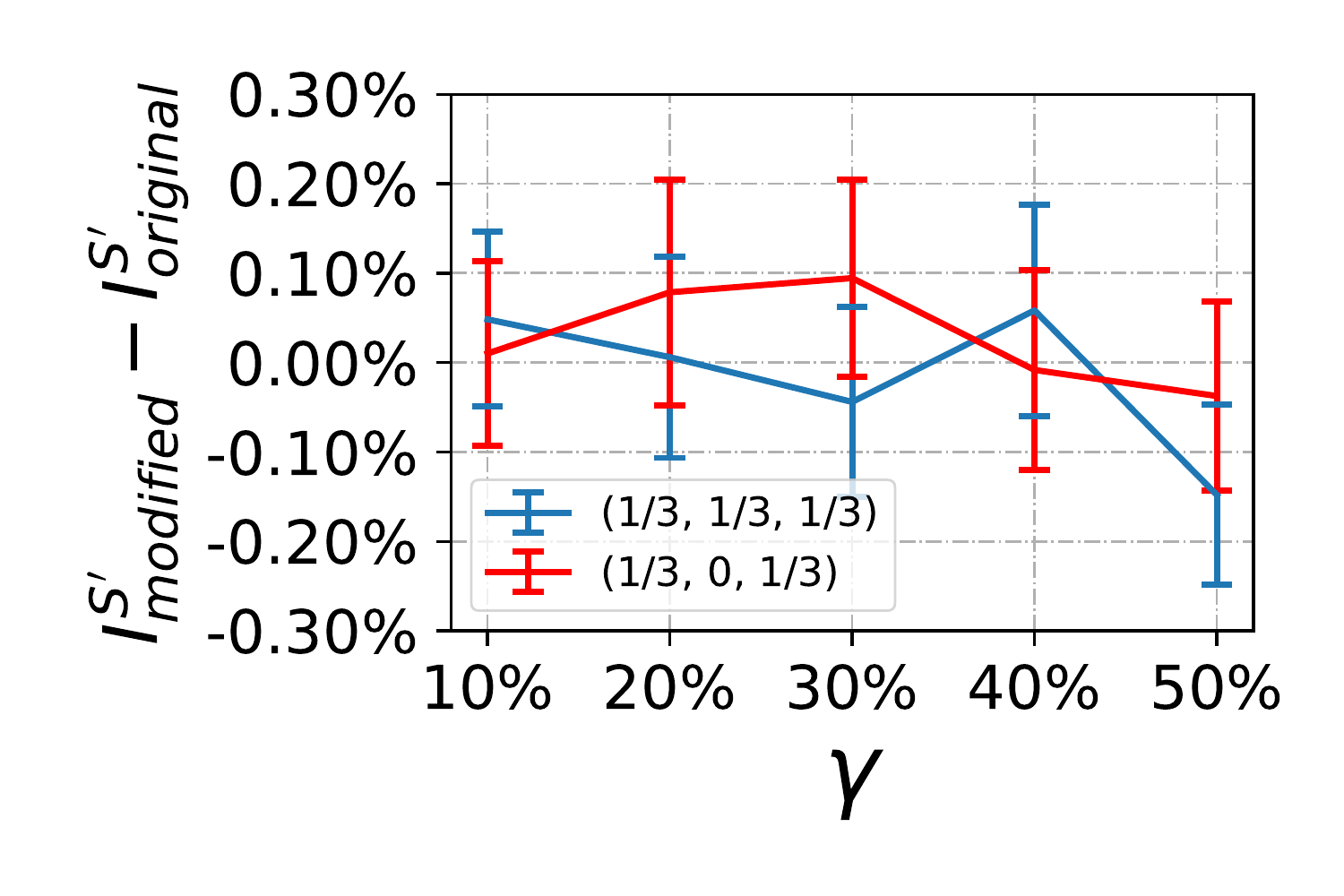} & \includegraphics[width=\FigSize]{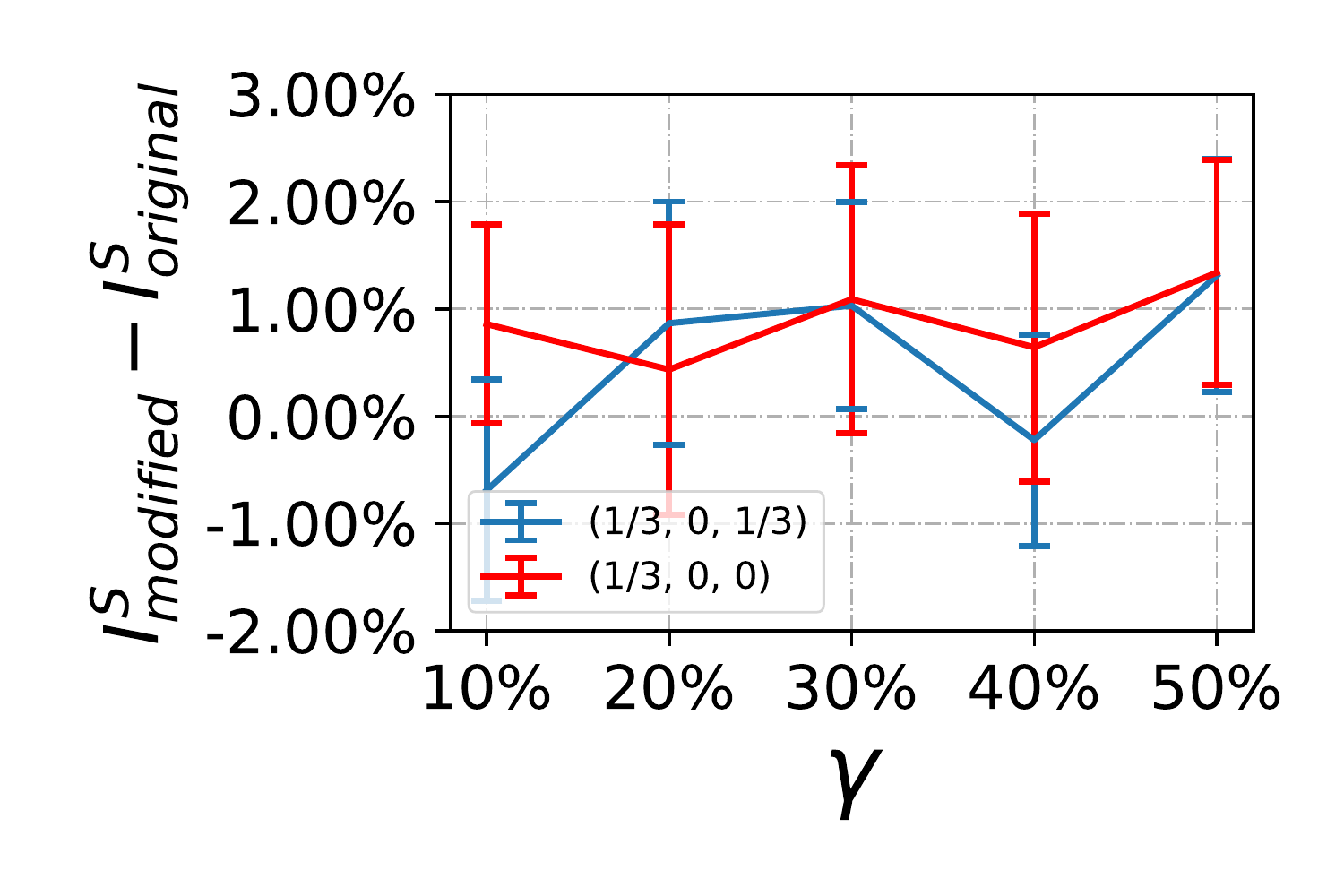} \\
\end{tabular}
\caption{Experimental results for different $\delta$ and $\beta$ values on the airport network. \textbf{Top}: $\delta=0.5, \beta=0.1$; \textbf{Bottom}: $\delta=0.3, \beta=0.5$.}
\label{fig:airport_diff_threshold}
\end{figure*}

\begin{figure*}[h]
\def\FigSize{1in}
\centering
\setlength{\tabcolsep}{0.1pt}
\begin{tabular}{cccc}
\includegraphics[width=\FigSize]{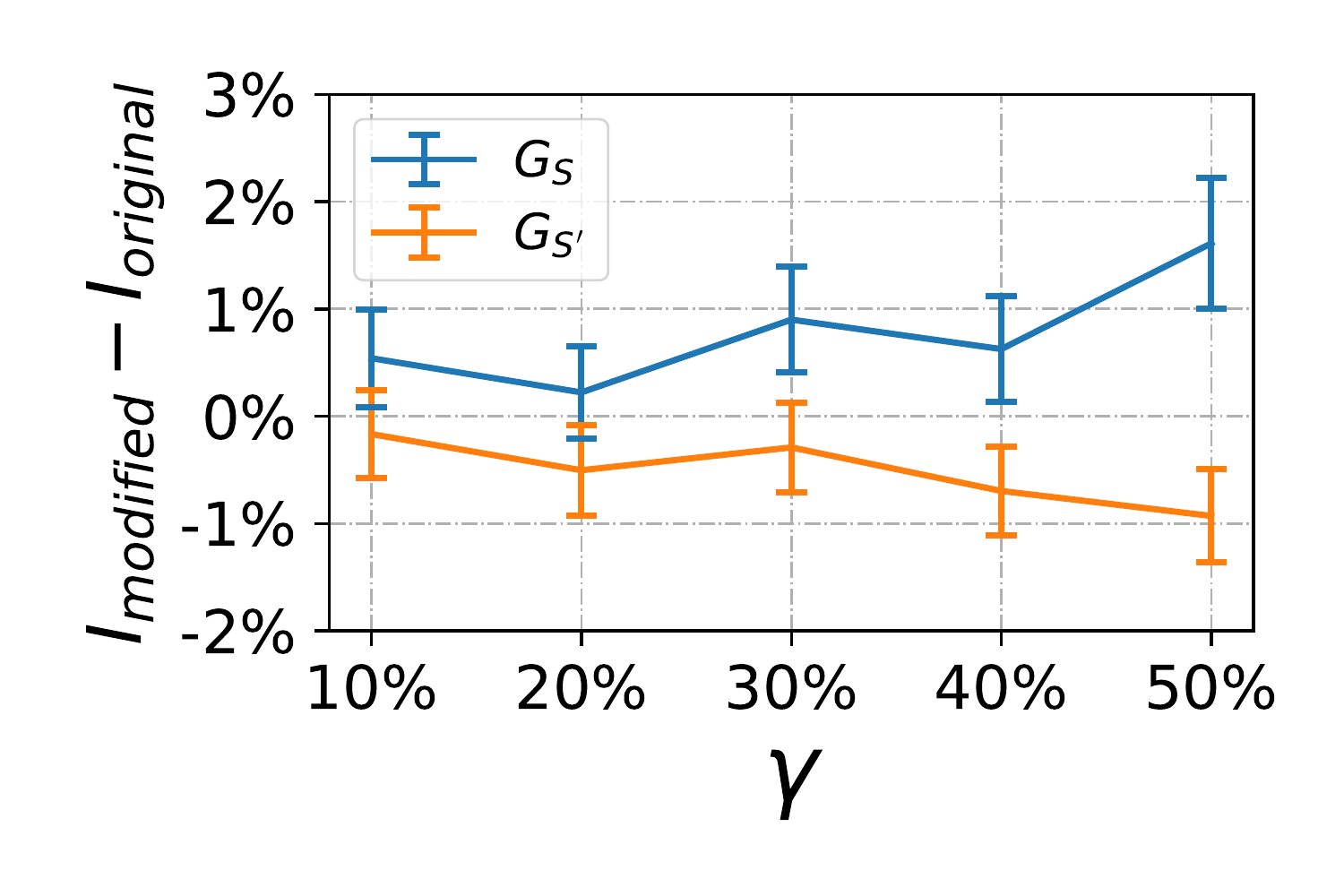} & \includegraphics[width=\FigSize]{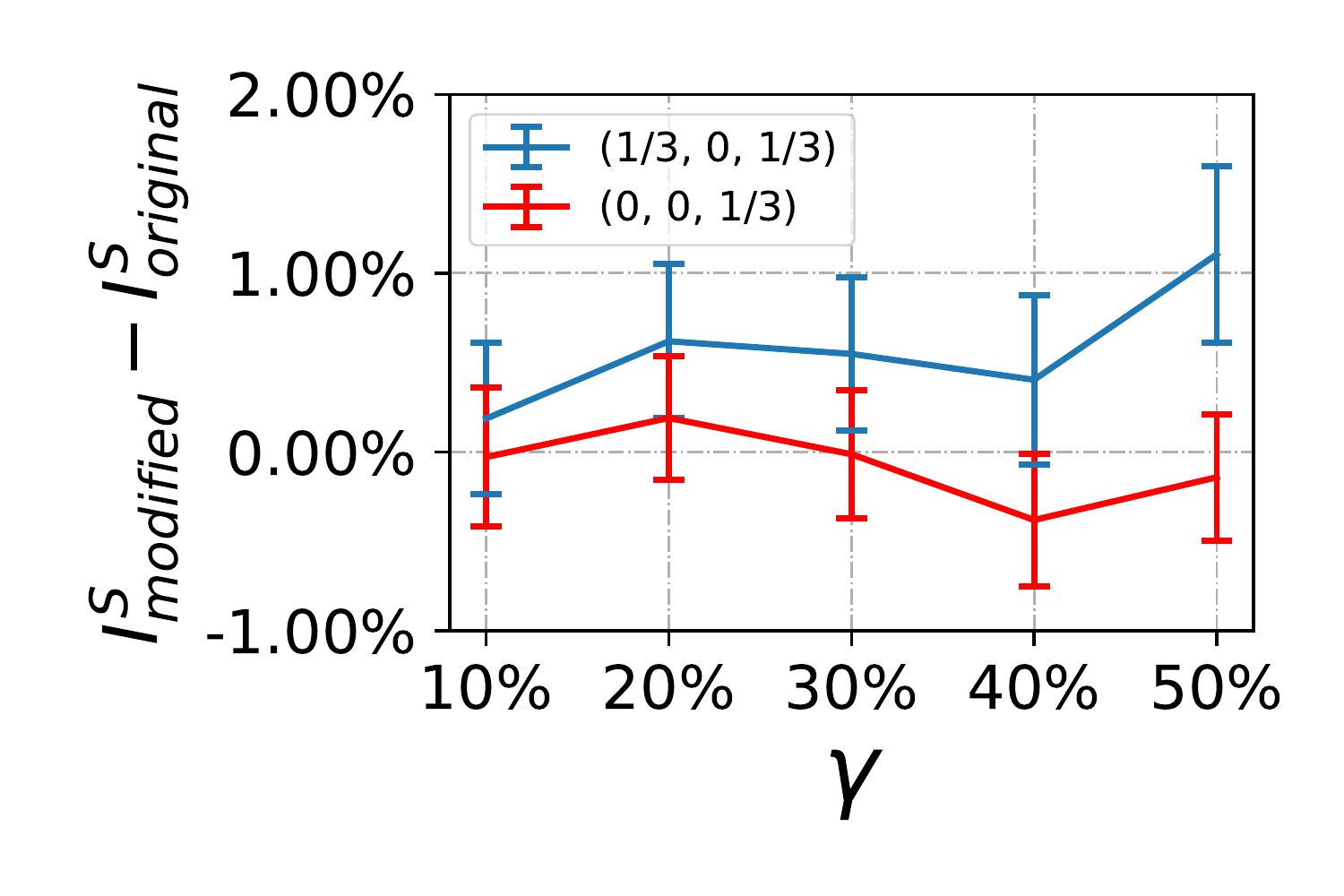} & \includegraphics[width=\FigSize]{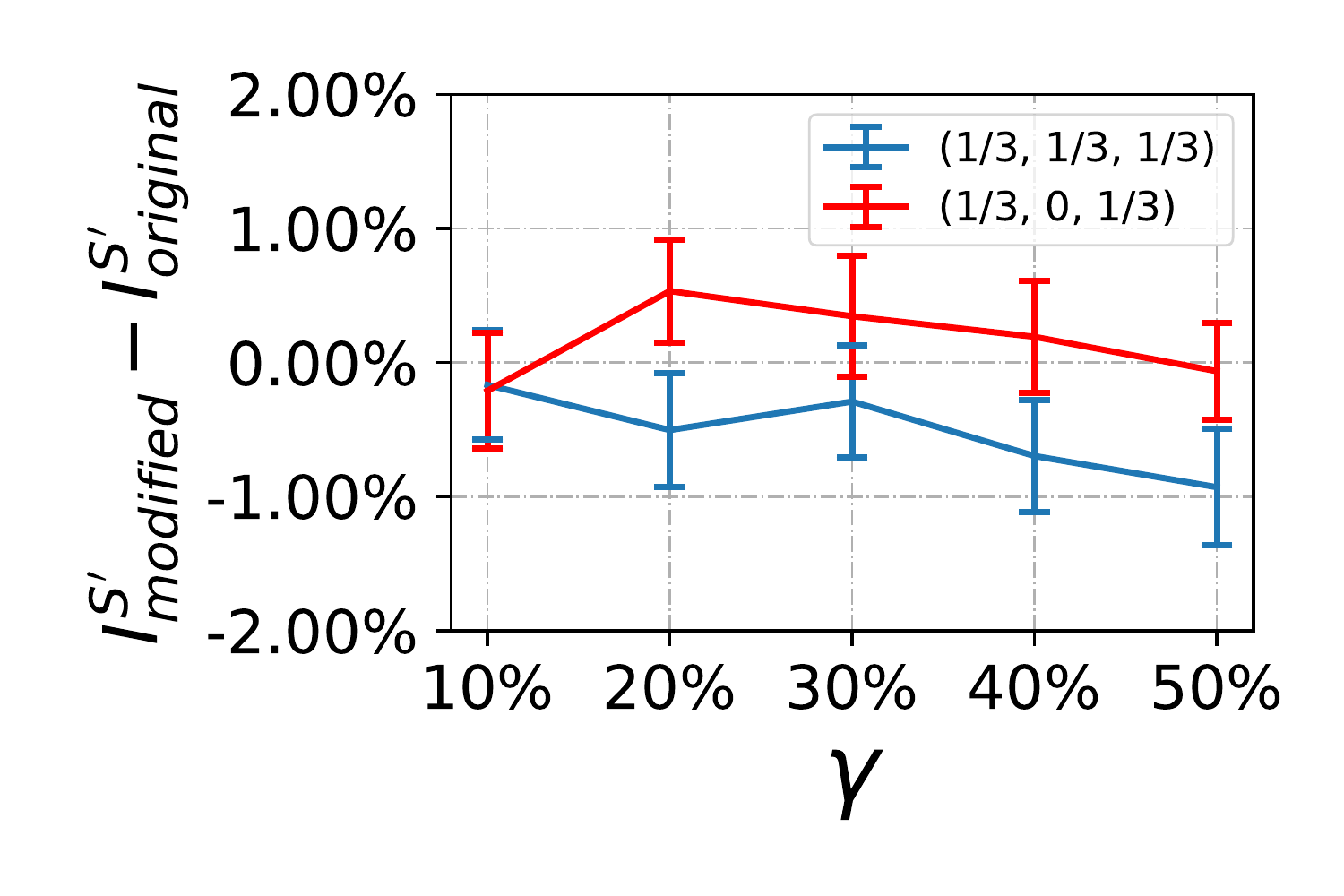} & \includegraphics[width=\FigSize]{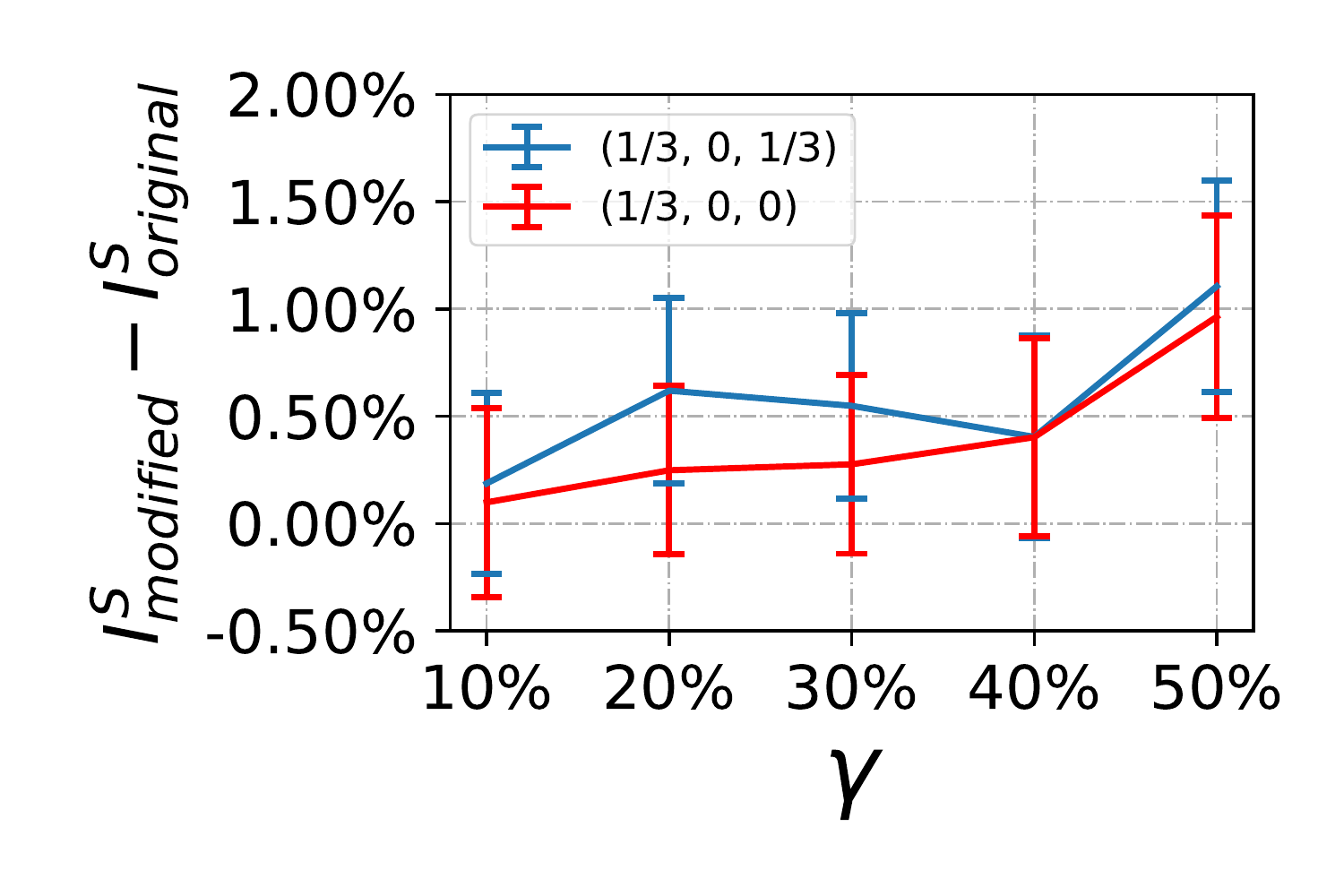} \\
\includegraphics[width=\FigSize]{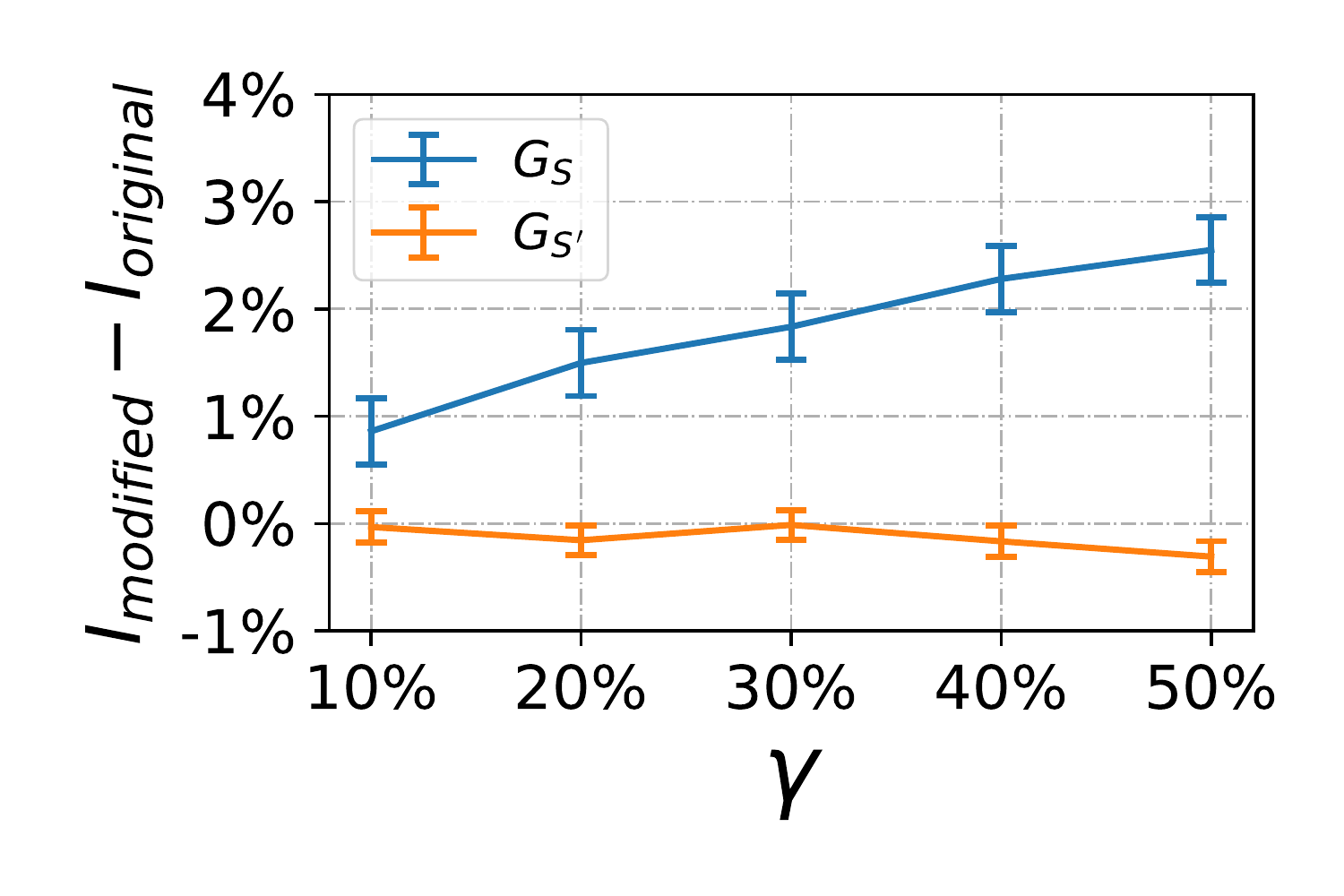} & \includegraphics[width=\FigSize]{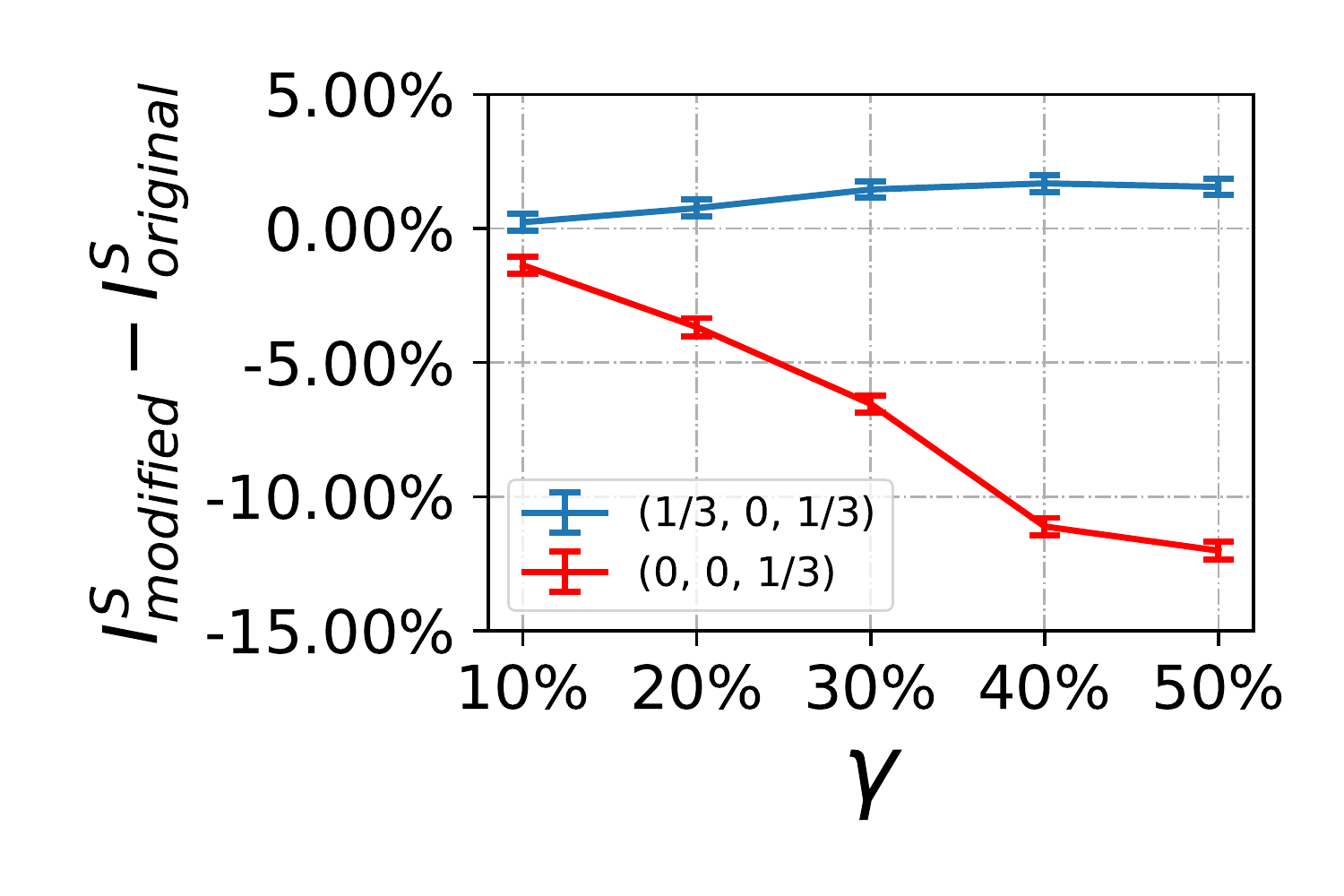} & \includegraphics[width=\FigSize]{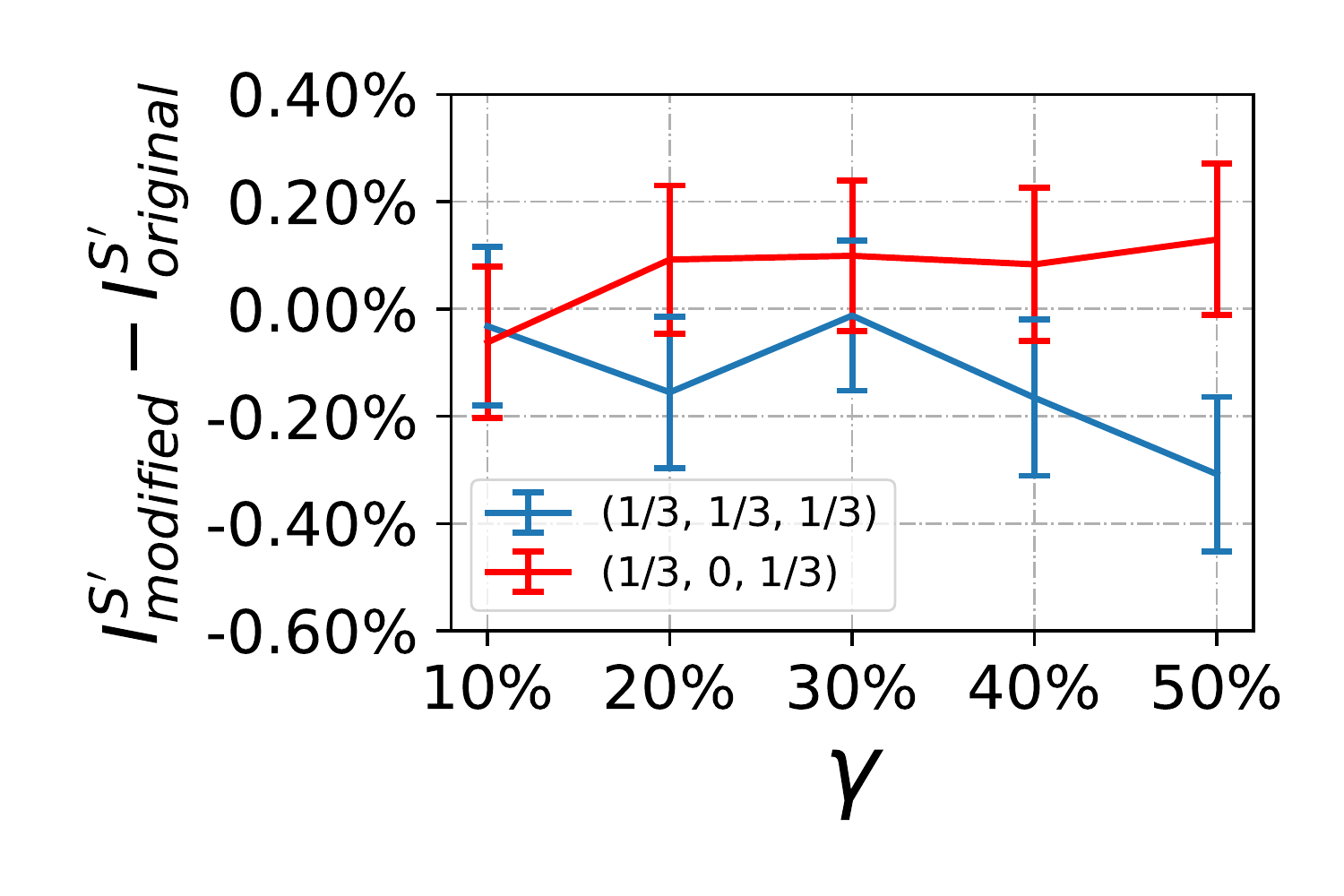} & \includegraphics[width=\FigSize]{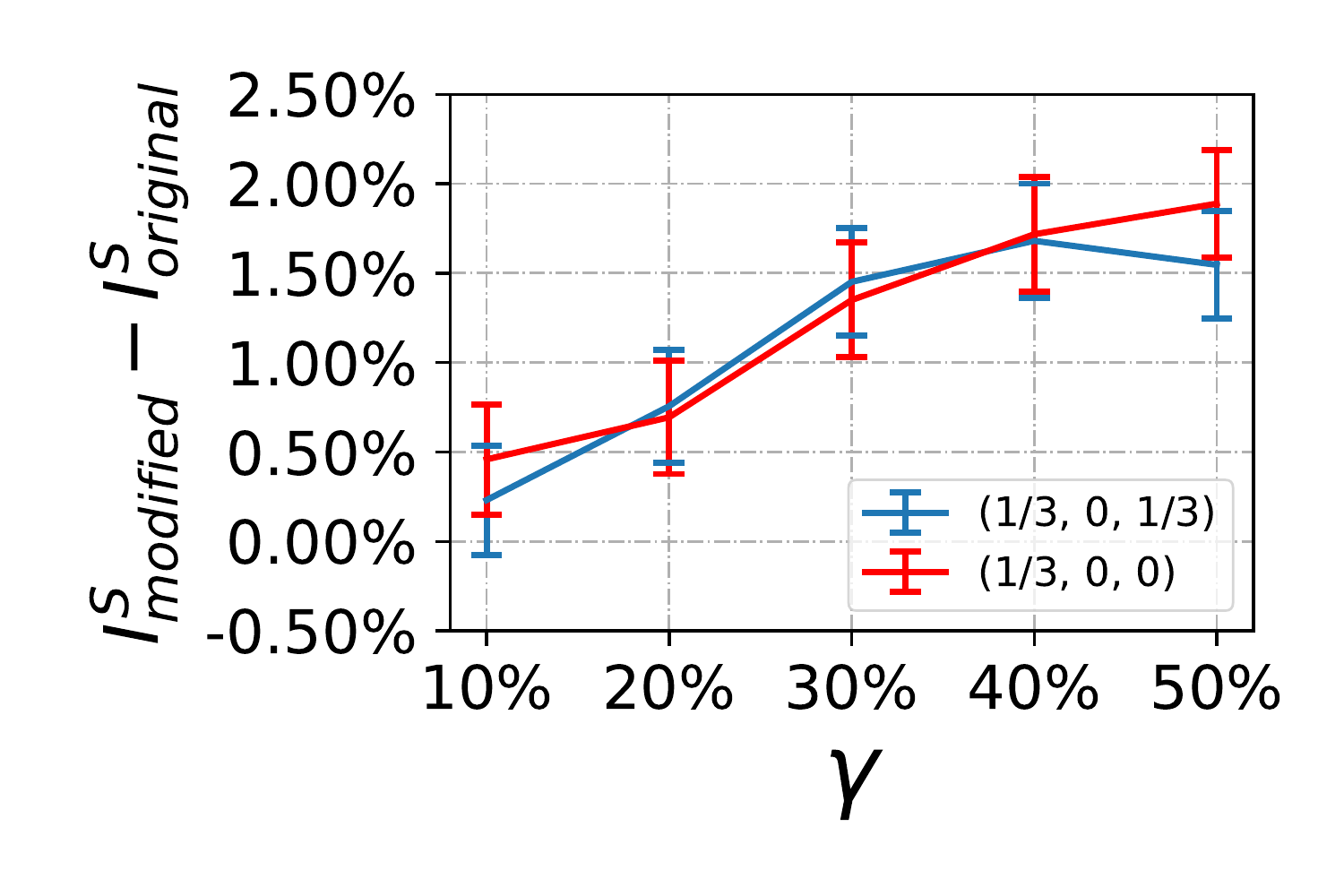} \\
\end{tabular}
\caption{Experimental results for different $\delta$ and $\beta$ values on the brain network. \textbf{Top}: $\delta=0.5, \beta=0.1$; \textbf{Bottom}: $\delta=0.3, \beta=0.5$.}
\label{fig:brain_diff_threshold}
\end{figure*}

\begin{figure*}[h]
\def\FigSize{1in}
\centering
\setlength{\tabcolsep}{0.1pt}
\begin{tabular}{cccc}
\includegraphics[width=\FigSize]{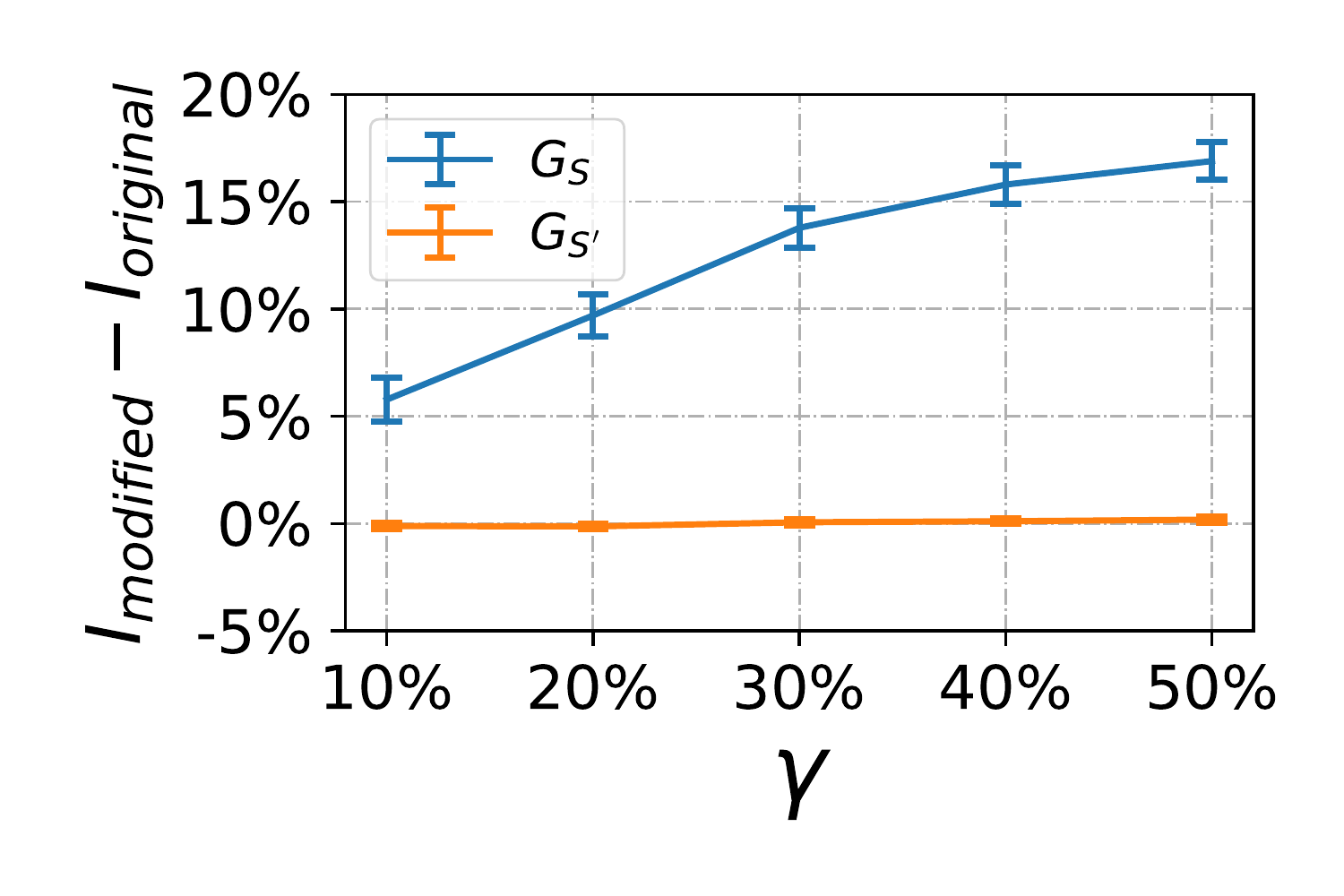} & \includegraphics[width=\FigSize]{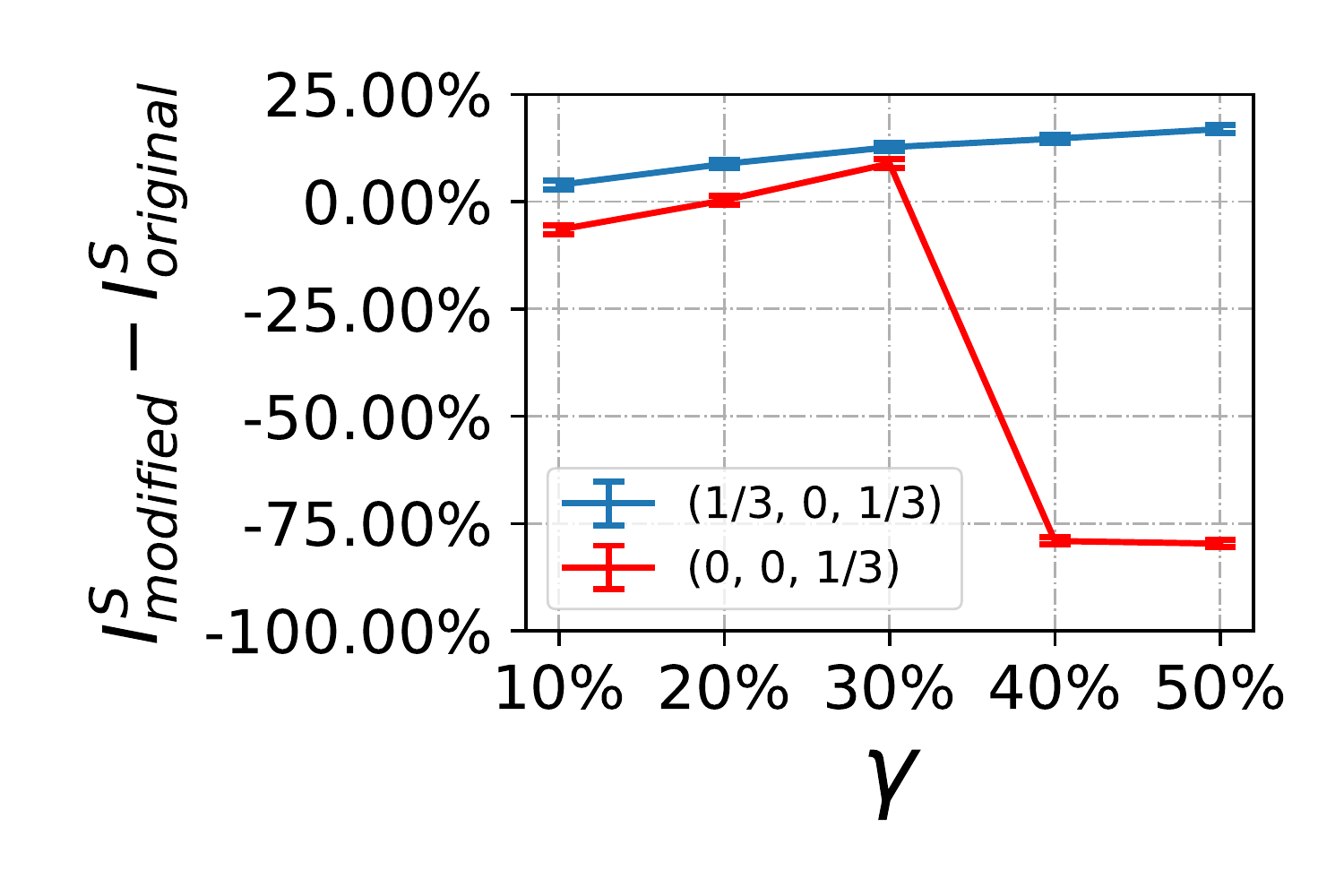} & \includegraphics[width=\FigSize]{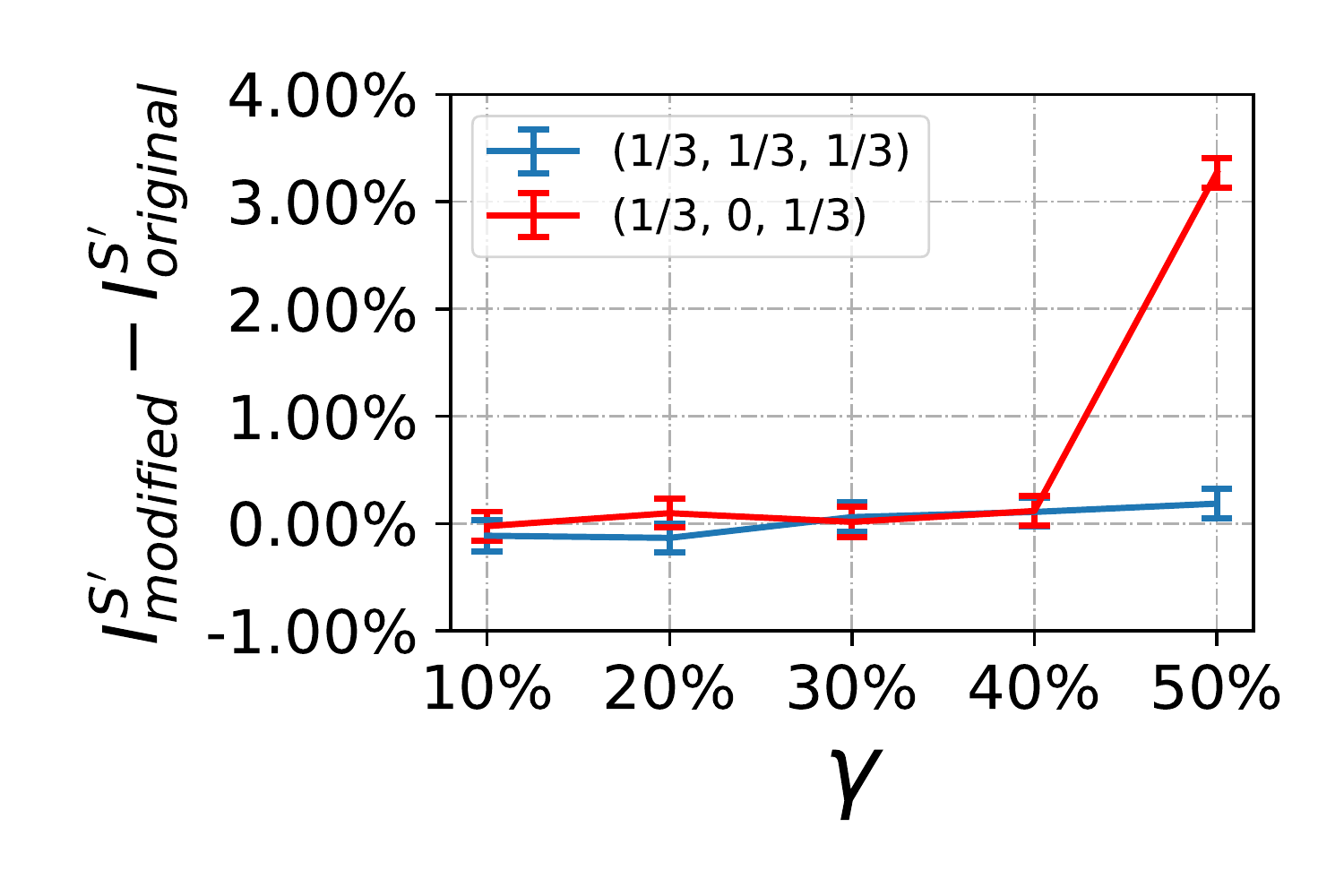} & \includegraphics[width=\FigSize]{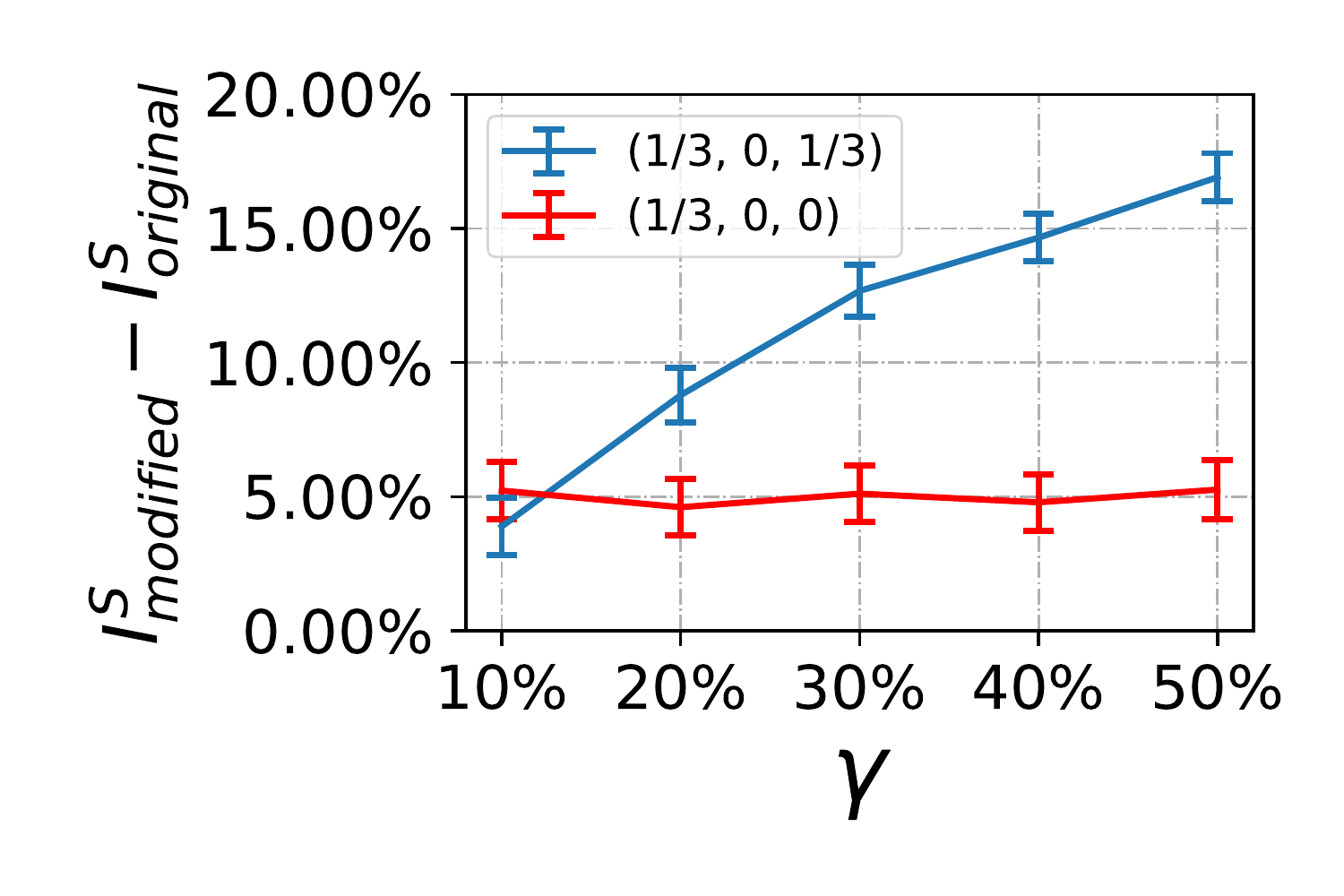} \\
\includegraphics[width=\FigSize]{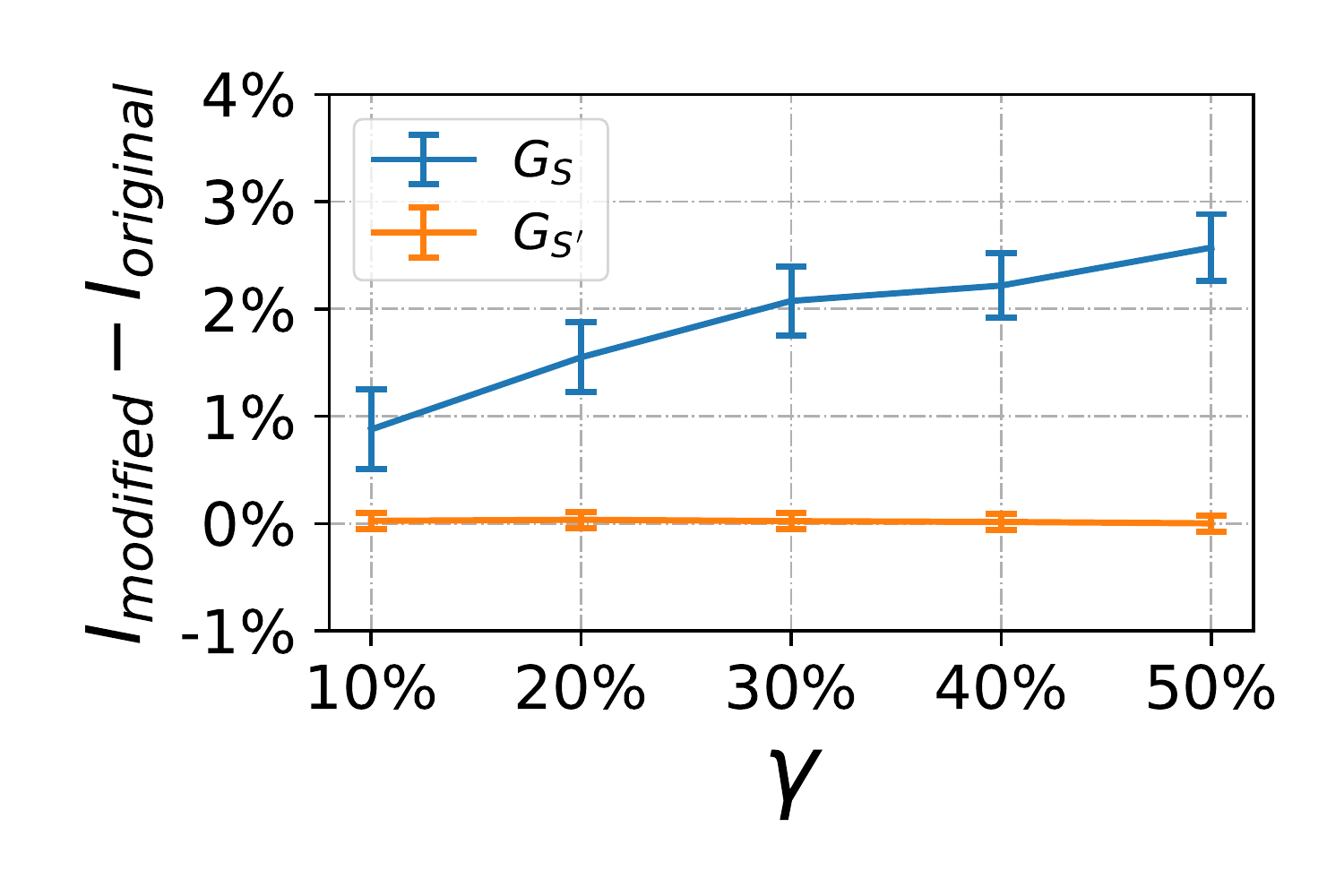} & \includegraphics[width=\FigSize]{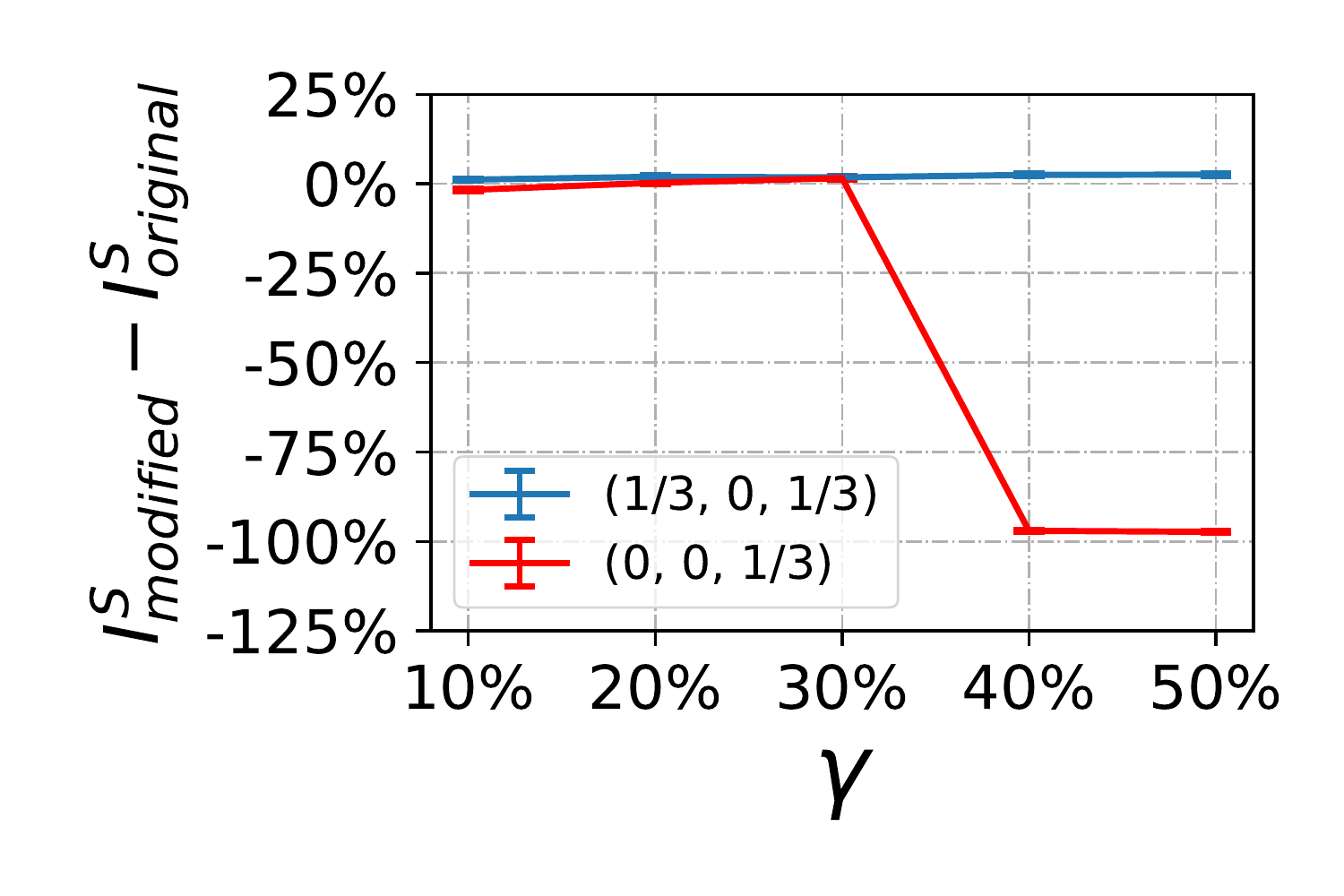} & \includegraphics[width=\FigSize]{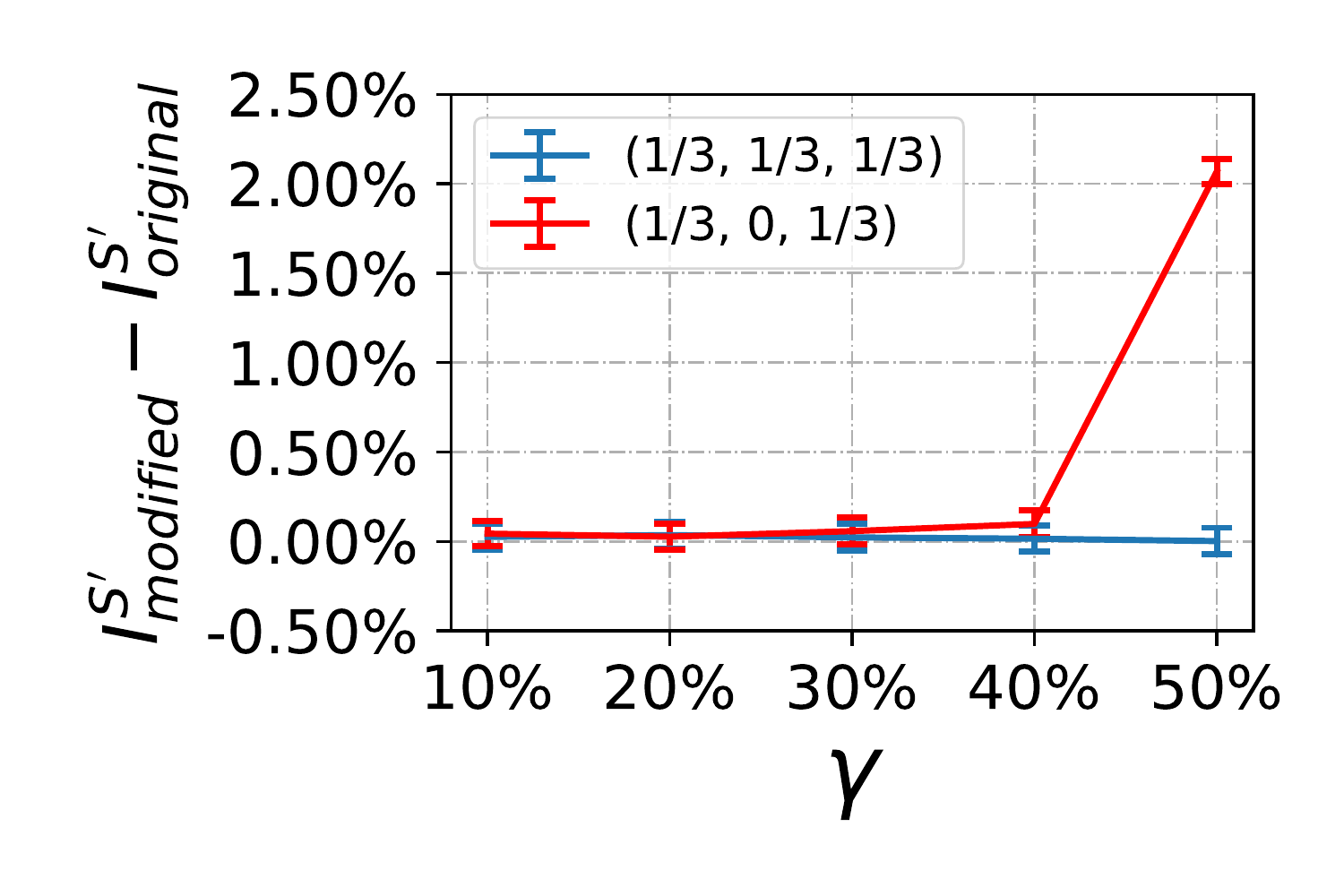} & \includegraphics[width=\FigSize]{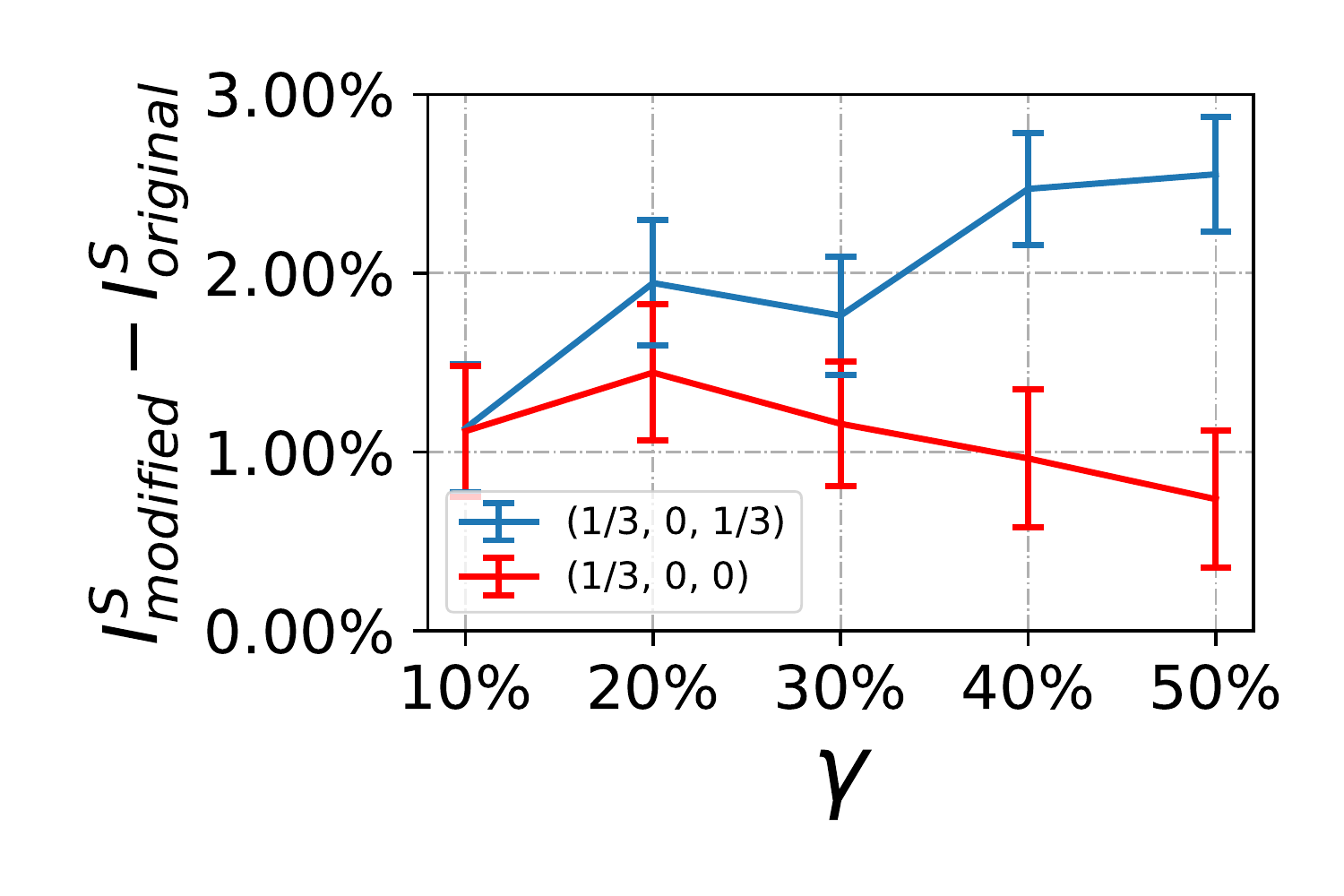} \\
\end{tabular}
\caption{Experimental results for different $\delta$ and $\beta$ values on the email network. \textbf{Top}: $\delta=0.5, \beta=0.1$; \textbf{Bottom}: $\delta=0.3, \beta=0.5$.}
\label{fig:email_diff_threshold}
\end{figure*}

\end{document}